\definecolor{beamer@blendedblue}{rgb}{0.2,0.2,0.7}
\newtheorem{definition}{Definition}
\newtheorem{proposition}[definition]{Proposition}
\newtheorem{lemma}[definition]{Lemma}
\newtheorem{theorem}[definition]{Theorem}
\mathchardef\ordinarycolon\mathcode`\:
\def\vcentcolon{\mathrel{\mathop\ordinarycolon}}
\DeclareFontFamily{U}{mathx}{\hyphenchar\font45}
\DeclareFontShape{U}{mathx}{m}{n}{<-> mathx10}{}
\DeclareSymbolFont{mathx}{U}{mathx}{m}{n}
\DeclareMathAccent{\widebar}{0}{mathx}{"73}
\newcommand{\wt}[1]{\widetilde{#1}}
\newcommand{\wh}[1]{\widehat{#1}}
\newcommand{\ket}[1]{\left\vert{#1}\right\rangle}
\newcommand{\bra}[1]{\left\langle{#1}\right\vert}
\newcommand{\ketbra}[2]{\vert{#1}\rangle\!\langle{#2}\vert}
\newcommand\proj[1]{\vert{#1}\rangle\!\langle{#1}\vert}
\newcommand{\linear}[1]{\mathscr{L}(#1)}
\newcommand{\pos}[1]{\mathscr{P}(#1)}
\newcommand{\density}[1]{\mathscr{D}(#1)}
\newcommand{\opn}[1]{\operatorname{#1}}
\DeclareMathOperator{\tr}{Tr}  
\newcommand{\1}{\mathbbm{1}}
\newsavebox{\@brx}
\newcommand{\llangle}[1][]{\savebox{\@brx}{\(\m@th{#1\langle}\)}%
  \mathopen{\copy\@brx\kern-0.5\wd\@brx\usebox{\@brx}}}
\newcommand{\rrangle}[1][]{\savebox{\@brx}{\(\m@th{#1\rangle}\)}%
  \mathclose{\copy\@brx\kern-0.5\wd\@brx\usebox{\@brx}}}
\newcommand*{\cH}{\mathcal{H}}
\newcommand*{\cL}{\mathcal{L}}
\newcommand*{\cM}{\mathcal{M}}
\newcommand*{\cS}{\mathcal{S}}
\newcommand*{\cT}{\mathcal{T}}
\newcommand*{\cU}{\mathcal{U}}
\begin{document}

\title{{Detecting and Eliminating Quantum Noise of Quantum Measurements}}

\author{Shuanghong Tang}
\affiliation{Institute for Quantum Computing, Baidu Research, Beijing 100193, China}

\author{Congcong Zheng}
\affiliation{Institute for Quantum Computing, Baidu Research, Beijing 100193, China}

\author{Kun Wang}
\email{wangkun28@baidu.com}
\affiliation{Institute for Quantum Computing, Baidu Research, Beijing 100193, China}

\begin{abstract}
Quantum measurements are crucial for extracting information from quantum systems, 
but they are error-prone due to hardware imperfections in near-term devices. 
Measurement errors can be mitigated through classical post-processing, based on the assumption of a classical noise model. 
However, the coherence of quantum measurements leads to unavoidable quantum noise that defies this assumption.
In this work, we introduce a two-stage procedure to systematically tackle such quantum noise in measurements. 
The idea is intuitive: we first detect and then eliminate quantum noise. 
In the first stage, inspired by coherence witness in the resource theory of quantum coherence, we design an efficient method to detect quantum noise.
It works by fitting the difference between two measurement statistics to the Fourier series,
where the statistics are obtained using maximally coherent states with relative phase
and maximally mixed states as inputs. 
The fitting coefficients quantitatively benchmark quantum noise.
In the second stage, we design various methods to eliminate quantum noise,
inspired by the Pauli twirling technique.
They work by executing randomly sampled Pauli gates before the measurement device and conditionally
flipping the measurement outcomes in such a way that the \textit{effective} measurement device contains only classical noise.
We numerically demonstrate the two-stage procedure's feasibility on the Baidu Quantum Platform. 
Notably, the results reveal significant suppression of quantum noise in measurement devices and substantial enhancement in quantum computation accuracy.
We highlight that the two-stage procedure complements
existing measurement error mitigation techniques,
and they together form a standard toolbox for manipulating measurement errors in near-term quantum devices. 
\end{abstract}

\maketitle

\section{Introduction}
Quantum computers offer significant potential across scientific and industrial domains. 
However, the current noisy intermediate-scale quantum (NISQ) computers~\cite{preskill2018quantum} introduce notable errors, 
necessitating their mitigation prior to engaging in practically valuable endeavors.
These errors emerge from either undesired qubit-environment interactions or 
the physical imperfections inherent in qubit initializations, quantum gates, and measurements~\cite{kandala2017hardware,arute2019quantum,google2020hartree,chen2021exponential}. 
Typically, errors in a quantum computer are categorized into quantum gate errors and measurement errors.
For the former, various quantum error mitigation techniques
have been proposed to mitigate the damages caused by errors
on near-term quantum devices
~\cite{temme2017error,endo2018practical,li2017efficient,mcclean2017hybrid,McClean2020Decoding,mcardle2019error,bonet2018low,he2020resource,giurgica2020digital,kandala2019error,endo2021hybrid,sun2021mitigating,czarnik2021error,takagi2021optimal,jiang2021physical,wang2023mitigating}.
For the latter, most experiments work with the assumption that measurement errors in quantum devices
are well understood in terms of \emph{classical noise}
models~\cite{chow2012universal,geller2020rigorous,geller2021conditionally}.
Specifically, an $n$-qubit noisy measurement device
can be characterized by a noise matrix of size $2^n\times 2^n$.
The element in the $\bm{x}$-th row and $\bm{y}$-th column of the matrix is
the probability of obtaining an outcome $\bm{x}$ provided that the true outcome is $\bm{y}$.
If one has access to this stochastic matrix, it is straightforward to classically reverse the noise effects
simply by multiplying the probability vector obtained from measurement statistics
by this matrix's inversion or its approximations, known as the
measurement error mitigation~\cite{chen2019detector,tannu2019mitigating,nachman2020unfolding,maciejewski2020mitigation,hicks2021readout,bravyi2021mitigating,murali2020software,kwon2020hybrid,funcke2022measurement,zheng2023bayesian,maciejewski2021modeling,barron2020measurement,wang2023mitigating}.

From the perspective of positive operator-valued measure (POVM) formalism,
the classical noise model indicates that 
the POVM elements characterizing the measurement device possess only non-zero diagonal values with respect to (w.r.t.) the computational basis. 
However, due to the coherent nature of quantum mechanics incurred during calibration and/or interaction, 
the POVM elements signalizing a near-term measurement device necessarily have off-diagonal non-zero values, 
which is called coherence in the resource theory of quantum coherence~\cite{streltsov2017colloquium}. 
That is to say, the coherence effect can be seen as one of the sources of measurement imperfections. 
This coherence effect renders the behavior of measurement devices unpredictable. 
Complete information regarding these devices can only be obtained through exponentially resource-intensive quantum detector tomography~\cite{fiuravsek2001maximum,lundeen2009tomography}.
Furthermore, this observation prompts the realization that existing measurement error mitigation techniques could be enhanced, 
as the classical noise model assumption fails to consider the influence of quantum noise. 
Consequently, we describe a measurement device as afflicted by \emph{quantum noise} when its POVM elements encompass non-zero off-diagonal values. 
This leads us to the fundamental query of how to effectively address this form of noise. 

\begin{figure*}[!htbp]
	\centering
	\includegraphics[width=0.8\textwidth]{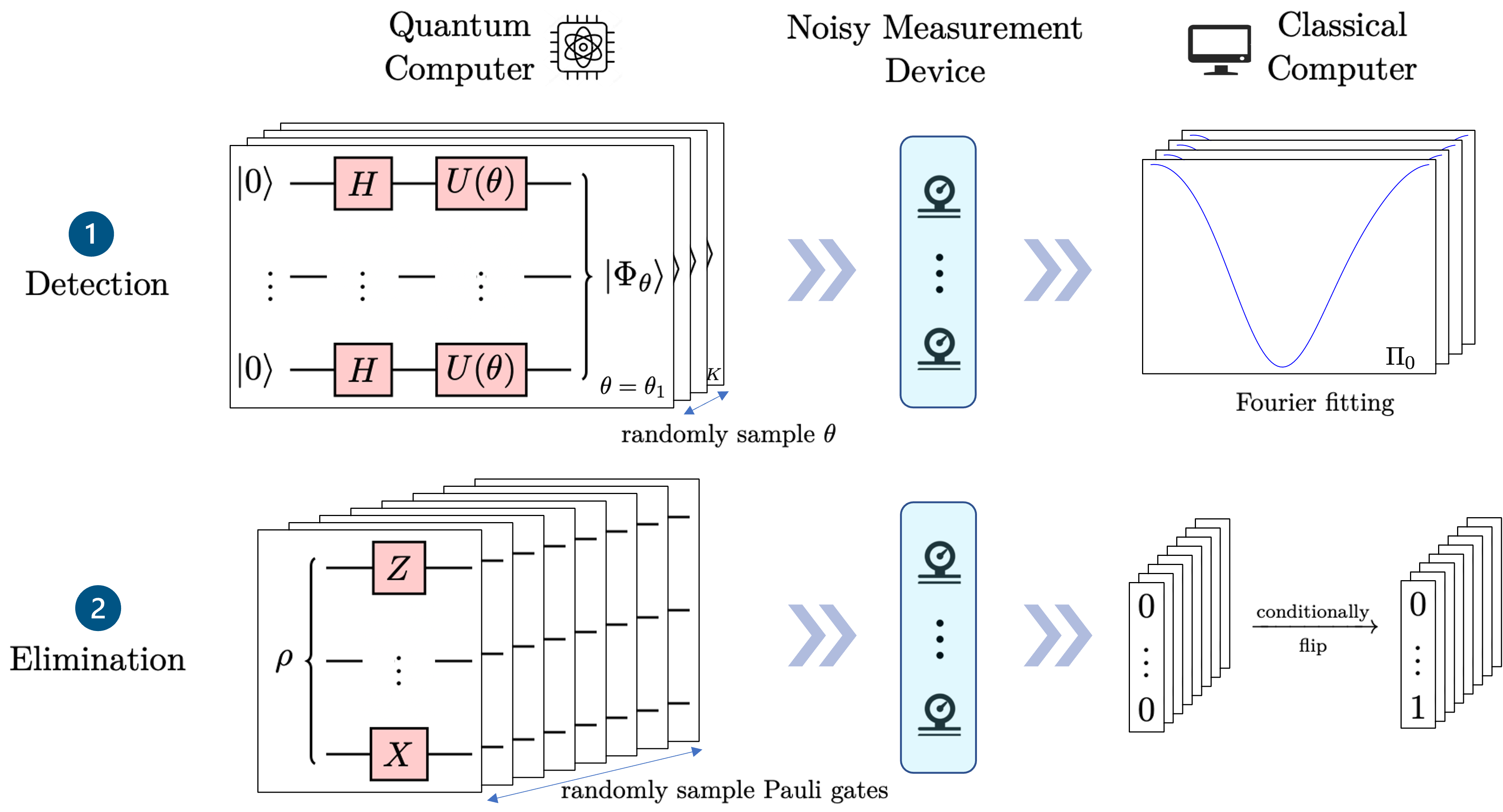}
	\caption{\raggedright 
			A two-stage procedure for detecting and eliminating quantum noise of quantum measurement devices.
			In the first stage, we prepare maximally coherent states $\vert\Phi_\theta\rangle$ with 
			relative phase $\theta$ and maximally mixed states as inputs, and fit the difference between 
			two measurement statistics to
			the Fourier series. The fitting coefficients quantitatively benchmark quantum noise.
			In the second stage, we execute randomly sampled Pauli gates before the measurement
			device and conditionally flip the outcomes,
			in such a way that the resulting effective measurement contains only classical noise.}
	\label{fig:2-stage}
\end{figure*}

In this work, we propose a two-stage procedure to systematically address quantum noise inherent in NISQ measurement devices. 
The procedure is illustrated in Figure~\ref{fig:2-stage} and
is very intuitive: 
we first detect and then eliminate quantum noise if there is any. 
It is noteworthy that the detection process is efficient, 
while the elimination process is resource-consuming. 
Consequently, prioritizing the detection of quantum noise is advisable. 
After the procedure, the classical noise model assumption is obviously satisfied
and the classical noise can be diminished using measurement error mitigation.
The rest of the paper is organized as follows.
Section~\ref{sec:Preliminaries} sets the notation and introduces
the task of addressing quantum noise inherent in quantum measurements.
Section~\ref{sec:efficient detection} elaborates the procedure's first stage, which proposes an efficient method to detect the quantum noise of measurement devices.
Section~\ref{sec:quantum noise elimination} elaborates the procedure's second stage
by describing three different methods to eliminate the quantum noise of measurement devices.
Section~\ref{sec:experimental-results} reports the feasibility of the proposed two-stage procedure
numerically on Baidu Quantum Platform~\cite{bqp2022} with three paradigmatic quantum applications.
The Appendices summarize technical details used in the main text.

\section{Preliminaries}\label{sec:Preliminaries}

In this section, we first set the notations.
Then, we review different yet equivalent mathematical formalisms of quantum measurements.
Finally, we rigorously define the classical and quantum noises of a quantum measurement device.

\subsection{Notations}

For a finite-dimensional Hilbert space $\cH$, we denote by $\linear{\cH}$ and $\pos{\cH}$ the linear and positive
semidefinite operators on $\cH$. Quantum states are in the set $\density{\cH}:=\{\rho\in\pos{\cH}\vert\tr\rho=1\}$.
For two operators $M, N\in\linear{\cH}$, we say $M\geq N$ if and only if $M-N\in\pos{\cH}$.
The identity matrix is denoted as $\1$, the maximally mixed state is denoted as $\Pi$, 
and the diagonal column vector of an operator $M$ w.r.t. the computational basis is denoted as $\opn{diag}(M)$.
Multipartite quantum systems are described by tensor product spaces.
Denote by $\mathbb{C}$, $\mathbb{R}$, and $\mathbb{R}^+$ the complex, real,
and non-negative real numbers, respectively.
In this paper, we assume $\cH_n$ represents the $n$-qubit Hilbert space
and our attention focuses on investigating classical and quantum noises of measurements in this space.

\subsection{Quantum measurements}

The most general kind of measurements in quantum mechanics is called positive operator-valued measures (POVMs).
A POVM is a set of operators $\{ E_{\bm{x}}\}_{\bm{x}\in\Sigma}$
satisfying $\forall \bm{x}, E_{\bm{x}}\geq0$ and $\sum_{\bm{x}} E_{\bm{x}}=\1$,
where $\Sigma$ is an alphabet and $\bm{x}\in\Sigma$ records the measurement outcome~\cite{nielsen2002quantum}.
If one performs the POVM $\{ E_{\bm{x}}\}$ on a quantum state $\rho$,
the probability of obtaining outcome $\bm{x}$ is given by the Born's rule
\begin{align}\label{eq:probability}
    p(\bm{x}) := \tr[ E_{\bm{x}}\rho].
\end{align}
Notice that in the POVM representation, we do not care about the post-measurement state
but only care about the probability of obtaining a particular outcome.
We fix $\Sigma=\{0,1\}^n$ in the following discussion.

In Quantum Information Theory, researchers commonly think of measurements from
the quantum channel viewpoint by regarding the post-measurement states as purely classical states.
Mathematically, the induced \emph{measurement channel} of a POVM $\{ E_{\bm{x}}\}_{\bm{x}\in\Sigma}$
is defined as~\cite{wilde2013quantum}
\begin{align}\label{eq:measurement-channel}
    \mathcal{M}(\rho) := \sum_{\bm{x}\in\Sigma}\tr\left[ E_{\bm{x}}\rho\right]\proj{\bm{x}},
\end{align}
where $\{\ket{\bm{x}}\}_{\bm{x}\in\Sigma}$ is the computational basis of the underlying Hilbert space.
Notice that a measurement channel takes a quantum system to a classical one.

The Pauli transfer matrix (PTM) representation is intensively used to notate
quantum states and quantum processes in tomography tasks~\cite{greenbaum2015introduction}.
In PTM, an $n$-qubit quantum channel is represented by a $4^n\times 4^n$ real matrix
defined w.r.t. vectorization in the Pauli basis instead of column-vectorization.
Specifically, the \emph{PTM matrix} $[\cM]$ of the measurement channel $\cM$
defined in~\eqref{eq:measurement-channel} has the form
\begin{align}\label{eq:ptm of measurement}
    [\cM]_{\bm{i}\bm{j}} := \frac{1}{2^n}\tr[P_{\bm{i}}\cM(P_{\bm{j}})]
    = \frac{1}{2^n}\sum_{\bm{x}\in\Sigma} \tr[ E_{\bm{x}}P_{\bm{j}}]\langle\bm{x}|P_{\bm{i}}|\bm{x}\rangle,
\end{align}
where $[\cM]_{\bm{i}\bm{j}}$ is the element in the $\bm{i}$-th row $\bm{j}$-th column and
$P_{\bm{i}}\in\mathsf{P}^n$ is the $\bm{i}$-th Pauli operator (sorted by lexicographic order)
in the $n$-qubit Pauli set $\mathsf{P}^n:=\{I,X,Y,Z\}^{\otimes n}$.

Given the POVM representation $\{ E_{\bm{x}}\}_{\bm{x}\in\Sigma}$,
we can compute the PTM representation via Eq.~\eqref{eq:ptm of measurement}.
In the following, we give a method to compute the POVM representation
given the PTM representation $[\cM]$. The proof is given in Appendix~\ref{appx:prop:from-PTM-to-POVM}.

\begin{proposition}\label{prop:from-PTM-to-POVM}
Let $[\cM]$ be the PTM representation of an arbitrary unknown POVM $\{ E_{\bm{x}}\}_{\bm{x}\in\Sigma}$.
It holds for arbitrary $\bm{x}\in\Sigma$ that
\begin{align}
     E_{\bm{x}} = \frac{1}{2^n}\sum_{P_{\bm{i}},P_{\bm{j}}\in\mathsf{P}^n}
    \bra{\bm{x}}P_{\bm{i}}\ket{\bm{x}}[\cM]_{\bm{i}\bm{j}}P_{\bm{j}}.
\end{align}
\end{proposition}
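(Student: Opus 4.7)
The plan is to expand $\Pi_{\bm{x}}$ in the Pauli basis and then invert the defining relation of the PTM to express the Pauli coefficients of $\Pi_{\bm{x}}$ in terms of the entries $[\cM]_{\bm{i}\bm{j}}$. Since $\mathsf{P}^n$ is an orthogonal basis of Hermitian operators on $\cH_n$ with $\tr[P_{\bm{j}}P_{\bm{k}}]=2^n\delta_{\bm{j}\bm{k}}$, each POVM element admits the expansion
\begin{align*}
\Pi_{\bm{x}} \;=\; \frac{1}{2^n}\sum_{P_{\bm{j}}\in\mathsf{P}^n}\tr[\Pi_{\bm{x}}P_{\bm{j}}]\,P_{\bm{j}}.
\end{align*}
Hence it suffices to show that $\tr[\Pi_{\bm{x}}P_{\bm{j}}] = \sum_{\bm{i}}\bra{\bm{x}}P_{\bm{i}}\ket{\bm{x}}[\cM]_{\bm{i}\bm{j}}$; substituting this identity into the expansion immediately yields the claimed formula.

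The main obstacle is therefore the inversion step. Starting from the definition
\begin{align*}
[\cM]_{\bm{i}\bm{j}} = \frac{1}{2^n}\sum_{\bm{x}\in\Sigma}\tr[\Pi_{\bm{x}}P_{\bm{j}}]\,\bra{\bm{x}}P_{\bm{i}}\ket{\bm{x}},
\end{align*}
I would multiply both sides by $\bra{\bm{x}'}P_{\bm{i}}\ket{\bm{x}'}$ and sum over $\bm{i}$. The key ingredient is then the diagonal completeness relation
\begin{align*}
\sum_{P_{\bm{i}}\in\mathsf{P}^n}\bra{\bm{x}}P_{\bm{i}}\ket{\bm{x}}\bra{\bm{x}'}P_{\bm{i}}\ket{\bm{x}'} = 2^n\,\delta_{\bm{x}\bm{x}'},
\end{align*}
which collapses the $\bm{x}$-sum on the right-hand side and produces exactly $\tr[\Pi_{\bm{x}'}P_{\bm{j}}]$.

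To justify this completeness relation, I would observe that $\bra{\bm{x}}P_{\bm{i}}\ket{\bm{x}}$ vanishes unless $P_{\bm{i}}$ is diagonal in the computational basis, i.e.\ unless $P_{\bm{i}}\in\{I,Z\}^{\otimes n}$. Parametrizing these $2^n$ diagonal Paulis by $\bm{z}\in\{0,1\}^n$, one has $\bra{\bm{x}}P_{\bm{z}}\ket{\bm{x}} = (-1)^{\bm{z}\cdot\bm{x}}$, and the standard character-sum identity $\sum_{\bm{z}\in\{0,1\}^n}(-1)^{\bm{z}\cdot(\bm{x}\oplus\bm{x}')} = 2^n\delta_{\bm{x}\bm{x}'}$ delivers the result. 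Chaining these three steps—the Pauli expansion, the multiply-and-sum trick, and the character-sum identity—gives a short, direct proof; the bookkeeping of which Paulis are diagonal is the only conceptual subtlety, and there are no convergence or positivity issues to worry about since everything is finite-dimensional and linear.
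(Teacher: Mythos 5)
Your proof is correct and is essentially the argument the paper gives: the paper phrases the same inversion in matrix form, writing $[\cM]=[\mathsf{P}][\Pi]$ with $[\mathsf{P}]_{\bm{i}\bm{x}}=\langle\bm{x}|P_{\bm{i}}|\bm{x}\rangle$ and showing $[\mathsf{P}]^\dagger[\mathsf{P}]=\1$, which is exactly your diagonal completeness relation (only the $\{I,Z\}^{\otimes n}$ Paulis contribute, plus the character sum), followed by the Pauli-basis expansion of $\Pi_{\bm{x}}$. Your explicit tracking of the $1/2^n$ factors with unrescaled Paulis matches the paper's rescaled-Pauli bookkeeping, so no gap remains.
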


Throughout this paper we use the POVM representation $\{ E_{\bm{x}}\}_{\bm{x}\in\Sigma}$,
the measurement channel representation $\cM$, and the PTM representation $[\cM]$ interchangeably.
The representations of symbols should be clear from context.

\subsection{Classical and quantum noises}

In most existing quantum computing platforms~\cite{arute2019quantum,kandala2019error,postler2022demonstration},
measurement devices are designed to
implement an ideal $n$-qubit computational basis measurement, also called the $Z$ measurement,
whose POVM $\{ E_{\bm{x}}^i\}_{\bm{x}}$ has the form
\begin{align}\label{eq:ideal-measurement}
 \forall \bm{x}\in\{0,1\}^n,\quad  E_{\bm{x}}^i = \proj{\bm{x}},
\end{align}
where the superscript $i$ means \textit{ideal}
and $\{\ket{\bm{x}}\}_{\bm{x}\in\{0,1\}^n}$ is the computational basis of an $n$-qubit Hilbert space.
Note that $ E_{\bm{x}}^i$ is a $2^n\times 2^n$ matrix with $1$ in the $\bm{x}$-row and $\bm{x}$-column
and $0$ in all other cells.

Practically, the measurement apparatus is imperfect and introduces different kinds of noises.
In order to simplify the post-processing, experimenters commonly assume that
the incurred noises are well understood in terms of classical
noise models~\cite{chow2012universal,chen2019detector,geller2020rigorous,geller2021conditionally}.
In the POVM language, we say a POVM $\{ E_{\bm{x}}^c\}_{\bm{x}\in\{0,1\}^n}$ (the superscript $c$ means classical)
that characterizes a measurement device is \emph{classical} if,
besides $ E_{\bm{x}}^c\geq0$ and $\sum_{\bm{x}} E^c_{\bm{x}}=\1$, it satisfies two more conditions:
1) the off-diagonal elements of all $ E_{\bm{x}}^c$ are zero,
i.e., $\forall\bm{y}\neq\bm{z}$, $\bra{\bm{y}} E_{\bm{x}}\ket{\bm{z}} = 0$, $\forall\bm{x}$; and
2) there exists at least one $ E_{\bm{x}}^c$ whose diagonal elements contain more than one non-zero values,
i.e., $\exists\bm{x},\bm{y}$ such that $\bm{x}\neq\bm{y}$ and $\bra{\bm{y}} E_{\bm{x}}\ket{\bm{y}} \neq 0$.
Experimentally, this means that when we input the computational basis state $\ket{\bm{x}}$ to the measurement device, we have some probability of obtaining an incorrect measurement outcome $\bm{y}$.
If the measurement is classical, we call its induced measurement channel a \emph{classical measurement channel}.
We can use calibration to learn parameters of a classical measurement from experimental data
and error mitigation methods to diminish classical noise~\cite{maciejewski2020mitigation,tannu2019mitigating,nachman2020unfolding,hicks2021readout,bravyi2021mitigating,murali2020software,kwon2020hybrid,funcke2022measurement,zheng2023bayesian,maciejewski2021modeling,barron2020measurement,wang2023mitigating}.

In the most general case,
the POVM $\{ E_{\bm{x}}^q\}_{\bm{x}\in\{0,1\}^n}$ (the superscript $q$ means quantum)
that characterizes a noisy measurement device
only need to satisfy the fundamental conditions $ E_{\bm{x}}^q\geq0$ and $\sum_{\bm{x}} E^q_{\bm{x}}=\1$.
Thus, $ E_{\bm{x}}^q$ can have non-zero values in both diagonal and off-diagonal parts.
We term the non-zero values in the diagonal part as classical noise
and the non-zero values in the off-diagonal part as \emph{quantum noise}.
We can use quantum detector tomography to learn parameters of
a general POVM from experimental data~\cite{fiuravsek2001maximum,lundeen2009tomography},
which is both time consuming and computationally difficult.
The bitter truth is that in tomography, a few-qubit measurement device is already experimentally challenging.
What's worse, none of the mentioned mitigation methods can be applied to cancel the effect of quantum noise.
In the following, we propose a two-stage procedure to first detect and then eliminate quantum noise.
After the procedure, only classical noise remains, and we can use mitigation methods to handle them.

\section{Quantum noise detection}\label{sec:efficient detection}

In this section, we describe the first stage of the quantum noise manipulating procedure by proposing an efficient quantum noise detection method.
To commence, we introduce the notion of quantum noise witnesses,
which we define as quantum observables that comprehensively capture classical noise attributes and enable the physical identification of quantum noise. 
Subsequently, leveraging the direct measurability of quantum noise witnesses, 
we propose a Fourier series fitting method to detect quantum noise quantitatively. 
Ultimately, we validate the efficacy of this fitting methodology through numerical verification conducted on the Baidu Quantum Platform. 

\subsection{Quantum noise witness}

\subsubsection{Definition}

In brief, a quantum noise witness is a function designed to differentiate a particular quantum POVM element from classical POVM elements. 
These witnesses draw inspiration from quantum entanglement witnesses~\cite{horodecki2009quantum,guhne2009entanglement} and 
are rooted in geometry: the convex sets of classical POVM elements can be delineated using hyperplanes.

\begin{definition}[Quantum noise witness]\label{def:quantum noise-witness}
Let $W$ be a Hermitian operator in $\cH_n$. $W$ is called a quantum noise witness, if
\begin{enumerate}
 \item for arbitrary classical POVM $\{ E^c_{\bm{x}}\}_{\bm{x}}$,
 it holds for arbitrary $\bm{x}$ that $\tr[W E^c_{\bm{x}}] = 0$;
 \item there exists at least
 one quantum POVM $\{ E^q_{\bm{x}}\}_{\bm{x}}$
 such that there exists some $\bm{x}$ for which $\tr[W E^q_{\bm{x}}] \neq 0$.
\end{enumerate}
Thus, if one measures $\tr[W E_{\bm{x}}]\neq0$ for some $ E_{\bm{x}}$,
one knows for sure that this POVM element, and the corresponding POVM, is witnessed
by $W$ and contains quantum noise.
\end{definition}

Notice that our definition of quantum noise witness differs slightly from the standard definition used by entanglement witnesses. 
The separation hyperplane is determined by the expectation values equaling $0$. 
In Appendix~\ref{appx:quantum noise witness}, we elaborate that these two definitions are equivalent.

Motivated by the intuition that the operators close to a maximally coherent state
must possess non-zero off-diagonal values~\cite{streltsov2017colloquium},
one can construct a quantum noise witness from a given pure
coherent quantum state $\ket{\psi}:=\sum_{\bm{y}}c_{\bm{y}}\ket{\bm{y}}$,
where $c_{\bm{y}}\in\mathbb{C}$, via
\begin{align}\label{eq:psi-induced-witness}
 W_\psi := \alpha\1 - \proj{\psi},
\end{align}
where $\alpha\in\mathbb{R}^+$ is to be determined.
We call $\psi$ the \emph{probe state} and $W_\psi$ the \emph{$\psi$-induced quantum noise witness}.
The parameters $\{c_{\bm{y}}\}_{\bm{y}}$ and $\alpha$ must be chosen
to ensure that $W_\psi$ satisfies \textbf{Condition 1}. Notice that
\begin{align}\label{eq:JmfRrs}
 \tr\left[W_\psi E_{\bm{x}}\right]
= \sum_{\bm{y}}(\alpha- \vert c_{\bm{y}}\vert^2) E_{\bm{x}}(\bm{y},\bm{y})
- \sum_{\bm{y}\neq\bm{z}} c_{\bm{y}}^\ast c_{\bm{z}} E_{\bm{x}}(\bm{y},\bm{z}),
\end{align}
where $ A(\bm{y},\bm{z})$ is the element of $A$ in the $\bm{y}$-th row and $\bm{z}$-th column and
$c^\ast$ is the complex conjugate of $c$.
To ensure \textbf{Condition 1}, the first term in the RHS. of~\eqref{eq:JmfRrs}
must evaluate to $0$ for arbitrary classical $ E_{\bm{x}}$,
which holds if and only if $\vert c_{\bm{y}} \vert^2 = \alpha$ for arbitrary $\bm{y}$.
This leads to $\alpha = 1/2^n$ and $c_{\bm{y}} = e^{i\theta_{\bm{y}}}/\sqrt{2^n}$ for some $\theta_{\bm{y}}\in[0,2\pi]$,
thanks to the normalization condition. Correspondingly, the Hermitian operator
\begin{align}\label{eq:psi-induced-witness-2}
 W_\psi = \frac{1}{2^n}\1 - \proj{\psi}
\end{align}
witnesses the quantum noise of a given POVM via the expectation value
\begin{align}
 \tr\left[W_\psi E_{\bm{x}}\right]
&= - \frac{1}{2^n}
 \sum_{\bm{y}\neq\bm{z}} e^{i(\theta_{\bm{z}} - \theta_{\bm{y}})} E_{\bm{x}}(\bm{y},\bm{z}).\label{eq:difference}
\end{align}
Since $ E_x$ is positive semidefinite, the RHS. of~\eqref{eq:difference} must be real.
If $ E_{\bm{x}}$ is classical then $\tr\left[W_\psi E_{\bm{x}}\right]=0$.
Thus $\tr\left[W_\psi E_{\bm{x}}\right]\neq0$ implies that $ E_{\bm{x}}$ possesses quantum noise. It is not hard to find out that the quantum noise witness scheme can be easily realized through constructing special quantum states.

Since the expectation value of an observable depends linearly on the operator, the set of POVM elements is cut into two parts by the expectation value $\tr[W_\psi E]$.
In part with $\tr[W_\psi E]=0$ lies the set of all classical POVM elements and some quantum POVM elements that can not be detected by $W_\psi$, while the other part $\tr[W_\psi E]\neq0$ lies the set of quantum POVM elements detected by $W_\psi$.
From this geometrical interpretation,
it follows that all quantum POVMs can be detected by quantum noise witnesses
of the form $W_\psi$~\eqref{eq:psi-induced-witness-2}.
The proof is given in Appendix~\ref{appx:completeness}.

\begin{proposition}[Completeness]\label{prop:completeness}
For arbitrary quantum POVM element $ E^q$,
there exists a probe state $\psi$ for
which the $\psi$-induced quantum noise witness $W_\psi$~\eqref{eq:psi-induced-witness-2} detects $ E^q$.
\end{proposition}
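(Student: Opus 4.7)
The plan is to parameterize the admissible probe states by their phase vector and then reduce completeness to the linear independence of distinct Fourier modes on the torus. Recall from the derivation culminating in~\eqref{eq:psi-induced-witness-2} that every $\psi$-induced witness satisfying \textbf{Condition 1} arises from a probe state of the form $\ket{\psi} = \frac{1}{\sqrt{2^n}}\sum_{\bm{y}\in\{0,1\}^n} e^{i\theta_{\bm{y}}}\ket{\bm{y}}$ with $\theta_{\bm{y}}\in[0,2\pi)$. Consequently, the proposition is equivalent to the following statement: for any quantum POVM element $\Pi^q$, there is a phase vector $(\theta_{\bm{y}})_{\bm{y}}$ that makes the right-hand side of~\eqref{eq:difference} non-zero.

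I would prove the contrapositive. Assume that
\begin{align}
-\frac{1}{2^n}\sum_{\bm{y}\neq\bm{z}} e^{i(\theta_{\bm{z}}-\theta_{\bm{y}})}\,\Pi^q(\bm{y},\bm{z}) = 0
\end{align}
holds identically in the $2^n$ phases. Viewed as a function on the torus $[0,2\pi)^{2^n}$, the left-hand side is a trigonometric polynomial with coefficients $-\Pi^q(\bm{y},\bm{z})/2^n$ attached to the characters $\chi_{\bm{y},\bm{z}}(\theta) := e^{i(\theta_{\bm{z}}-\theta_{\bm{y}})}$, indexed by ordered pairs $\bm{y}\neq\bm{z}$. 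Each such character has frequency vector $\bm{e}_{\bm{z}}-\bm{e}_{\bm{y}}\in\mathbb{Z}^{2^n}$, and distinct ordered pairs give distinct frequency vectors, so the characters are pairwise distinct and hence orthogonal with respect to the Haar measure on the torus. Linear independence then forces every coefficient to vanish, i.e.\ $\Pi^q(\bm{y},\bm{z}) = 0$ for all $\bm{y}\neq\bm{z}$, contradicting the defining property that a quantum POVM element has at least one non-zero off-diagonal entry. Therefore some phase vector makes the expectation in~\eqref{eq:difference} non-zero, and the corresponding $W_\psi$ detects $\Pi^q$.

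The main obstacle is simply being careful that the characters $\chi_{\bm{y},\bm{z}}$ really are all distinct: in particular the ordered pairs $(\bm{y},\bm{z})$ and $(\bm{z},\bm{y})$ carry opposite frequency vectors and hence constrain $\Pi^q(\bm{y},\bm{z})$ and its Hermitian conjugate partner $\Pi^q(\bm{z},\bm{y})$ independently, rather than letting them cancel. A more constructive alternative is to fix all phases to zero except $\theta_{\bm{y}_0}$ and $\theta_{\bm{z}_0}$, and tune their difference to align with the argument of a pre-chosen non-zero entry $\Pi^q(\bm{y}_0,\bm{z}_0)$; however, handling the residual cross-terms from the remaining off-diagonal entries of $\Pi^q$ is messier than the clean Fourier/orthogonality route outlined above, so I would present that route in the paper.
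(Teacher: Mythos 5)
Your proposal is correct, and it is in fact more complete than the paper's own argument. Both start from the same reduction: restricting to probe states with $\vert c_{\bm{y}}\vert^2 = 1/2^n$ collapses $\tr[W_\psi\Pi^q]$ to the off-diagonal bilinear form $-\tfrac{1}{2^n}\sum_{\bm{y}\neq\bm{z}} e^{i(\theta_{\bm{z}}-\theta_{\bm{y}})}\Pi^q(\bm{y},\bm{z})$. The paper's Appendix~\ref{appx:completeness} stops there and merely reasserts that a nonzero value certifies quantum noise --- which is the easy converse direction and does not establish the existence claim in Proposition~\ref{prop:completeness}. Your character-orthogonality step supplies exactly the missing half: since the frequency vectors $\bm{e}_{\bm{z}}-\bm{e}_{\bm{y}}$ are pairwise distinct over ordered pairs $\bm{y}\neq\bm{z}$ (including the conjugate pair $(\bm{z},\bm{y})$, which you rightly flag), identical vanishing of the trigonometric polynomial on $[0,2\pi)^{2^n}$ forces every $\Pi^q(\bm{y},\bm{z})$ with $\bm{y}\neq\bm{z}$ to be zero, contradicting the definition of a quantum POVM element. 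The contrapositive then yields a detecting phase vector. This is a genuine strengthening rather than a mere restyling, and I would recommend it over the constructive two-phase alternative you sketch at the end, which indeed requires controlling cross-terms that the Fourier argument handles for free.
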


Furthermore, as evident from Eq.~\eqref{eq:difference},
the $\psi$-induced quantum noise witness $W_\psi$ can not only detect but also
quantitatively measure the quantum noise of the noisy measurements to some degree.
Inspired by this observation, we propose the following quantum noise measure.

\begin{definition}[$\psi$-induced quantum noise measure]\label{def:psi induced quantum noise measure}
Let $\bm{ E}\equiv\{ E_{\bm{x}}\}_{\bm{x}\in\{0,1\}^n}$ be a POVM in $\cH_n$.
The $\psi$-induced quantum noise measure of $ E_{\bm{x}}$ is defined as
\begin{align}\label{eq:psi-quantum noise measure}
 \mathscr{Q}_\psi( E_{\bm{x}})
:= 2^n\vert\tr[W_\psi E_{\bm{x}}]\vert.
\end{align}
The (average) quantum noise measure of $\bm{ E}$ is defined as
\begin{align}\label{eq:psi-quantum noise measure 2}
 \mathscr{Q}_\psi(\bm{ E})
:= \frac{1}{2^n}\sum_{\bm{x}}\mathscr{Q}_\psi( E_{\bm{x}}).
\end{align}
\end{definition}

In Appendix~\ref{sec:properties of quantum noise measure},
we elaborate on the $\psi$-induced quantum noise measure
$\mathscr{Q}_\psi$ in depth by investigating its general properties,
the analytical solution in the qubit case, and its relations to the
resource theory of quantum measurements~\cite{Baek_2020}.

\subsubsection{Concrete case}

Although Proposition~\ref{prop:completeness} ensures that any quantum noise in POVM can in principle be detected by some quantum noise witness, 
the challenge lies in constructing good witnesses that can detect as many quantum POVMs as possible.
In this section, we specialize a probe state
and introduce a single-parameter quantum noise witness family
that can quantitatively gauge the quantum noise strength of quantum POVMs.

We first define the following $n$-qubit maximally coherent quantum state
\begin{align}\label{eq:Phi-theta}
\ket{\Phi_\theta} := \ket{+_\theta}^{\otimes n},
\end{align}
where the single-qubit state $\ket{+_\theta}$ is defined as
\begin{align}
 \ket{+_\theta}
:= \frac{\ket{0}+e^{i\theta}\ket{1}}{\sqrt{2}} = \frac{1}{\sqrt{2}}\begin{bmatrix} 1 \\ e^{i\theta} \end{bmatrix}
\end{align}
and can be obtained by applying the Hadamard gate followed
by a phase shift gate $R_\theta:=\begin{bmatrix} 1 & 0 \\ 0 & e^{i\theta}\end{bmatrix}$.
Using $\Phi_\theta$ as a probe state, the induced quantum noise witnesses have the form
\begin{align}\label{eq:Phi-witness}
W_\Phi^\theta := \frac{1}{2^n}\1 - \proj{\Phi_\theta}.
\end{align}

Remarkably, we find that the $\Phi_\theta$-induced quantum noise measure
$\mathscr{Q}_\Phi^{\theta}( E_{\bm{x}}):=2^n\vert\tr[W_\Phi^\theta E_{\bm{x}}]\vert$
has an elegant expression in the form of the sine-cosine Fourier series, where the coefficients are completely determined by the real and imaginary parts of $ E_{\bm{x}}$'s off-diagonal elements.
We summarize this exciting finding in the following theorem
and give its proof in Appendix~\ref{appx:Phi-Psi-quantum noise}.

\begin{theorem}\label{thm:Phi-Psi-quantum noise}
Let $\bm{ E}=\{ E_{\bm{x}}\}_{\bm{x}\in\{0,1\}^n}$ be a POVM in $\cH_n$.
For arbitrary $\bm{x}$, it holds that
\begin{align}\label{eq:quantum noise witness}
 \mathscr{Q}_\Phi^{\theta}\left( E_{\bm{x}}\right)
&= 2\left\vert\sum_{\bm{y< z}}\bigg(
 -\Re\left[ E_{\bm{x}}(\bm{y},\bm{z})\right]\cos\left[h(\bm{y},\bm{z})\theta\right]\right.\nonumber \\
&\qquad\qquad\;\;\left.+ \Im\left[ E_{\bm{x}}(\bm{y},\bm{z})\right]\sin\left[h(\bm{y},\bm{z})\theta\right]\bigg)\right\vert,
\end{align}
where $\Re\left(x\right)$ and $\Im\left(x\right)$ represent the real and imaginary
part of a complex number $x$, respectively,
$h(\bm{y},\bm{z}):=\left\vert\bm{y}\right\vert-\left\vert\bm{z}\right\vert$,
and $\vert\bm{y}\vert$ represents the Hamming weight of $\bm{y}$.
\end{theorem}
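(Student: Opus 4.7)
The plan is to specialize the general expression for $\tr[W_\psi\Pi_{\bm x}]$ in Eq.~\eqref{eq:difference} to the probe state $\ket{\Phi_\theta}$, and then peel off its sine/cosine structure. The overall strategy has three steps: first expand $\ket{\Phi_\theta}$ in the computational basis to read off the phases $\theta_{\bm y}$; second, fold each $(\bm y,\bm z)$ and $(\bm z,\bm y)$ pair together using Hermiticity of $\Pi_{\bm x}$; third, decompose the resulting real part via Euler's formula.

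First I would expand $\ket{\Phi_\theta}=\ket{+_\theta}^{\otimes n}$. Distributing the tensor product of $(\ket 0+e^{i\theta}\ket 1)/\sqrt 2$ yields $\ket{\Phi_\theta}=2^{-n/2}\sum_{\bm y}e^{i|\bm y|\theta}\ket{\bm y}$, where $|\bm y|$ is the Hamming weight. Reading this off against the generic form $\ket\psi=\sum_{\bm y}(e^{i\theta_{\bm y}}/\sqrt{2^n})\ket{\bm y}$ used in deriving Eq.~\eqref{eq:difference} identifies $\theta_{\bm y}=|\bm y|\theta$, so the phase appearing in each summand of~\eqref{eq:difference} is $\theta_{\bm z}-\theta_{\bm y}=(|\bm z|-|\bm y|)\theta$, which up to a sign is exactly $h(\bm y,\bm z)\theta$.

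Next, because $\Pi_{\bm x}$ is positive semidefinite and therefore Hermitian, $\Pi_{\bm x}(\bm z,\bm y)=\overline{\Pi_{\bm x}(\bm y,\bm z)}$. Together with the antisymmetry of $|\bm z|-|\bm y|$ under swapping $\bm y\leftrightarrow\bm z$, the terms for $(\bm y,\bm z)$ and $(\bm z,\bm y)$ are complex conjugates of one another and combine into $2\Re[e^{i(|\bm z|-|\bm y|)\theta}\Pi_{\bm x}(\bm y,\bm z)]$. Restricting the sum to $\bm y<\bm z$ therefore turns the double sum in Eq.~\eqref{eq:difference} into a manifestly real expression. Expanding $e^{i\varphi}=\cos\varphi+i\sin\varphi$ and using the identity $\Re[(c_1+ic_2)(a+ib)]=c_1 a-c_2 b$ separates this real part into a $\cos[h(\bm y,\bm z)\theta]$ piece weighted by $\Re[\Pi_{\bm x}(\bm y,\bm z)]$ and a $\sin[h(\bm y,\bm z)\theta]$ piece weighted by $\Im[\Pi_{\bm x}(\bm y,\bm z)]$. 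Multiplying by the $2^n$ prefactor in Definition~\ref{def:psi induced quantum noise measure} and taking an absolute value then delivers the Fourier form in Eq.~\eqref{eq:quantum noise witness}.

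The only genuine obstacle is careful sign bookkeeping. The minus sign in front of Eq.~\eqref{eq:difference}, the two conjugate exponentials arising from Hermitian pairing, the antisymmetry of $h$, and the relative sign in $\Re[(c_1+ic_2)(a+ib)]$ all need to be combined consistently; since the final expression sits inside an absolute value, an overall global sign is harmless, but the relative sign between the $\cos$ and $\sin$ contributions is not. I would therefore double-check the bookkeeping on the $n=1$ qubit case — where only the single off-diagonal pair $(0,1),(1,0)$ contributes and both sides are easy to evaluate directly — before asserting the general formula.
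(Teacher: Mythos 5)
Your proof follows essentially the same route as the paper's: expand $\ket{\Phi_\theta}$ in the computational basis to identify $\theta_{\bm y}=\vert\bm y\vert\theta$, pair the $(\bm y,\bm z)$ and $(\bm z,\bm y)$ terms via Hermiticity of $\Pi_{\bm x}$, and split the resulting real part into cosine and sine components. Your proposed $n=1$ sanity check is well worth carrying out rather than optional: the relative sign of the sine term hinges on whether $\Pi_{\bm x}(\bm y,\bm z)$ gets paired with $e^{i(\vert\bm y\vert-\vert\bm z\vert)\theta}$ or its conjugate (note $\tr[\ketbra{\bm y}{\bm z}\Pi_{\bm x}]=\Pi_{\bm x}(\bm z,\bm y)$, not $\Pi_{\bm x}(\bm y,\bm z)$), and a direct single-qubit computation checked against the explicit result $2\vert -a_1\cos\theta+b_1\sin\theta\vert$ in Appendix~\ref{qubit case} is the quickest way to pin that sign down.
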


The key idea behind the elegant form of $\mathscr{Q}_\Phi^{\theta}\left( E_{\bm{x}}\right)$
is that the probe state yields an orthogonal basis of the trigonometric functions:
\begin{align}
 \left\{ \cos[h(\bm{y},\bm{z})\theta], \sin[h(\bm{y},\bm{z})\theta]:
 h(\bm{y},\bm{z}) \in\left[0, \cdots, n\right]\right\},
\end{align}
where $n$ denotes the number of qubits.
The off-diagonal parts of each POVM element can be expanded in the basis,
and the coefficients are the sum over off-diagonal elements that have the same Hamming weight difference $h$.
The detection ability of $\mathscr{Q}_\Phi^{\theta}\left( E_{\bm{x}}\right)$ is completely determined by the basis.
For example, the $2$-qubit quantum noise witness $W_\Phi^\theta$ has the form
\begin{align}
 W_\Phi^\theta = \frac{1}{4}\left[
 \begin{array}{cccc}
 0 & -e^{-i\theta} & -e^{-i\theta} & -e^{-i2\theta} \\
 -e^{i\theta} & 0 & -1 & -e^{-i\theta} \\
 -e^{i\theta} & -1 & 0 & -e^{-i\theta} \\
 -e^{i2\theta} & -e^{i\theta} & -e^{i\theta} & 0
 \end{array}
 \right].
\end{align}
When $h=1$, the real and imaginary components of $\bra{01} E_{\bm{x}}\ket{00}$, $\bra{10} E_{\bm{x}}\ket{00}$, $\bra{11} E_{\bm{x}}\ket{01}$ and $\bra{11} E_{\bm{x}}\ket{10}$ as well as their conjugates would be expanded as the coefficients of $\cos{(\theta)}$ and $\sin{(\theta)}$ respectively.

Consider a special case of $\mathscr{Q}_\Phi^{\theta}$ by choosing $\theta=0$:
\begin{align}
 \mathscr{Q}_\Phi^{\theta=0}\left( E_{\bm{x}}\right)
&= \left\vert-2\sum_{\bm{y}<\bm{z}}
 \Re\left[ E_{\bm{x}}\left({\bm{y},\bm{z}}\right)\right]\right\vert.
 \label{eq:Phi-quantum noise}
\end{align}
Interestingly, it records the sum of the real parts of \emph{all} off-diagonal values of $ E_{\bm{x}}$.
If $\mathscr{Q}_\Phi^{\theta=0}\left( E_{\bm{x}}\right) \neq 0$, we can conclude safely that
$ E_{\bm{x}}$ possesses quantum noise, though the converse statement is not necessarily true.

\subsection{Efficient detection}\label{section:4}

The fact that $\psi$-induced quantum noise witnesses and measures are directly measurable quantities makes them appealing for detecting quantum noise in measurement devices from the experiment perspective.
Specifically, to implement the quantum noise witness $W_\Phi^\theta$, 
defined in Eq.~\eqref{eq:Phi-witness}, for some fixed $\theta$,
we suffice to prepare many copies of quantum states---the maximally mixed state $\Pi:=\1/2^n$
and the maximally coherent state $\Phi_\theta$---and feed them to
the noisy measurement device. The expectation value $\tr[W_\Phi^\theta E_{\bm{x}}]$ can
be easily estimated from the measurement statistics via classical postprocessing.
Furthermore, after we successfully estimate $\tr[W_\Phi^\theta E_{\bm{x}}]$
for many different values of $\theta$, we can fit these expectation values to a Fourier series.
It is clear from Eq.~\eqref{eq:quantum noise witness} that the obtained fitting coefficients quantitatively characterize
the real and imaginary parts of the POVM element under consideration.
The whole quantum noise detection procedure is outlined in Algorithm~\ref{alg:quantum noise detection}.

\begin{algorithm}[H]
\caption{Quantum noise detection}
\begin{algorithmic}[1] \label{alg:quantum noise detection}
\REQUIRE $\bm{ E}$: the $n$-qubit measurement device, \\
\hskip1.4em $K$: number of phase values $\theta$ to be sampled, \\
\hskip1.4em $N_{\rm shots}$: number of measurement shots.

\ENSURE Coefficients that quantify $\bm{ E}$'s quantum noise strength.

\FOR{$k = 1, \cdots, K$}

\STATE Randomly sample a phase $\theta_k\in[0, 2\pi]$;
\STATE Generate $N_{\rm shots}$ number of copies of the $n$-qubit maximally
coherent state $\Phi_\theta$ with relative phase $\theta_k$;
\STATE Perform the device $\bm{ E}$ on the above quantum states
 and record the number of occurrences $N_{\bm{x}}^{\theta_k}$ of outcome $\bm{x}$;
\STATE Generate $N_{\rm shots}$ number of copies of the $n$-qubit maximally mixed state $\Pi$;
\STATE Perform the device $\bm{ E}$ on the above quantum states
 and record the number of occurrences $M_{\bm{x}}$ of outcome $\bm{x}$;
\STATE For every $\bm{x}\in\{0,1\}^n$, estimate the expectation value
 \begin{align*}
 \hat{\mathscr{Q}}_{\Phi}^{\theta_k}( E_{\bm{x}}) = 2^n(M_{\bm{x}}-N_{\bm{x}}^{\theta_k})/N_{\rm shots},
 \end{align*}
 where the overhat indicates that it is an estimate.
\ENDFOR
\STATE For every $\bm{x}\in\{0,1\}^n$,
 fit the data $\{\hat{\mathscr{Q}}_{\Phi}^{\theta_k}( E_{\bm{x}}):k=1,\cdots,K\}$ to a Fourier series
 and obtain the fitting coefficients for $ E_{\bm{x}}$.
\STATE Output all the fitting coefficients.
\end{algorithmic}
\end{algorithm}

Now we analyze the sample complexity---the number of copies of quantum states
consumed---of Algorithm~\ref{alg:quantum noise detection}. 
Obviously, a total of $2KN_{\rm shots}$ number of copies of quantum states has to be prepared and measured. 
The critical question is, how large $N_{\rm shots}$ should be so that
the expectation value $\mathscr{Q}_{\Phi}^{\theta_k}( E_{\bm{x}})$ for given $\theta_k$ and $\bm{x}$ can be estimated to the desired precision $\varepsilon$?

Each measurement gives a probability $X_i$, then $\bar{X} = \frac{1}{N_{\rm shots}}(X_1 + \dots + X_{N_{\rm shots}})$, and $\operatorname{E}[\bar{X}] = \operatorname{Tr}[E_x\Phi_{\theta}^k]$. Each $X_i$ satisfies $X_i\in[0,1]$. Then from Hoeffding's inequality, we can have, for all $\epsilon\geq0$, 

\begin{align}
    \operatorname{Pr}(\vert \bar{X} - \operatorname{E}[\bar{X}] \vert\geq \epsilon ) \leq 2\exp{(-2N_{\rm shots}\epsilon^2)}.
\end{align}

Introduce failure probability $\delta$, we can obtain that

\begin{align}
    2\exp{(-2\epsilon^2/N_{\rm shots})}\leq \delta.
\end{align}

Solving the above equation giving us 

\begin{align}
  N_{\rm shots} \geq \frac{\log(2/\delta)}{2\epsilon^2}.
\end{align}

That is to say, we need at least $N_{\rm shots}=\log(2/\delta)/(2\varepsilon^2)$ samples 
to achieve precision $\varepsilon$ with confidence level $1-\delta$ for each $\theta_k$.
The number of $\theta$ values sampled also matters but we are not able to present a theoretical analysis. 
According to our experience, $100$ samples of $\theta$ are sufficient to reach good fit result.

We briefly remark on the quantum states preparation in Algorithm~\ref{alg:quantum noise detection}.
An $n$-qubit maximally coherent state $\Phi_\theta$ can be constructed
using only single-qubit Hadamard and phase shift gates,
which have high gate fidelity in NISQ devices.
We have two different methods to prepare the maximally mixed state $\Pi$:
1) we can simulate it by randomly sampling in the computational basis and averaging;
2) we can construct its $2n$-qubit purification and discard the ancilla $n$-qubits.
We expect advanced experimental methods that can prepare 
the maximally mixed states in a direct manner.

\subsection{Numerical simulation}\label{sec:detection-simulation}

Here we manifest the power of Algorithm~\ref{alg:quantum noise detection}
by carrying out numerical simulations on Baidu Quantum Platform~\cite{bqp2022}.
Since its ideal simulator does not contain errors,
we attach a rotation gate $R_y:=e^{-i\pi\sigma_y/40}$ above the $\hat{y}$ axis to each qubit before measurement,
as shown in Figure~\ref{fig:noisy-measurement}.
In this way, the $R_y$ gates together with an ideal measurement simulate a three-qubit noisy measurement
$\{ E_{\bm{x}}\}_{\bm{x}\in\{0,1\}^{3}}$ that contains both classical and quantum noise,
which we will call the \textit{Ry measurement} in the following.

\begin{figure}[!htbp]
 \centering
 \includegraphics[width=0.2\textwidth]{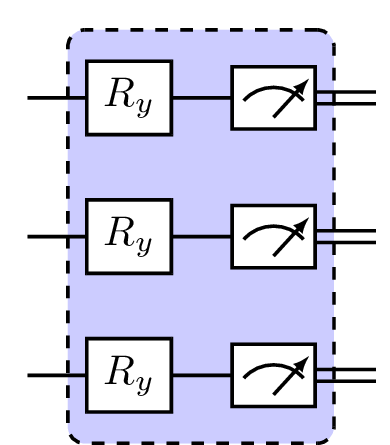}
 \caption{\raggedright A three-qubit Ry measurement is simulated by $R_y$ gates operate on qubits 
      followed by an ideal measurement.}
 \label{fig:noisy-measurement}
\end{figure}

For simplicity, we only apply Algorithm~\ref{alg:quantum noise detection} to
detect quantum noise of the first POVM element $ E_{000}$.
We remark that the same analysis is applicable to detect other POVM elements.
Since the Ry measurement is composed of $R_y$ gates and an ideal measurement,
its quantum noise measure can be analytically computed as
\begin{align}\label{eq:theoretical-value}
 \mathscr{Q}_\Phi^{\theta}\left( E_{000}\right) = 2\left\vert-0.018+0.236\cos(\theta)-0.018\cos(2\theta)\right\vert.
\end{align}
To collect experimental data, we uniformly sample $100$ $\theta$ values from $[0,2\pi]$.
For each $\theta$, we estimate $\hat{\mathscr{Q}}_{\Phi}^{\theta}( E_{000})$ with
a number of measurement shots $2^{13}$.
Additionally, we repeat the estimation procedure $10^3$ times to report the average value and the standard deviation.
Finally, we fit the estimated data to a Fourier series
and obtain the fitting coefficients for $ E_{000}$.
The analytical, experimental, and fitted data are visualized in Figure~\ref{fig:Ry-measurement}.
Notice that we have removed the absolute value restriction from the Figure for better illustration.
One can see from the figure that the fitted Fourier series (red line)
matches the theoretical curve (yellow line) pretty well,
though there are statistical errors in the experimental data.
That is to say, Algorithm~\ref{alg:quantum noise detection} is robust to statistical error and
reports the quantum noise measure $\mathscr{Q}_\Phi^{\theta}\left( E_{000}\right)$ with
high accuracy, validating its practicability in detecting quantum noise of measurement devices. 

\begin{figure}[!hbtp]
 \centering
 \includegraphics[width=0.5\textwidth]{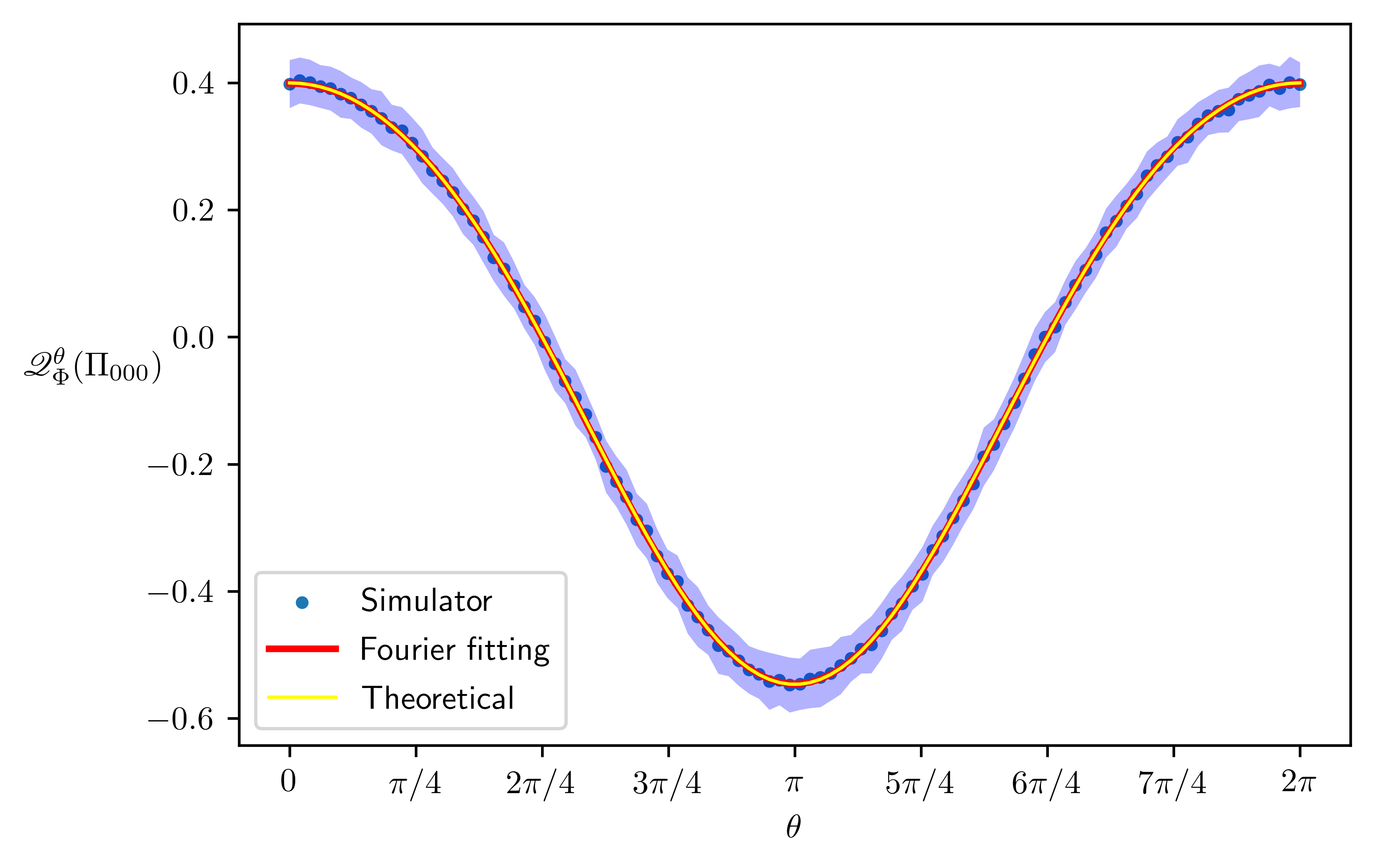}
 \caption{\raggedright Detecting quantum noise in the simulated Ry measurement.
 We uniformly sample $100$ $\theta$ from $[0, 2\pi]$.
 Experimental data on the \textbf{Simulator} are reported as the average value over $10^3$ independent estimations
 with standard error. \textbf{Fourier fitting} is plotted using the fitting coefficients obtained from
 the experimental data (cf. Algorithm~\ref{alg:quantum noise detection}).
 \textbf{Theoretical} is plotted using Eq.~\eqref{eq:theoretical-value}.}
 \label{fig:Ry-measurement}
\end{figure}

\section{Quantum noise elimination}\label{sec:quantum noise elimination}

This section describes the second stage of the quantum noise manipulation process, introducing three techniques for quantum noise elimination. 
The effectiveness of these methods is evaluated on the Baidu Quantum Platform.
They work by executing randomly sampled Pauli gates before the target measurement device
and conditionally flipping the measurement outcomes in such a way that the \emph{effective}
measurement---composed of random Pauli gates, the noisy measurement device, and
classical post-processing ---contains only classical noise.
These techniques are inspired by the established Pauli twirling approach~\cite{dur2005standard,wallman2016noise}. 
Notably, their fundamental distinction lies in the selection of the set of Pauli gates to be sampled.

Before presenting the technical details, we first briefly introduce the twirling technique.
For any set $\mathbb{T} \subset\cU(n)$, where $\cU(n)$ denotes the set of $n$-qubit unitary matrices,
we define the $\mathbb{T}$-twirl of a quantum channel $\cL$ as
\begin{align}\label{eq:twirling}
    \cL^{\mathbb{T}} := \frac{1}{\left|\mathbb{T}\right|} \sum_{T\in\mathbb{T}} \cT^\dagger\cL \cT,
\end{align}
where $\vert\cdot\vert$ is the size of the set and $\cT(\rho) := T \rho T^\dagger$.
Twirling involves the process of conjugating the noisy quantum channel $\mathcal{L}$ with 
a randomly selected gate $T$ from a predefined set of gates $\mathbb{T}$, 
termed the \emph{twirling set}, each time a quantum circuit is executed. 
This technique holds significant importance in the field of quantum information theory.
As an illustration, by choosing the twirling set to be the complete set of Pauli operators,
we can convert any quantum channel into a Pauli channel
whose noise elements correspond to the Pauli basis of the original noise~\cite{harper2020efficient,flammia2020efficient,harper2021fast,flammia2021pauli}.

\subsection{IZ dephasing}

Our first quantum noise elimination method is called the \emph{IZ dephasing}.
Interestingly, the idea of IZ dephasing comes from the observation that,
in an estimating expectation value task,
canceling the effect of the off-diagonal elements of the POVM elements is equivalent
to canceling the impact of the off-diagonal parts of input quantum states.
Now we formally elaborate on this observation.

The single-qubit completely dephasing channel $\Delta_1$ is defined as~\cite{nielsen2002quantum}
\begin{align}
    \Delta_1(\rho)
:= \frac{1}{2}\rho + \frac{1}{2}Z\rho Z^\dagger
 = \begin{bmatrix}
    \rho_{0,0} & 0 \\
    0 & \rho_{1,1}
   \end{bmatrix},
\end{align}
where $\rho_{i,j}$ is the element of $\rho$ in the $i$-th row and $j$-th column.
Intuitively, $\Delta_1$ erases all off-diagonal terms (w.r.t. the computational basis)
in $\rho$ while reserves the diagonal terms unchanged.
The $n$-qubit completely dephasing channel is defined to be
the $n$-th tensor product of $\Delta_1$:
\begin{align}\label{eq:completely dephasing channel}
    \Delta_n(\rho) := \Delta_1^{\otimes n}(\rho)
                  = \sum_{\bm{x}}\bra{\bm{x}}\rho\ket{\bm{x}}\proj{\bm{x}}.
\end{align}
Likewise, $\Delta_n$ erases all off-diagonal terms while reserving the diagonal terms unchanged.

For an arbitrary $n$-qubit quantum measurement channel $\cM$,
we define its \emph{$\Delta$-induced measurement channel} as
\begin{align}
    \cM^\Delta(\rho) := \sum_{\bm{x}}\tr[\Delta_n( E_{\bm{x}})\rho]\proj{\bm{x}}.
\end{align}
$\cM^\Delta$ is constructed from $\cM$ by erasing the off-diagonal elements of all POVM elements,
via the completely dephasing channel $\Delta_n$.
Notice that $\cM^\Delta$ is indeed a measurement since $\Delta_n( E_{\bm{x}})\geq0$
and
\begin{align}
  \sum_{\bm{x}}\Delta_n\left( E_{\bm{x}}\right)
  = \Delta_n\left(\sum_{\bm{x}} E_{\bm{x}}\right)
  = \Delta_n(\1)
  = \1,
\end{align}
where the last equality follows from the fact that $\Delta_n$ preserves the identity matrix.
What's more, $\cM^\Delta$ is classical since all of its POVM elements have only diagonal elements.
However, since neither $ E_{\bm{x}}$ nor $\Delta_n( E_{\bm{x}})$ are legitimate quantum states,
we are not able to realize the POVM elements $\Delta_n( E_{\bm{x}})$.
Using the fact that the adjoint map of $\Delta_n$ is itself, i.e., $\Delta_n^\dagger = \Delta_n$, we have
\begin{subequations}\label{eq:dephasing}
\begin{align}
    \cM^\Delta_n(\rho)
&= \sum_{\bm{x}}\tr[\Delta_n( E_{\bm{x}})\rho]\proj{\bm{x}} \\
&= \sum_{\bm{x}}\tr[ E_{\bm{x}}\Delta_n^\dagger(\rho)]\proj{\bm{x}} \\
&= \sum_{\bm{x}}\tr[ E_{\bm{x}}\Delta_n(\rho)]\proj{\bm{x}} \\
&= \cM(\Delta_n(\rho)).
\end{align}
\end{subequations}
This equation mathematically justifies our observation and indicates that
we can effectively implement the classical measurement $\cM^\Delta$ from the original measurement $\cM$
by completely dephasing the input quantum states first.
The above result is rigorously summarized in the following theorem.

\begin{theorem}[IZ dephasing]\label{thm: IZ dephasing}
Let $\cM$ be a measurement channel in $\cH_n$. The IZ-dephased linear map
\begin{align}\label{eq:IZ twirling measurement}
    \cM^{\operatorname{IZ}}(\cdot) := \cM^\Delta(\cdot) = \frac{1}{2^n}\sum_{P\in\{I,Z\}^{\otimes n}} \cM(P(\cdot) P),
\end{align}
is a classical measurement channel.
\end{theorem}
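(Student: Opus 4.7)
The plan is to reduce the claim to the already-established identity $\cM^\Delta(\rho)=\cM(\Delta_n(\rho))$ from Eq.~\eqref{eq:dephasing} and then recognize the $n$-qubit completely dephasing channel $\Delta_n$ as a uniform Pauli twirl over the subgroup $\{I,Z\}^{\otimes n}$. Once this identification is made, pulling the average outside by linearity of $\cM$ gives the displayed formula, and the classicality claim follows immediately from the structural properties of $\Delta_n(\Pi_{\bm x})$ that were already noted in the passage preceding the theorem.

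First I would establish the single-qubit identity $\Delta_1(\rho)=\tfrac12(\rho+Z\rho Z)$ by direct matrix multiplication: the $Z$-conjugation flips the sign of the off-diagonal entries, so averaging with the identity cancels them while leaving the diagonal intact. Extending to $n$ qubits, I would use that $\Delta_n=\Delta_1^{\otimes n}$ by its definition in~\eqref{eq:completely dephasing channel}, together with the fact that tensor products distribute over sums:
\begin{align*}
\Delta_n(\rho)
&=\bigotimes_{k=1}^n\Bigl(\tfrac12\sum_{P_k\in\{I,Z\}}P_k(\cdot)P_k\Bigr)(\rho)\\
&=\frac{1}{2^n}\sum_{P\in\{I,Z\}^{\otimes n}}P\rho P,
\end{align*}
where I use that each element of $\{I,Z\}^{\otimes n}$ is Hermitian so $P^\dagger=P$.

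Next I would combine this with Eq.~\eqref{eq:dephasing}: substituting the twirl expression for $\Delta_n$ into $\cM^\Delta(\rho)=\cM(\Delta_n(\rho))$ and using linearity of the measurement channel $\cM$ produces exactly $\cM^{\operatorname{IZ}}(\rho)=\tfrac{1}{2^n}\sum_{P\in\{I,Z\}^{\otimes n}}\cM(P\rho P)$, establishing the equality in the theorem statement. For the classicality assertion, I would verify the three POVM/classical conditions for the effective POVM $\{\Delta_n(\Pi_{\bm x})\}_{\bm x}$: positivity holds because $\Delta_n$ is completely positive; the normalization $\sum_{\bm x}\Delta_n(\Pi_{\bm x})=\Delta_n(\1)=\1$ is already shown in the text preceding the theorem; and the off-diagonal entries vanish by construction of $\Delta_n$, which is exactly the defining property of a classical POVM.

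I do not expect a serious obstacle here, since the proof is essentially an assembly of three pieces already present or implicit in the excerpt. The only delicate point worth stating carefully is the Pauli-twirl representation of $\Delta_n$; I would present this as a short standalone lemma (or inline calculation) so the reduction to the pre-theorem discussion is transparent. If one wished to be pedantic about the second clause of the classical-POVM definition (``at least one POVM element has more than one nonzero diagonal entry''), I would note that the theorem's statement treats ``classical measurement channel'' in the sense of no off-diagonal content; any ideal-measurement edge case is covered by recognizing that an ideal measurement is itself the degenerate limit of a classical one.
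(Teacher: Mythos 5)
Your proposal is correct and follows essentially the same route as the paper: the paper's argument is precisely the pre-theorem discussion (the adjoint identity $\Delta_n^\dagger=\Delta_n$ giving $\cM^\Delta(\rho)=\cM(\Delta_n(\rho))$, plus the positivity, normalization, and vanishing-off-diagonal observations for $\Delta_n(\Pi_{\bm x})$), combined with the standard twirl representation $\Delta_n(\rho)=\tfrac{1}{2^n}\sum_{P\in\{I,Z\}^{\otimes n}}P\rho P$ that you spell out explicitly. Your extra care about the second clause of the classical-POVM definition is a reasonable clarification but does not change the substance.
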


We show in Appendix~\ref{appx:iz-dephasing} that the IZ dephasing method
in Theorem~\ref{thm: IZ dephasing} can be equivalently understood within the twirling framework~\eqref{eq:twirling} via
\begin{align}\label{eq:IZ dephasing from twirling}
  \cM^{\operatorname{IZ}}(\cdot)
= \frac{1}{2^n}\sum_{P\in\{I,Z\}^{\otimes n}} P\cM(P(\cdot)P)P.
\end{align}
This alternative interpretation enables us to master all quantum noise elimination methods
in a unified way within the twirling framework, as we will show.

Operationally, Theorem~\ref{thm: IZ dephasing} motivates a simple experimental procedure
to eliminate the quantum noise of measurement devices.
Concretely, before sending the quantum states to the measurement device,
we randomly sample a Pauli gate $P$ from the set $\operatorname{IZ}\equiv\{I,Z\}^{\otimes n}$
and operate it on the quantum states. Theorem~\ref{thm: IZ dephasing} guarantees that
the averaged output classical state after measurement would not be contaminated by quantum
noise, though it still suffers from the classical noise of the measurement device.
Extra elimination costs are incurred due to random sampling and averaging.
The detailed procedure is summarized in Algorithm~\ref{alg:IZ dephasing}.

\begin{algorithm}[H]
\caption{Quantum noise elimination via IZ dephasing}
\begin{algorithmic}[1] \label{alg:IZ dephasing}
\REQUIRE $\bm{ E}$: the $n$-qubit measurement device, \\
\hskip1.4em $\rho$: the input $n$-qubit quantum state, \\
\hskip1.4em $K$: number of $n$-qubit Pauli operators to be sampled, \\
\hskip1.4em $N_{\rm shots}$: number of measurement shots.

\ENSURE $\hat{p}$, an estimate of $\opn{diag}(\rho)$ contaminated by classical noise.

\FOR{$k=1,\cdots,K$} 

\STATE Randomly sample a $n$-qubit Pauli gate $P_k\in\{I,Z\}^{\otimes n}$;

\STATE Generate $N_{\rm shots}$ number of copies of the quantum state $\rho$
      and apply the sampled Pauli gate $P_k$ to each of the quantum states;

\STATE Perform the device $\bm{ E}$ on the above quantum states
       and record the number of occurrences $N_{\bm{x}}^{(k)}$ of outcome $\bm{x}\in\{0,1\}^n$;
\STATE Estimate an empirical probability distribution $\hat{p}^{(k)}$ via
      \begin{align*}
          \hat{p}^{(k)}_{\bm{x}} = N_{\bm{x}}^{(k)}/N_{\rm shots},\; \forall\bm{x}\in\{0,1\}^n.
      \end{align*}
\ENDFOR

\STATE Calculate the mean value $\hat{p}$ from the estimated empirical probability
      distributions $\{\hat{p}^{(k)}:k=1,\cdots,K\}$ via
      \begin{align*}
          \hat{p}_{\bm{x}} = \frac{1}{K}\sum_{k=1}^K\hat{p}^{(k)}_{\bm{x}}, \; \forall\bm{x}\in\{0,1\}^n.
      \end{align*}
      Notice that $\hat{p}$ is an unbiased estimate of $\opn{diag}(\rho)$
        contaminated only by classical noise of $\bm{ E}$.
\STATE Output $\hat{p}$.
\end{algorithmic}
\end{algorithm}

\subsection{XY twirling}

Our second quantum noise elimination method is called the \emph{XY twirling}.
This method is inspired by a careful inspection on the PTM representation of measurement channels.
Specifically, we infer from Eq.~\eqref{eq:ptm of measurement} that,
the PTM $[\cM^c]$ of a classical measurement channel $\cM^c$ has a desirable property
that $[\cM^c]_{\bm{i}\bm{j}} = 0$ for arbitrary $P_{\bm{j}}\not\in\{I,Z\}^{\otimes n}$.
That is to say, only matrix elements whose column indices corresponding to Pauli operators
in the subset $\{I,Z\}^{\otimes n}$ have non-zero values.
On the other hand, the PTM $[\cM]$ of a general measurement channel $\cM$ does not satisfy such property.
There might exist some column index $\bm{j}$ such that
$P_{\bm{j}}\not\in\left\{I,Z\right\}^{\otimes n}$ and $\left[\cM\right]_{\bm{i}\bm{j}} \neq 0$.
This observation yields an intuitive way to deal with quantum noise:
If we can erase all those elements in the PTM matrix that are unique for quantum measurement,
we obtain a new PTM matrix that corresponds to a classical measurement.
Fortunately, we find that twirling an unknown measurement channel
with the set $\operatorname{XY} = \{X, Y\}^{\otimes n}$ can erase all those unique elements
and achieve our goal. We formally state this finding in the following theorem, and the proof is given in Appendix~\ref{appx:xy-twirling}.

\begin{theorem}[XY twirling]\label{thm: XY twirling}
Let $\cM$ be a measurement channel in $\cH_n$. The XY-twirled linear map
\begin{align}\label{eq:XY twirling measurement}
  \cM^{\operatorname{XY}}(\cdot)
:= \frac{1}{2^n}\sum_{P\in\{X,Y\}^{\otimes n}}  P\cM(P(\cdot) P)P
\end{align}
is a classical measurement channel.
\end{theorem}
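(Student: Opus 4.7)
My plan is to pass to the POVM picture, identify the effective POVM of the twirled channel, and reduce the theorem to showing that every effective POVM element is diagonal in the computational basis. The crucial input is the observation that every $P \in \{X,Y\}^{\otimes n}$ flips all qubits, so that $P\proj{\bm{x}}P = \proj{\bm{x}\oplus\bm{1}}$, where $\bm{1}$ is the all-ones string and the global phase cancels between the bra and ket.

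Using this and inserting the POVM form $\cM(\rho) = \sum_{\bm{x}} \tr[\Pi_{\bm{x}}\rho]\proj{\bm{x}}$ into the definition of $\cM^{\operatorname{XY}}$, I would relabel the output index $\bm{y} := \bm{x}\oplus\bm{1}$ and bring out the input side using cyclicity of the trace, obtaining
\begin{align*}
\cM^{\operatorname{XY}}(\rho) = \sum_{\bm{y}} \tr\!\left[\wt{\Pi}_{\bm{y}}\,\rho\right]\proj{\bm{y}}, \qquad \wt{\Pi}_{\bm{y}} := \frac{1}{2^n}\!\!\sum_{P\in\{X,Y\}^{\otimes n}}\!\! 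P\,\Pi_{\bm{y}\oplus\bm{1}}\,P.
\end{align*}
Checking the POVM axioms is immediate: each $\wt{\Pi}_{\bm{y}} \geq 0$ since it is a convex combination of unitary conjugates of a positive operator, and $\sum_{\bm{y}} \wt{\Pi}_{\bm{y}} = \frac{1}{2^n}\sum_P P\1 P = \1$. So $\cM^{\operatorname{XY}}$ is a bona fide measurement channel, and the only thing left is to prove that every $\wt{\Pi}_{\bm{y}}$ is diagonal.

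For the diagonality step I would expand in the Pauli basis and test against an arbitrary Pauli $Q = \bigotimes_k Q_k$. Using $P^2 = \1$ and $PQP = s_P(Q)\,Q$ with $s_P(Q) \in \{\pm 1\}$ encoding commutation, one gets
\begin{align*}
\tr\!\left[Q\,\wt{\Pi}_{\bm{y}}\right] = \tr\!\left[Q\,\Pi_{\bm{y}\oplus\bm{1}}\right]\cdot \frac{1}{2^n}\!\!\sum_{P\in\{X,Y\}^{\otimes n}}\!\! s_P(Q).
\end{align*}
Because both $Q$ and the twirling set factorize across qubits, so does the sign average: it equals $\prod_k \tfrac{1}{2}\bigl(s_X(Q_k)+s_Y(Q_k)\bigr)$. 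Enumerating the four single-qubit options, this local factor equals $+1$ for $Q_k = I$, $-1$ for $Q_k = Z$, and $0$ for $Q_k \in \{X,Y\}$ (since $X$ and $Y$ contribute opposite signs whenever $Q_k$ is an off-diagonal Pauli). Hence the product vanishes as soon as any $Q_k \in \{X,Y\}$, i.e. whenever $Q \notin \{I,Z\}^{\otimes n}$, which is precisely the statement that $\wt{\Pi}_{\bm{y}}$ has only diagonal entries in the computational basis.

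The main obstacle is really conceptual rather than technical: one has to recognize why $\{X,Y\}^{\otimes n}$ is the right twirling set. The answer is twofold---its elements are all full bit-flips, which lets us cleanly relabel the classical output register, and locally the pair $\{X,Y\}$ is exactly what is needed to average out the off-diagonal $X$- and $Y$-components of the POVM while preserving the diagonal $I$- and $Z$-components. Once this structural role is spotted, the remaining single-qubit commutation bookkeeping is routine and the tensor factorization finishes the job.
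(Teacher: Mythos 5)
Your proof is correct, and it takes a somewhat different route from the paper's. The paper works in the PTM representation throughout: it computes $[\cM^{\operatorname{XY}}]_{\bm{i}\bm{j}} = \bigl(\tfrac{1}{2^n}\sum_{P_{\bm{k}}\in\{X,Y\}^{\otimes n}}(-1)^{\langle\bm{i}+\bm{j},\bm{k}\rangle}\bigr)[\cM]_{\bm{i}\bm{j}}$, shows via symplectic bookkeeping that the columns indexed by $P_{\bm{j}}\notin\{I,Z\}^{\otimes n}$ vanish, and then invokes Proposition~\ref{prop:from-PTM-to-POVM} to convert the PTM back into POVM elements and read off that their off-diagonal entries are zero. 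You instead stay in the POVM picture: you identify the effective POVM $\wt{\Pi}_{\bm{y}} = \tfrac{1}{2^n}\sum_P P\,\Pi_{\bm{y}\oplus\bm{1}}\,P$ directly (using that every $P\in\{X,Y\}^{\otimes n}$ acts as a full bit-flip on the classical register, with phases cancelling), verify the POVM axioms, and then kill the non-$\{I,Z\}^{\otimes n}$ Pauli components by the factorized sign average $\prod_k\tfrac12(s_X(Q_k)+s_Y(Q_k))$. The underlying cancellation is the same character-sum computation in both arguments, but your version is more elementary and self-contained (no PTM machinery, no appeal to Proposition~\ref{prop:from-PTM-to-POVM}), and it has the added benefit of making the outcome relabeling $\bm{y}\mapsto\bm{y}\oplus\bm{1}$ explicit, which is exactly the bit-flip post-processing prescribed in Algorithm~\ref{alg:XY twirling}. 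What the paper's PTM route buys in exchange is uniformity---the same calculation handles IZ dephasing, XY twirling, and Pauli twirling, and the resulting explicit PTM (including the $(-1)^{(\bm{i}+\bm{j})_z}$ sign on the surviving entries, which your sign table also reproduces) is reused later in the proof of Proposition~\ref{prop:fidelity-preserving}.
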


\begin{algorithm}[H]
\caption{Quantum noise elimination via XY twirling}
\begin{algorithmic}[1] \label{alg:XY twirling}
\REQUIRE $\bm{ E}$: the $n$-qubit measurement device, \\
\hskip1.4em $\rho$: the input $n$-qubit quantum state, \\
\hskip1.4em $K$: number of $n$-qubit Pauli operators to be sampled, \\
\hskip1.4em $N_{\rm shots}$: number of measurement shots.

\ENSURE $\hat{p}$, an estimate of $\opn{diag}(\rho)$ contaminated by classical noise.

\FOR{$k=1,\cdots,K$} 

\STATE Randomly sample a $n$-qubit Pauli gate $P_k\in\{X,Y\}^{\otimes n}$;

\STATE Generate $N_{\rm shots}$ number of copies of the quantum state $\rho$
      and apply the sampled Pauli gate $P_k$ to each of the quantum states;

\STATE Perform the device $\bm{ E}$ on the above quantum states
       and record the number of occurrences $N_{\bm{x}}^{(k)}$ of outcome $\bm{x}\in\{0,1\}^n$;
\STATE Estimate an empirical probability distribution $\hat{p}^{(k)}$ via
      \begin{align*}
          \hat{p}^{(k)}_{\bm{x}} = N_{\overline{\bm{x}}}^{(k)}/N_{\rm shots},\; \forall\bm{x}\in\{0,1\}^n,
      \end{align*}
      where $\overline{\bm{x}}$ is obtained from $\bm{x}$ by flipping all bits.
\ENDFOR

\STATE Calculate the mean value $\hat{p}$ from the estimated empirical probability
      distributions $\{\hat{p}^{(k)}:k=1,\cdots,K\}$ via
      \begin{align*}
          \hat{p}_{\bm{x}} = \frac{1}{K}\sum_{k=1}^K\hat{p}^{(k)}_{\bm{x}}, \; \forall\bm{x}\in\{0,1\}^n.
      \end{align*}
      Notice that $\hat{p}$ is an unbiased estimate of $\opn{diag}(\rho)$
        contaminated only by classical noise of $\bm{ E}$.
\STATE Output $\hat{p}$.
\end{algorithmic}
\end{algorithm}

Much like Algorithm~\ref{alg:IZ dephasing},
Theorem~\ref{thm: XY twirling} stimulates the following experimental procedure
to eliminate the quantum noise of measurement devices.
Specifically, before sending the quantum states to the measurement device,
we randomly sample a Pauli gate $P$ from the twirling set $\operatorname{XY}\equiv\{X, Y\}^{\otimes n}$
and operate it on the quantum states. Then we measure the transformed states.
After measurement, we have to operate the Pauli gate $P$ again on the classical output states
to accomplish the twirling procedure,
which is different from Algorithm~\ref{alg:IZ dephasing}.
The tricky point is that since a measurement channel takes a quantum system to a classical one
and the logical output states are classical,
operating Pauli gates sampled from $\operatorname{XY}$
is equivalent to flipping the outcome bits, i.e., $0\to 1$ and $1\to 0$,
which can be done through classical post-processing.
Theorem~\ref{thm: XY twirling} guarantees that
the final averaged output classical state approximates
the diagonal part of the input quantum state $\rho$,
but is suffering from the classical noise of the measurement device.
The detailed procedure is summarized in Algorithm~\ref{alg:XY twirling}.

\subsection{Pauli twirling}

Our last quantum noise elimination method is the \emph{Pauli twirling}.
This method has the same spirit as the XY twirling method and aims to erase the
elements in the PTM matrix that are unique for quantum measurement.
Indeed, the idea of using Pauli twirling to eliminate quantum noise of measurement devices
has been previously conceived in~\cite{wallman2016noise}.
Here we present their proposal in a rigorous and experimental friendly way in the following proposition.
The proof is given in Appendix~\ref{appx:pauli-twirling}.

\begin{proposition}[Pauli twirling]\label{prop: Pauli twirling}
Let $\cM$ be a measurement channel in $\cH_n$. The Pauli-twirled linear map
\begin{align}\label{eq:Pauli twirling measurement}
  \cM^{\operatorname{Pauli}}(\cdot)
:= \frac{1}{4^n}\sum_{P\in\{I,X,Y,Z\}^{\otimes n}}  P\cM(P(\cdot) P)P
\end{align}
is a classical measurement channel.
\end{proposition}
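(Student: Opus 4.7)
The plan is to prove Proposition~\ref{prop: Pauli twirling} through the Pauli transfer matrix (PTM) representation, in the same spirit as Theorems~\ref{thm: IZ dephasing} and~\ref{thm: XY twirling}. There are two things to establish: first, that $\cM^{\operatorname{Pauli}}$ is a legitimate measurement channel, and second, that its POVM elements are diagonal in the computational basis.

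For the first point, I would note that every summand $\rho\mapsto P\cM(P\rho P)P$ is CPTP as a composition of CPTP maps, and outputs a classical state, since $\cM$ outputs a classical state and the outer $P$ only permutes the computational basis (up to phases that cancel between bra and ket). The convex average $\cM^{\operatorname{Pauli}}$ therefore takes the form $\rho\mapsto\sum_{\bm{y}}\tr[\Pi'_{\bm{y}}\rho]|\bm{y}\rangle\langle\bm{y}|$ for some valid POVM $\{\Pi'_{\bm{y}}\}$, so it is a measurement channel.

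For the second point, I would compute the PTM of $\cM^{\operatorname{Pauli}}$ directly. Writing $PP_{\bm{k}}P=\epsilon(P,P_{\bm{k}})P_{\bm{k}}$ with $\epsilon(P,Q)\in\{+1,-1\}$ the commutation indicator, linearity of $\cM$ together with cyclicity of the trace gives
\begin{align*}
[\cM^{\operatorname{Pauli}}]_{\bm{i}\bm{j}}=\biggl(\frac{1}{4^n}\sum_{P\in\{I,X,Y,Z\}^{\otimes n}}\epsilon(P,P_{\bm{i}})\epsilon(P,P_{\bm{j}})\biggr)[\cM]_{\bm{i}\bm{j}}.
\end{align*}
Using the multiplicative identity $\epsilon(P,P_{\bm{i}})\epsilon(P,P_{\bm{j}})=\epsilon(P,P_{\bm{i}}P_{\bm{j}})$ together with the character-orthogonality fact that $\frac{1}{4^n}\sum_P\epsilon(P,Q)$ equals $1$ when $Q\propto\1$ and $0$ otherwise, the bracketed sum collapses to $\delta_{\bm{i}\bm{j}}$, yielding $[\cM^{\operatorname{Pauli}}]_{\bm{i}\bm{j}}=\delta_{\bm{i}\bm{j}}[\cM]_{\bm{i}\bm{i}}$.

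To finish, I would combine this with the observation that $\langle\bm{x}|P_{\bm{i}}|\bm{x}\rangle=0$ whenever $P_{\bm{i}}\notin\{I,Z\}^{\otimes n}$, so $[\cM]_{\bm{i}\bm{i}}=0$ by Eq.~\eqref{eq:ptm of measurement}. Hence $[\cM^{\operatorname{Pauli}}]$ is supported entirely on the $\{I,Z\}^{\otimes n}\times\{I,Z\}^{\otimes n}$ block, and Proposition~\ref{prop:from-PTM-to-POVM} forces all POVM elements $\Pi'_{\bm{y}}$ to be diagonal in the computational basis. The subtle step will be the character-sum identity; although standard, it needs a careful justification, which I would provide either by tensor factorization into the elementary single-qubit check $\frac{1}{4}\sum_{P\in\{I,X,Y,Z\}}\epsilon(P,Q)=\delta_{Q,\pm I}$, or by appealing to the well-known fact that Pauli twirling diagonalizes the PTM of any linear map.
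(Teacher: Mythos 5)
Your proposal is correct and follows essentially the same route as the paper's Appendix proof: both compute the PTM of the twirled map, show the full Pauli twirl kills all off-diagonal PTM entries (your character-orthogonality identity is the same computation as the paper's symplectic sum $\frac{1}{4^n}\sum_{\bm{k}}(-1)^{\langle\bm{i}+\bm{j},\bm{k}\rangle}=\delta_{\bm{i}\bm{j}}$), combine this with the fact that $[\cM]_{\bm{i}\bm{j}}$ vanishes unless $P_{\bm{i}}\in\{I,Z\}^{\otimes n}$, and invoke Proposition~\ref{prop:from-PTM-to-POVM} to conclude the POVM elements are diagonal. Your explicit check that $\cM^{\operatorname{Pauli}}$ is a bona fide measurement channel is a small addition the paper leaves implicit, but it does not change the structure of the argument.
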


Comparing Eqs.~\eqref{eq:IZ twirling measurement},~\eqref{eq:XY twirling measurement},
and~\eqref{eq:Pauli twirling measurement}, we can see that
the Pauli twirling method has a much larger twirling set that
incorporates the twirling sets of IZ dephasing and XY twirling.
As before, Proposition~\ref{prop: Pauli twirling} motivates a simple experimental procedure
to eliminate the quantum noise of measurement devices.
Concretely, before sending the quantum states to the measurement device,
we randomly sample a Pauli gate $P$ from the set $\operatorname{Pauli}\equiv\{I, X, Y, Z\}^{\otimes n}$
and operate it on the quantum states. Then we measure the transformed states.
After measurement,
we have to operate the Pauli gate $P$ again on the output quantum states to accomplish the twirling procedure.
Different from Algorithm~\ref{alg:XY twirling},
here we have to flip the outcome bits conditionally,
since $I$ and $Z$ gates have different behavior to $X$ and $Y$ gates on the computational basis states.
That is, we will flip the $n$-bit binary string $\bm{x}$ conditioned on the Pauli gate $P$:
if the $i$-th operator of $P$ is $I$ or $Z$, we do not flip the $i$-th bit of $\bm{x}$;
if the $i$-th operator of $P$ is $X$ or $Y$, we flip the $i$-th bit of $\bm{x}$.
Proposition~\ref{prop: Pauli twirling} guarantees that
the averaged output approximates the diagonal part of the input quantum state $\rho$,
which only suffers from the classical noise of the measurement device.
The detailed procedure is summarized in Algorithm~\ref{alg:Pauli twirling}.

\begin{algorithm}[H]
\caption{Quantum noise elimination via Pauli twirling}
\begin{algorithmic}[1] \label{alg:Pauli twirling}
\REQUIRE $\bm{ E}$: the $n$-qubit measurement device, \\
\hskip1.4em $\rho$: the input $n$-qubit quantum state, \\
\hskip1.4em $K$: number of $n$-qubit Pauli operators to be sampled, \\
\hskip1.4em $N_{\rm shots}$: number of measurement shots.

\ENSURE $\hat{p}$, an estimate of $\opn{diag}(\rho)$ contaminated by classical noise.

\FOR{$k=1,\cdots,K$} 

\STATE Randomly sample a $n$-qubit Pauli gate $P_k\in\{I,X,Y,Z\}^{\otimes n}$;

\STATE Generate $N_{\rm shots}$ number of copies of the quantum state $\rho$
      and apply the sampled Pauli gate $P_k$ to each of the quantum states;

\STATE Perform the device $\bm{ E}$ on the above quantum states
       and record the number of occurrences $N_{\bm{x}}^{(k)}$ of outcome $\bm{x}\in\{0,1\}^n$;
\STATE Estimate an empirical probability distribution $\hat{p}^{(k)}$ via
      \begin{align*}
          \hat{p}^{(k)}_{\bm{x}} = N_{\overline{\bm{x}}}^{(k)}/N_{\rm shots},\; \forall\bm{x}\in\{0,1\}^n,
      \end{align*}
      where $\overline{\bm{x}}$ is obtained from $\bm{x}$ by conditionally flipping the bits
      based on the Pauli operator $P_k$: If the $i$-th operator of $P_k$ is $X$ or $Y$,
      then flip the $i$-th bit of $\bm{x}$; otherwise, do not flip.
\ENDFOR

\STATE Calculate the mean value $\hat{p}$ from the estimated empirical probability
      distributions $\{\hat{p}^{(k)}:k=1,\cdots,K\}$ via
      \begin{align*}
          \hat{p}_{\bm{x}} = \frac{1}{K}\sum_{k=1}^K\hat{p}^{(k)}_{\bm{x}}, \; \forall\bm{x}\in\{0,1\}^n.
      \end{align*}
      Notice that $\hat{p}$ is an unbiased estimate of $\opn{diag}(\rho)$
        contaminated only by classical noise of $\bm{ E}$.
\STATE Output $\hat{p}$.
\end{algorithmic}
\end{algorithm}

\subsection{Comparisons}

We have proposed three methods to eliminate quantum noise in measurement devices.
Here we discuss the similarities, advantages, and differences among these methods.

First, we note that all these methods can be described in a unified way within the twirling framework.
Following the compiling idea of~\cite{wallman2016noise},
these methods can be experimentally implemented with only classical pre-and post-processing, without
\emph{truly} inserting the sampled Pauli gates before the measurement device.
More precisely, we can compile the sampled Pauli gate with the last gates in the quantum circuit
that generates the quantum state, optimizing two quantum gates into one.
This compiling requires only a tiny classical overhead in the compilation cost and
can be implemented on the fly with fast classical control.

The IZ dephasing method utilizes a twirling set of size $2^n$ and does not require bits to flip after measurement.
Besides, since we do not need to insert `I' gates in practice, the number of quantum gates
inserted is $n2^{n-1}$. What's more, from Eq.~\eqref{eq:dephasing} we can see that this method removes off-diagonal values but preserves diagonal values of the POVM elements.
That is, we do not change the classical behavior of the measurement device.
The XY twirling method utilizes a twirling set of size $2^n$ but requires bits to flip after measurement, which incurs extra classical possessing cost.
The total number of quantum gates inserted is $n2^n$, twice of the former method.
Moreover, it removes not only off-diagonal values but also changes diagonal values of the POVM elements,
causing unpredictable classical behavior.
The Pauli twirling method is the most expensive since
it utilizes a twirling set of size $4^n$ and requires a conditional flip after measurement.
The total number of quantum gates inserted is $n4^n$, much larger than the former methods.
It removes not only off-diagonal values but also regularizes diagonal values of
POVM elements, so that the diagonal values of different POVM elements are the same but reordered.

Notice that some methods not only erase the off-diagonal values but also change the diagonal values
of original POVM elements. We are interested in whether they will dramatically alter the
fidelity of the measurement device, which is essential since if these methods result in substantial
decay in the fidelity, we have to scarify fidelity for quantum noise.
Luckily, we can show that the measurement fidelity is robust to these methods
in the sense that it remains unchanged by these methods.
Before presenting the main result, let's first define the measurement fidelity.
The measurement fidelity, also known as readout fidelity and assignment fidelity,
of a quantum measurement $\cM$ (with respect to the computational basis measurement)
is defined as~\cite{magesan2015machine}
\begin{align}
    f(\cM) := \frac{1}{2^n}\sum_{\bm{x}\in\{0,1\}^n}\langle \bm{x}\vert E_{\bm{x}}\vert\bm{x}\rangle.
\end{align}
Intuitively, $f(\cM)$ quantifies how well $\cM$ preserves the computational basis states on average.
Then, we have the following identification among the measurement fidelities of the effective measurements.
The proof is given in Appendix~\ref{appx:fidelity-preserving}.

\begin{proposition}\label{prop:fidelity-preserving}
Let $\cM$ be a measurement channel in $\cH_n$. It holds that
\begin{align}
    f(\cM) = f(\cM^{\operatorname{IZ}}) = f(\cM^{\operatorname{XY}}) = f(\cM^{\operatorname{Pauli}}).
\end{align}
\end{proposition}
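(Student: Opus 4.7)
The plan is to handle each identity in turn by computing the effective POVM elements of the twirled measurement channel and then reading the fidelity off directly from its definition. The unifying observation is that any $P\in\{I,X,Y,Z\}^{\otimes n}$ acts on computational basis states as $P\ket{\bm{x}} = e^{i\phi_P(\bm{x})}\ket{\pi_P(\bm{x})}$, where $\pi_P$ is the bit-flip permutation that flips bit $i$ iff the $i$-th tensor factor of $P$ lies in $\{X,Y\}$, and where the phases $e^{i\phi_P(\bm{x})}$ cancel whenever one sandwiches $P(\cdot)P$ against the same basis state. All three twirled channels then take the Kraus-like form $\sum_{\bm{y}} \tr[\tilde\Pi_{\bm{y}}\rho]\proj{\bm{y}}$ for some effective POVM $\{\tilde\Pi_{\bm{y}}\}$, so that $f(\cM^{\operatorname{T}}) = \frac{1}{2^n}\sum_{\bm{y}}\bra{\bm{y}}\tilde\Pi_{\bm{y}}\ket{\bm{y}}$.

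For the IZ case, Theorem~\ref{thm: IZ dephasing} together with the self-adjointness of $\Delta_n$ shows that $\tilde\Pi_{\bm{x}} = \Delta_n(\Pi_{\bm{x}})$. Since $\Delta_n$ zeros out off-diagonal entries but preserves diagonal ones, $\bra{\bm{x}}\Delta_n(\Pi_{\bm{x}})\ket{\bm{x}} = \bra{\bm{x}}\Pi_{\bm{x}}\ket{\bm{x}}$, and summing gives $f(\cM^{\operatorname{IZ}}) = f(\cM)$ immediately.

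For the XY case, I first push the outer $P$ onto $\proj{\bm{x}}$ inside the definition of $\cM^{\operatorname{XY}}$. Because every $P\in\{X,Y\}^{\otimes n}$ sends $\ket{\bm{x}}$ to $\ket{\overline{\bm{x}}}$ up to a phase, this collapses to $P\proj{\bm{x}}P = \proj{\overline{\bm{x}}}$; relabeling $\bm{y}=\overline{\bm{x}}$ then identifies the effective POVM element as $\tilde\Pi_{\bm{y}} = \frac{1}{2^n}\sum_{P} P\Pi_{\overline{\bm{y}}}P$. The phases cancel in the diagonal entry, yielding $\bra{\bm{y}}\tilde\Pi_{\bm{y}}\ket{\bm{y}} = \bra{\overline{\bm{y}}}\Pi_{\overline{\bm{y}}}\ket{\overline{\bm{y}}}$, and summing over $\bm{y}$ together with the bijection $\bm{y}\mapsto\overline{\bm{y}}$ on $\{0,1\}^n$ delivers $f(\cM^{\operatorname{XY}}) = f(\cM)$.

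The Pauli case follows the same script with one twist: the bit-flip $\pi_P$ now depends on $P$. One obtains $\tilde\Pi_{\bm{y}} = \frac{1}{4^n}\sum_P P\Pi_{\pi_P(\bm{y})}P$, reduces $\bra{\bm{y}}\tilde\Pi_{\bm{y}}\ket{\bm{y}}$ to $\frac{1}{4^n}\sum_P \bra{\pi_P(\bm{y})}\Pi_{\pi_P(\bm{y})}\ket{\pi_P(\bm{y})}$ by cancelling phases, and then, for each fixed $P$, invokes that $\pi_P$ is a bijection on $\{0,1\}^n$ to see that summing over $\bm{y}$ simply reshuffles the $2^n$ terms of $f(\cM)$. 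The main obstacle is not conceptual but bookkeeping: the conditional bit-flip makes it easy to confuse the ``outcome'' index with the POVM element label in the Pauli case, so I would write the relabeling as an explicit change of variables in $\bm{y}$ (for each $P$ separately) rather than appealing to an informal symmetry argument.
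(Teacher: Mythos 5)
Your proof is correct, and it takes a genuinely different route from the paper's. The paper first derives, via Proposition~\ref{prop:from-PTM-to-POVM}, the identity $f(\cM) = \frac{1}{2^n}\sum_{P_{\bm{i}}\in\{I,Z\}^{\otimes n}}[\cM]_{\bm{i}\bm{i}}$, i.e.\ that the fidelity is a function of the diagonal PTM entries alone, and then observes that the explicit PTM formulas already computed for $\cM^{\operatorname{IZ}}$, $\cM^{\operatorname{XY}}$, and $\cM^{\operatorname{Pauli}}$ all leave those diagonal entries untouched. You instead stay entirely in the POVM picture: you identify the effective POVM element of each twirled channel as an average of $P\Pi_{\sigma(\bm{y})}P$ over the twirling set (with $\sigma$ the identity, the global bit-flip, or the $P$-dependent flip $\pi_P$), use the fact that conjugating by a Pauli and sandwiching against the matching basis state kills the phases and merely permutes the diagonal, and close with a change of variables over $\{0,1\}^n$. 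Your argument is more elementary and self-contained (it needs no PTM machinery and, in the IZ case, only the self-adjointness of $\Delta_n$), while the paper's buys a structural byproduct --- the expression of $f$ in terms of $[\cM]_{\bm{i}\bm{i}}$ --- and makes the proposition a one-line corollary of computations it has already done for the twirling theorems. Your closing caution about keeping the outcome index and the POVM label distinct in the Pauli case is well placed: the relabeling $\bm{x}=\pi_P(\bm{y})$ must be performed separately for each $P$ before the sums are exchanged, and writing it as an explicit change of variables, as you propose, is exactly the right way to make that step airtight.
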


Proposition~\ref{prop:fidelity-preserving} implies that
although Pauli twirling employs many more Pauli gates to eliminate quantum noise,
it cannot make the resulting effective measurement channel in the sense of measurement fidelity.
One thus wonders if the Pauli twirling technique could bring any advantage.
Indeed, we show that Pauli twirling can simplify the classical noise process
by regularizing the POVM elements, as formally stated in the following proposition.
The proof is given in Appendix~\ref{appx:regularize povm}.

\begin{proposition}\label{prop:regularize povm}
Let $\cM$ be a measurement channel in $\cH_n$ and
$\cM^{\operatorname{Pauli}}$ be its Pauli-twirled channel.
It holds that
\begin{align}\label{eq:regularize povm}
    \operatorname{diag}( E_{\bm{y}}^{\operatorname{Pauli}}) = \mathsf{T}_{\bm{y}}\mathsf{T}_{\bm{x}}^{-1} \operatorname{diag}( E_{\bm{x}}^{\operatorname{Pauli}}),
\end{align}
where the transition matrix $\mathsf{T}_{{\bm{x}}}$ is defined as
\begin{align}
    \left(\mathsf{T}_{{\bm{x}}}\right)_{\bm{i}\bm{j}} := (-1)[({\bm{x}}+\bm{i})\cdot \bm{j}].
\end{align}
where $\bm{a}\cdot \bm{b}$ is the bitwise inner product between $\bm{a}$ and $\bm{b}$ and
\begin{align}
(-1)[\bm{a}] := \prod_{\bm{a}_i\in\bm{a}} (-1)^{\bm{a}_i}.
\end{align}
\end{proposition}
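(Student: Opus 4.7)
The plan is to show that $\operatorname{diag}(\Pi_{\bm{y}}^{\operatorname{Pauli}})$ is the $(\bm{x}\oplus\bm{y})$-shift permutation of $\operatorname{diag}(\Pi_{\bm{x}}^{\operatorname{Pauli}})$ and then to identify $\mathsf{T}_{\bm{y}}\mathsf{T}_{\bm{x}}^{-1}$ with exactly that shift by factorising $\mathsf{T}_{\bm{x}}$ through the Walsh--Hadamard transform. First I would unpack the definition~\eqref{eq:Pauli twirling measurement} into an explicit POVM formula. Writing every $n$-qubit Pauli as $P=D_PX^{\bm{p}(P)}$, with $D_P$ a diagonal Pauli and $\bm{p}(P)\in\{0,1\}^n$ the $X$/$Y$-flip pattern, the identity $P\ketbra{\bm{x}}{\bm{x}}P=\ketbra{\bm{x}\oplus\bm{p}(P)}{\bm{x}\oplus\bm{p}(P)}$ together with cyclicity of the trace and $P^2=\1$ reorganises the twirled measurement into
\begin{equation*}
\Pi_{\bm{y}}^{\operatorname{Pauli}} = \frac{1}{4^n}\sum_{P\in\{I,X,Y,Z\}^{\otimes n}}P\,\Pi_{\bm{y}\oplus\bm{p}(P)}\,P.
\end{equation*}

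The second step exploits two observations: on the diagonal $\langle \bm{z}|P\Pi P|\bm{z}\rangle=\langle \bm{z}\oplus\bm{p}(P)|\Pi|\bm{z}\oplus\bm{p}(P)\rangle$ because the $\pm i$ phase contributed by $Y$ cancels against its conjugate, and exactly $2^n$ Paulis realise each flip pattern $\bm{p}$. Collecting terms and substituting $\bm{w}=\bm{z}\oplus\bm{p}$ yields
\begin{equation*}
\langle\bm{z}|\Pi_{\bm{y}}^{\operatorname{Pauli}}|\bm{z}\rangle = \frac{1}{2^n}\sum_{\bm{w}}\langle \bm{w}|\Pi_{\bm{y}\oplus\bm{z}\oplus\bm{w}}|\bm{w}\rangle,
\end{equation*}
which depends on $\bm{y}$ and $\bm{z}$ only through $\bm{y}\oplus\bm{z}$. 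Equating for labels $\bm{x}$ and $\bm{y}$ then gives $(\operatorname{diag}(\Pi_{\bm{y}}^{\operatorname{Pauli}}))_{\bm{z}}=(\operatorname{diag}(\Pi_{\bm{x}}^{\operatorname{Pauli}}))_{\bm{z}\oplus\bm{x}\oplus\bm{y}}$, i.e.\ an XOR-shift by $\bm{x}\oplus\bm{y}$.

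Finally I would identify this shift with $\mathsf{T}_{\bm{y}}\mathsf{T}_{\bm{x}}^{-1}$. Using $(\bm{x}\oplus\bm{i})\cdot\bm{j}\equiv \bm{x}\cdot\bm{j}+\bm{i}\cdot\bm{j}\pmod 2$, the entries factor as $(\mathsf{T}_{\bm{x}})_{\bm{i}\bm{j}} = (-1)^{\bm{i}\cdot\bm{j}}(-1)^{\bm{x}\cdot\bm{j}}$, so $\mathsf{T}_{\bm{x}}=HD_{\bm{x}}$ with $H_{\bm{i}\bm{j}}=(-1)^{\bm{i}\cdot\bm{j}}$ the unnormalised Walsh--Hadamard matrix and $D_{\bm{x}}$ the diagonal matrix of signs $(D_{\bm{x}})_{\bm{j}\bm{j}}=(-1)^{\bm{x}\cdot\bm{j}}$. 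From $H^2=2^n\,\1$ and $D_{\bm{x}}^2=\1$ I then get
\begin{equation*}
\mathsf{T}_{\bm{y}}\mathsf{T}_{\bm{x}}^{-1} = HD_{\bm{y}}D_{\bm{x}}H/2^n = HD_{\bm{x}\oplus\bm{y}}H/2^n,
\end{equation*}
and a direct calculation gives $(HD_{\bm{a}}H/2^n)_{\bm{z}\bm{w}}=\frac{1}{2^n}\sum_{\bm{k}}(-1)^{(\bm{z}\oplus\bm{a}\oplus\bm{w})\cdot\bm{k}}=\delta_{\bm{w},\bm{z}\oplus\bm{a}}$ by orthogonality of Walsh--Hadamard characters. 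Hence $\mathsf{T}_{\bm{y}}\mathsf{T}_{\bm{x}}^{-1}$ is precisely the permutation matrix implementing $\bm{z}\mapsto \bm{z}\oplus\bm{x}\oplus\bm{y}$, and applying it to $\operatorname{diag}(\Pi_{\bm{x}}^{\operatorname{Pauli}})$ reproduces $\operatorname{diag}(\Pi_{\bm{y}}^{\operatorname{Pauli}})$, proving~\eqref{eq:regularize povm}.

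The main obstacle I expect is twofold: carefully checking that $\langle\bm{z}|P\Pi P|\bm{z}\rangle$ depends on $P$ only through its flip pattern $\bm{p}(P)$ (so that the $\pm i$ phases from $Y$ all drop out) and the combinatorial bookkeeping that exactly $2^n$ of the $4^n$ Paulis share each pattern; and spotting the Hadamard-diagonal factorisation $\mathsf{T}_{\bm{x}}=HD_{\bm{x}}$, since without it the matrix $\mathsf{T}_{\bm{y}}\mathsf{T}_{\bm{x}}^{-1}$ looks opaque. Once these two ingredients are in hand, the rest of the argument is a direct application of the orthogonality of Walsh--Hadamard characters.
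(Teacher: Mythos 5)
Your proof is correct, but it takes a genuinely different route from the paper's. The paper leans on the PTM machinery developed in its Appendix H: there, $\Pi_{\bm{x}}^{\operatorname{Pauli}}$ is written as $\sum_{\bm{i}_z}(-1)[\bm{i}_z\cdot\bm{x}]\,[\cM]_{\bm{i}\bm{i}}\,Z[\bm{i}_z]$, from which one reads off $\operatorname{diag}(\Pi_{\bm{x}}^{\operatorname{Pauli}})=\mathsf{T}_{\bm{x}}\operatorname{diag}(\cM)$ for a single fixed vector $\operatorname{diag}(\cM)$ of diagonal PTM entries; the claim then follows by asserting (without proof) that $\mathsf{T}_{\bm{x}}$ has full rank and eliminating the common vector. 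You instead stay entirely in the POVM picture: unpacking the twirl into $\Pi_{\bm{y}}^{\operatorname{Pauli}}=\tfrac{1}{4^n}\sum_P P\,\Pi_{\bm{y}\oplus\bm{p}(P)}\,P$, observing that the diagonal depends only on $\bm{y}\oplus\bm{z}$, and then identifying $\mathsf{T}_{\bm{y}}\mathsf{T}_{\bm{x}}^{-1}$ as the XOR-shift permutation via the factorisation $\mathsf{T}_{\bm{x}}=HD_{\bm{x}}$. All of your intermediate steps check out (the phase cancellation for $Y$, the count of $2^n$ Paulis per flip pattern, and the character-orthogonality computation of $HD_{\bm{a}}H/2^n$). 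Your version buys three things the paper's does not make explicit: it is self-contained from the definition of the twirl rather than routed through the PTM formalism; it actually proves the invertibility of $\mathsf{T}_{\bm{x}}$ (as a product of an invertible Hadamard matrix and a sign matrix) where the paper merely states it; and it exhibits $\mathsf{T}_{\bm{y}}\mathsf{T}_{\bm{x}}^{-1}$ concretely as the permutation $\bm{z}\mapsto\bm{z}\oplus\bm{x}\oplus\bm{y}$, which makes the paper's informal remark that the POVM elements ``share the same diagonal values, reordered'' fully transparent. The paper's route is shorter if one has already absorbed Appendix H, since the relation to a single common vector $\operatorname{diag}(\cM)$ drops out immediately.
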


Proposition~\ref{prop:regularize povm} implies that there exist strong correlations
among the POVM elements of the Pauli-twirled measurement channel:
they share the same diagonal values, and their order is specified by the outcome label $\bm{x}$.
Thus, given the knowledge of an arbitrary POVM element,
we can completely infer the remaining POVM elements via Eq.~\eqref{eq:regularize povm}.
This lovely property can be employed in quantum detector tomography as a constraint
to improve the estimation accuracy of Pauli-twirled measurement channels.


\section{Experimental results}\label{sec:experimental-results}

In this section, we test the two-stage procedure with three quantum applications---estimating 
the expectation value of Mermin polynomial,
the fidelity of GHZ states, and the ground state energy of a hydrogen molecule---using
a noisy simulator with Ry measurements (cf. Figure~\ref{fig:Ry-measurement}) on Baidu Quantum Platform, 
to showcase its ability in improving computation accuracy.
The experimental data are collected via the QCompute Software Development Kit~\cite{bqp2022}.
All experiments were performed with $N_{\rm{shots}} = 2^{13}$ number of shots.

\subsection{Expectation value of Mermin polynomial}

\begin{table*}
\centering
\renewcommand{\arraystretch}{1.2} 
\setlength{\tabcolsep}{12pt} 
\begin{tabular}{@{}cccccc@{}}
        \toprule
     & \textbf{Unmitigated} & \textbf{Inverse} & \textbf{Least square} & \textbf{IBU} \\ \hline
    \textbf{Raw} &
      $10.775\pm 0.033$ &$11.335\pm0.035$ & $11.340\pm0.035$ &$11.335\pm0.035$ \\
    \textbf{IZ dephasing} &
      $10.768\pm0.008$ & $11.313\pm0.008$  &$11.318\pm0.008$ &$11.313\pm0.008$\\
    \textbf{XY twirling} &
      $10.766\pm0.008$ & $11.315\pm0.009$ & $11.320\pm0.009$ &$11.315\pm0.009$ \\
    \textbf{Pauli twirling} &
      $10.767\pm0.002$ & $11.315\pm0.002$ & $11.320\pm0.003$ &$11.315\pm0.002$ \\ \bottomrule
\end{tabular}
\caption{\raggedright
The four-qubit Mermin polynomial~\eqref{eq:Mermin polynomial} measured on Baidu Quantum Platform.
The theoretical maximum value allowed by quantum mechanics is $8\sqrt{2}\approx11.314$.
Experimental data are reported as the average value over $10^3$ independent estimations with standard error.
The \textbf{Raw}, \textbf{IZ dephasing}, \textbf{XY twirling}, and \textbf{Pauli twirling}
rows record the values without or with the corresponding quantum noise elimination methods.
The \textbf{Unmitigated}, \textbf{Inverse}, \textbf{Least square}, and \textbf{IBU} columns
record the values without or with the corresponding error mitigation methods.}
\label{tab:Mermin polynomial}
\end{table*}

Our first application is to estimate the expectation values of Mermin polynomials~\cite{mermin1990extreme},
which are one of the most significant examples to test non-local quantum correlations in multi-partite systems.
We notice that many research groups have measured Mermin polynomials on
superconducting quantum computers to assess their quantum reliability~\cite{neeley2010generation,dicarlo2010preparation,alsina2016experimental,garcia2018five,gonzalez2020revisiting,geller2021conditionally}.
We measure the following $4$-qubit Mermin polynomial
\begin{align}\label{eq:Mermin polynomial}
    M_4 &:= \left\langle XXXY\right\rangle + \left\langle XXYX\right\rangle + \left\langle XYXX\right\rangle + \left\langle YXXX\right\rangle\notag  \\
    &+ \left\langle XXYY\right\rangle + \left\langle XYXY\right\rangle + \left\langle XYYX\right\rangle + \left\langle YXXY\right\rangle\notag \\
    &+ \left\langle YXYX\right\rangle + \left\langle YYXX\right\rangle - \left\langle XXXX\right\rangle - \left\langle XYYY\right\rangle\notag  \\
    &- \left\langle YXYY\right\rangle - \left\langle YYXY\right\rangle - \left\langle YYYX\right\rangle + \left\langle YYYY\right\rangle
\end{align}
on the quantum state
\begin{align}\label{eq:Mermin-state}
    \ket{G} := \frac{\vert 0000\rangle+e^{3\pi i/4}\vert 1111\rangle}{\sqrt{2}}.
\end{align}
Note that the state $\ket{G}$ can be prepared by the quantum circuit
described in~\cite[Fig. 1(b)]{geller2021conditionally}.
We choose this state because it allows for a maximal violation of local realism.
Experimentally, the Mermin polynomial~\eqref{eq:Mermin polynomial} is measured
on the first four qubits of the simulator after preparing the quantum state~\eqref{eq:Mermin-state}.
We use the proposed elimination methods to cancel the effect of quantum noise.
Besides, we adopt four error mitigation 
methods---the inverse method (\textbf{Inverse})~\cite{maciejewski2020mitigation},
the least square method (\textbf{Least square})~\cite{chen2019detector}, and
the iterative Bayesian unfolding method (\textbf{IBU})~\cite{nachman2020unfolding}---to
reverse the classical noise effect and improve the estimation accuracy.
Note that we use the implementations of these mitigation methods in
Baidu Quantum Platform~\footnote{Specifically, we use the
\href{https://quantum-hub.baidu.com/qep/}{Quantum Error Processing}
toolkit developed on Baidu Quantum Platform.
It aims to deal with quantum errors inherent in quantum devices using software solutions and
offers various powerful quantum error processing tools.}.

The experimental results are summarized in Table~\ref{tab:Mermin polynomial}.
We find from the second column that the effects of both classical and quantum noises are significant,
as reflected in the differences between the raw and theoretical values.
Moreover, though the elimination methods alone cannot improve the estimation accuracy,
they considerably decrease the statistical error, making the estimation procedure more robust.
We find that the second row error mitigation methods in the presence of quantum noise 
would lead to an overestimation of the entanglement,
indicating that it should not be trusted in quantum foundations experiments like this.
We also find from the remaining rows that the difference in the estimated values
between the three elimination methods is statistically insignificant
and is insensitive to different error mitigation techniques.
In other words, all these methods can accurately eliminate quantum noise in the measurement device.
This justifies our expectation that the two-stage procedure and
error mitigation techniques form a standard toolbox for manipulating measurement errors.

\subsection{Fidelity of GHZ states}

Our second application is to estimate the fidelity of multi-partite GHZ states via parity oscillation,
which is a standard method for checking the entanglement of GHZ states~\cite{monz201114}.
The parity oscillation protocol works as follows:
first, we generate a $n$-qubit GHZ state using Hadamard and CNOT gates;
then, we apply the same rotation operation $U_{\phi}=e^{\mathrm{i}\pi\sigma_{\phi}/4}$ to all qubits
of the GHZ state, where $\sigma_{\phi} := \cos{\phi}X + \sin{\phi}Y$;
finally, we measure the qubits on the computational basis and
estimate the expectation value of observable $Z^{\otimes n}$.
By varying the phase $\phi$, we can observe an oscillation effect of the expectation values
and the oscillation intensity benchmarks the prepared GHZ state's entanglement quality.
Experimentally, we consider a $4$-qubit system, where the GHZ state has the form
\begin{align}\label{eq:GHZ}
    \ket{\opn{GHZ}} := \frac{\ket{0000}+\ket{1111}}{\sqrt{2}}.
\end{align}
We prepare this state on the first four qubits $\{Q_0,Q_1,Q_2,Q_3\}$ of the noisy simulator.
As in the case of the Mermin polynomial, we use the elimination methods to cancel the effect of quantum noise
and adopt the least square method to mitigate the impact of classical noise.

The experimental results are summarized in Figure~\ref{fig:GHZ state}.
We tell from the significant gap between the raw data (with and without error mitigation) 
and theoretical curves that the effect of quantum noise is quite significant.
We also tell from the minor gaps between the eliminated and theoretical curves that
all three elimination methods collaborate pretty well with the least square mitigation method.
They together remarkably improve the computation accuracy.
What's more, we conclude from the minor estimation errors
there is no statistical difference among these elimination methods.

\begin{figure}[!htbp]
    \centering
    \includegraphics[width=0.49\textwidth]{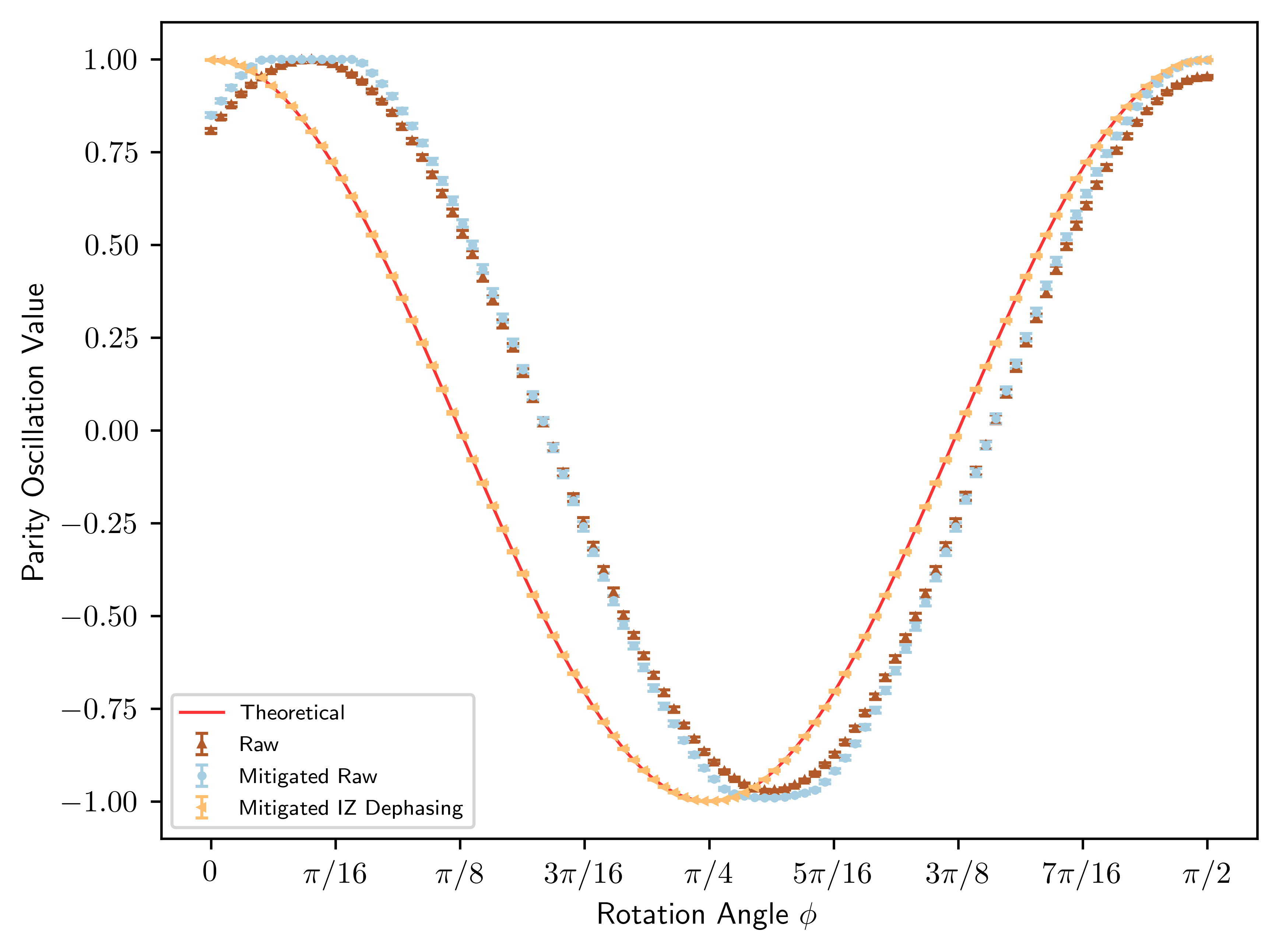}
    \caption{\raggedright
            Parity oscillation of the four-qubit GHZ state~\eqref{eq:GHZ}
            measured on Baidu Quantum Platform.
            Experimental data are reported as the average value over $10^2$ 
            independent estimations with standard error.
            \textbf{Theoretical} values (red line) are analytically calculated.
            \textbf{Raw} (brown upper triangle) and \textbf{Mitigated Raw} (blue circle) 
            are collected without or with the least square error mitigation.
            \textbf{Mitigated IZ dephasing} (yellow left triangle) values are collected by first eliminating quantum noise
            using the corresponding method and then performing least square error mitigation.}
    \label{fig:GHZ state}
\end{figure}

\subsection{Ground state energy of hydrogen molecule}

Our last application is to estimate the ground state energy of hydrogen molecules by running the
flagship algorithm Variational Quantum Eigensolver (VQE)~\cite{peruzzo2014variational} on the noisy simulator.
Briefly speaking, the inputs to a VQE algorithm are a Hamiltonian $H$ of the hydrogen molecule
and a parametrized circuit that prepares a trial state $\ket{\psi(\bm{\theta})}$ aiming to 
approximate the ground state of the molecule.
Within VQE, the cost function is defined to be the expectation value of the Hamiltonian 
computed in the trial state $f(\bm{\theta}):=\bra{\psi(\bm{\theta})}H\ket{\psi(\bm{\theta})}$,
which can be estimated by measuring $\psi(\bm{\theta})$.
The ground state of the target Hamiltonian is obtained by performing an iterative cost function minimization.
The optimization is carried out by a classical optimizer which leverages a quantum computer to evaluate the cost function and calculate its gradient at each optimization step.
The $4$-qubit Hamiltonian of hydrogen molecule we apply is
obtained from the OpenFermion library~\cite{mcclean2020openfermion}
and the exact form is given in Appendix~\ref{appx:vqe-configuration}.
We construct a $4$-qubit variational ansatz $\ket{\psi(\bm{\theta})}$
consisting of an initial state $\ket{0}^{\otimes 4}$ followed by $6$
repetitions of a layer of parameterized  single-qubit $Y$-rotations on each qubit and a layer of CZ gates between alternating qubits.
The parameters $\bm{\theta}$ are randomly initialized
and the classical optimizer is chosen as the sequential minimal optimization (SMO) method.

\begin{figure}[!htbp]
  \centering
  \includegraphics[width=0.48\textwidth]{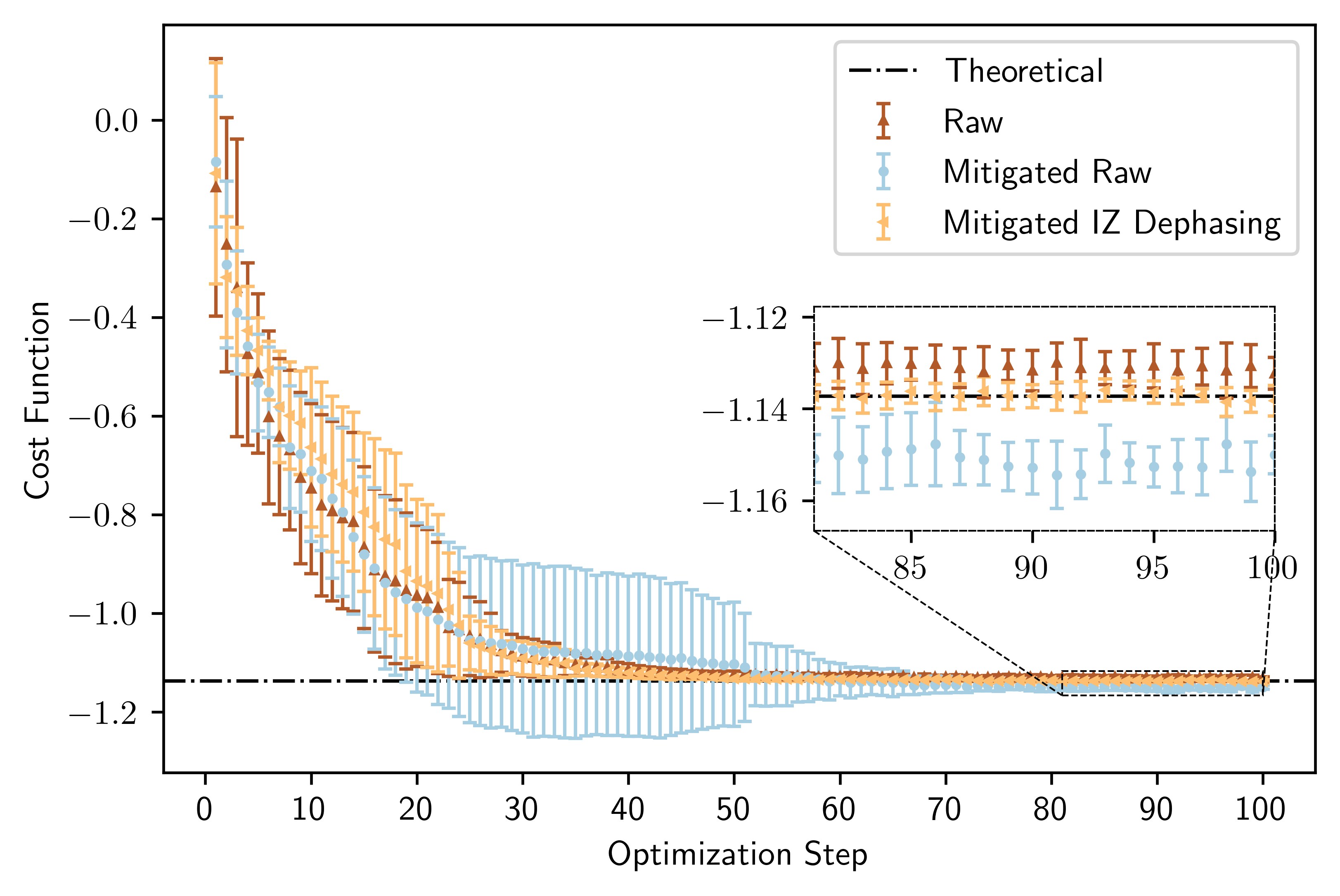}
  \caption{\raggedright
          The ground state energy of hydrogen molecule measured on Baidu Quantum Platform.
          Experimental data are reported as the average value over $10$ independent estimations with standard error.
          \textbf{Theoretical} (black dashed line) gives the theoretical ground energy $- 1.137$.
          \textbf{Raw} (brown upper triangle) and \textbf{Mitigated Raw} (blue circle) values
          are collected without or with the least square error mitigation.
          \textbf{Mitigated IZ dephasing} (yellow left triangle) values are 
          collected by first eliminating quantum noise 
          with IZ dephasing and then performing least square error mitigation.}
  \label{fig:VQE}
\end{figure}

The experimental results are summarized in Figure~\ref{fig:VQE}. 
We need to point out that we did not use the XY and Pauli twirling methods because they achieve the 
same performance as the IZ dephasing method as revealed in previous applications.
The \textbf{Mitigated IZ dephasing} data shows that quantum noise elimination combined with error mitigation 
can greatly improve the effectiveness of VQE algorithms, yields an accurate estimation of the ground stat energy.
On the other hand, we prove in Appendix~\ref{appx:vqe-configuration} that the \textbf{Raw} data actually estimates
the ground state energy of a transformed Hamiltonian whose theoretical value should be $- 1.135$.
The data matches this theoretical value pretty well.
When we execute the error mitigation method in the presence of quantum noise,
we see from \textbf{Mitigated Raw} that we obtain an overestimation ($- 1.152$) of the ground state energy.
This indicates that we should be cautious when performing measurement error mitigation methods:
if the measurement contains quantum noise, mitigation would be harmful instead of useful.

We find from the slow convergence rate of the raw cost function that VQE is immensely sensitive to measurement noise.
In the absence of noise elimination, the noisy cost function deviates significantly from its noiseless
counterpart, which can greatly limit the effectiveness of VQE algorithms, even for a small number of qubits.
We also find that the eliminated and mitigated cost values converge much faster and are slightly 
more accurate than raw ones. 

\section{Conclusions}\label{sec:conclusions}

The main contribution of this work is a two-stage procedure
that systematically addresses quantum noise inherent in NISQ measurement devices.
The procedure is incredibly intuitive: we first detect and then eliminate quantum noise if there is any.
In the first stage, we prepared maximally coherent states $\ket{\Phi_\theta}$ with relative phase $\theta$
and maximally mixed states as inputs to the measurement device and fitted the difference between two measurement statistics to the Fourier series.
The fitting coefficients quantitatively benchmark the quantum noise of the measurement device.
In the second stage, we executed randomly sampled Pauli gates before the measurement device. We conditionally flipped the outcomes
so that the resulting effective measurement contains only classical noise.
We demonstrated the procedure's practicability numerically on Baidu Quantum Platform,
via two paradigmatic quantum applications.
Remarkably, these results revealed that quantum noise in the measurement
devices under investigation are significantly suppressed,
and the computation accuracy of the applications is substantially improved.
This two-stage procedure complements existing measurement error mitigation techniques,
and we believe that they together form a standard toolbox for manipulating measurement errors in near-term quantum devices.

We have seen the devastating impact of quantum noise when mitigating measurement noises. 
This motivates us to study quantum noise from a resource theoretic perspective,
possibly termed a \emph{resource theory of decoherence}, which may
revolutionize how we understand and manipulate noises inherent in quantum measurement devices.
Also, it is worth discovering if there are more efficient ways to eliminate quantum noise other than the methods inspired by Pauli twirling.

\textbf{\textit{Acknowledgements.}} S. T. and C. Z. contributed equally to this work.
This work was done when S. T. and C. Z. were research interns at Baidu Research.
We would like to thank Runyao Duan for helpful discussions.


%


\begin{appendices}
\onecolumngrid

\section{Alternative definition of quantum noise witness}\label{appx:quantum noise witness}

\newtheorem{innercustomgeneric}{\customgenericname}
\providecommand{\customgenericname}{}
\newcommand{\newcustomtheorem}[2]{%
  \newenvironment{#1}[1]
  {%
   \renewcommand\customgenericname{#2}%
   \renewcommand\theinnercustomgeneric{##1}%
   \innercustomgeneric
  }
  {\endinnercustomgeneric}
}

\newcustomtheorem{customdef}{Definition}

Let's consider the following alternative definition of quantum noise witness,
where the separation hyperplane is determined by the expectation values less than or equal to $0$.
\begin{customdef}{2'}
Let $W$ be a Hermitian operator in $\cH_n$. $W$ is called a quantum noise witness, if
\begin{enumerate}
    \item for arbitrary classical POVM $\{ E^c_{\bm{x}}\}_{\bm{x}}$,
            it holds for arbitrary $\bm{x}$ that $\tr[W E^c_{\bm{x}}] \leq 0$;
    \item there exists at least
            one quantum POVM $\{ E^q_{\bm{x}}\}_{\bm{x}}$
            such that there exists some $\bm{x}$ for which $\tr[W E^q_{\bm{x}}] > 0$.
\end{enumerate}
Thus, if one measures $\tr[W E_{\bm{x}}]<0$ for some $ E_{\bm{x}}$,
one knows for sure that this POVM element, and the corresponding POVM, contains quantum noise.
\end{customdef}

Let $W$ be a quantum noise witness defined w.r.t. \textbf{Definition 2'}
and let $\cS$ be the set of quantum POVM elements that can be detected by $W$, i.e.,
\begin{align}
    \forall  E_{\bm{x}}^q \in \cS, \tr[W E_{\bm{x}}^q] > 0.
\end{align}
We will construct a quantum noise witness $\widetilde{W}$ from $W$ satisfying the following two conditions:
1) for arbitrary classical POVM element $ E^c_{\bm{x}}$, $\tr[\widetilde{W} E^c_{\bm{x}}] = 0$; and
2) $\forall  E_{\bm{x}}^q \in \cS$, $\tr[\widetilde{W} E_{\bm{x}}^q]\neq0$.
That is to say, given arbitrary witness defined w.r.t. \textbf{Definition 2'},
we can always construct a new witness which satisfies \textbf{Definition 2} in the main text
and the quantum POVM elements that can be detected by the former can also be detected
by the latter. Since a witness defined w.r.t. \textbf{Definition 2} naturally satisfies \textbf{Definition 2'},
we conclude that these two definitions are equivalent.

The construction is as follows. Define the following operator
\begin{align}
    \widetilde{W} := W - D_W,
\end{align}
where $D_W$ is the diagonal part of $W$. Note that $\widetilde{W}$ is Hermitian whenever $W$ is.
By definition it is easy to see that $\widetilde{W}$ only has non-zero off-diagonal elements.
For an arbitrary classical POVM element $ E_{\bm{x}}^c$, it holds that
\begin{align}
  \tr[\widetilde{W} E_{\bm{x}}^c]
= \tr[W E_{\bm{x}}^c] - \tr[D_W E_{\bm{x}}^c] = 0,
\end{align}
since $ E_{\bm{x}}^c$ only has non-zero diagonal elements.
On the other hand, for arbitrary quantum POVM element $ E_{\bm{x}}^q\in\cS$, it holds that
\begin{align}
  \tr[\widetilde{W} E_x^q]
= \tr[W E_x^q] - \tr[D_W E_x^q]
= \tr[W E_x^q] - \tr[WD_{ E_x^q}],
\end{align}
where $D_{ E_x^q}$ is the diagonal part of $ E_x^q$.
Notice that $\tr[W E_x^q]>0$ since $ E_{\bm{x}}^q\in\cS$
and $\tr[WD_{ E_x^q}]\leq0$ since $D_{ E_x^q}$ is a classical POVM
and $W$ is defined w.r.t. \textbf{Definition 2'}.
Therefore, $\tr[\widetilde{W} E_x^q]\neq 0$.

\section{Proof of Proposition~\ref{prop:completeness}}\label{appx:completeness}

Assume the spectral decomposition $W_{\psi} = \sum_{i}\lambda_i\proj{i}$, where
$W_\psi$ is a quantum noise witness defined in~\eqref{eq:psi-induced-witness-2}
and $\{\ket{i}\}_i$ forms an orthonormal basis.
For arbitrary probe state $\psi$, the density matrix can be written as $\rho = \proj{\psi}$. 
Assume $\ket{\psi} = \sum_{i} \alpha_i \ket{i}$, and

\begin{align}
    W_{\psi} &= \frac{\1}{2^n} - \proj{\psi} = \frac{\1}{2^n} - \sum_{i, j}\alpha_i\alpha_j\ketbra{i}{j}.
\end{align}
Therefore,

\begin{align}
    \tr[W_{\psi}  E_{\bm{x}}^q] &= \tr[\frac{\1}{2^n}D_{ E_{\bm{x}}^q}]-\sum_{i, j}\alpha_i\alpha_j\bra{j} E_{\bm{x}}^q\ket{i}\\
    &=\sum_{i}\left(\frac{1}{2^n}-\vert{\alpha_i}\vert^{2}\right)\bra{i} E_{\bm{x}}^q\ket{i} - \sum_{i\neq j}\alpha_i\alpha_j\bra{j} E_{\bm{x}}^q\ket{i}.\label{appx:completeness proof}
\end{align}

As long as $\vert{\alpha_i}\vert^{2}=\frac{1}{2^n}$, Eq.~\eqref{appx:completeness proof} would give $- \sum_{i\neq j}\alpha_i\alpha_j\bra{j} E_{\bm{x}}^q\ket{i}$. When $\tr[W_{\psi}  E_{\bm{x}}^q]\neq 0$, we can always say that there exists quantum noise.

\section{Properties of quantum noise measure}\label{sec:properties of quantum noise measure}

In this section, we explore general properties, analytical solutions in the qubit case,
and the relations to the resource theory of quantum measurements
of the $\psi$-induced quantum noise measure $\mathscr{Q}_\psi$ in
Definition~\ref{def:psi induced quantum noise measure} of the main text.

\subsection{General properties}

First of all, we show that for arbitrary POVM element $ E_{\bm{x}}$, the quantity
$\mathscr{Q}_\psi( E_{\bm{x}})$ is normalized.
\begin{lemma}[Normalization]\label{lemma:measure-range}
Let $ E_{\bm{x}}$ be an arbitrary POVM element. It holds that
    $\mathscr{Q}_\psi( E_{\bm{x}})\in \left[0, 1\right]$.
\end{lemma}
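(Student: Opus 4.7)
The plan is to establish the two bounds on $\mathscr{Q}_\psi(\Pi_{\bm{x}}) = 2^n\lvert\tr[W_\psi\Pi_{\bm{x}}]\rvert$ separately. The lower bound $\mathscr{Q}_\psi(\Pi_{\bm{x}})\ge 0$ is immediate, since the expression is an absolute value scaled by the positive factor $2^n$, so no real work is needed there.

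For the upper bound, the first move is to interpret the witness expectation value probabilistically. Writing $W_\psi = \pi - \proj{\psi}$ with $\pi = \1/2^n$, one obtains
\begin{equation*}
\tr[W_\psi\Pi_{\bm{x}}] = \tr[\pi\Pi_{\bm{x}}] - \langle\psi\vert\Pi_{\bm{x}}\vert\psi\rangle = p(\bm{x}\vert\pi) - p(\bm{x}\vert\psi),
\end{equation*}
that is, the difference between the probability of outcome $\bm{x}$ when the POVM $\bm\Pi$ is performed on the maximally mixed state $\pi$ versus on the probe state $\psi$. Each probability sits in $[0,1]$, so the raw difference is already confined to $[-1,1]$; this gives the crude bound $\mathscr{Q}_\psi(\Pi_{\bm{x}})\le 2^n$ as a warm-up.

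To sharpen this, I would exploit the operator constraint $0\le\Pi_{\bm{x}}\le\1$ (which follows from $\Pi_{\bm{x}}\ge 0$ and $\sum_{\bm{y}}\Pi_{\bm{y}}=\1$) together with the spectrum of $W_\psi$. The latter has one eigenvalue $-(2^n-1)/2^n$ on $\ket\psi$ and $2^n-1$ eigenvalues $+1/2^n$ on $\psi^\perp$, so $\lVert W_\psi\rVert_\infty = (2^n-1)/2^n$. Decomposing $W_\psi = W_+ - W_-$ into its positive and negative parts, both $\tr[W_+\Pi_{\bm{x}}]$ and $\tr[W_-\Pi_{\bm{x}}]$ are non-negative when $\Pi_{\bm{x}}\ge 0$, and the constraint $\Pi_{\bm{x}}\le\1$ couples them, precluding simultaneous maximisation. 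Because $\Pi\mapsto\tr[W_\psi\Pi]$ is a real linear functional on the compact convex set $\{0\le\Pi\le\1\}$, by the Bauer maximum principle it attains its extrema at extreme points, i.e., on projections, which reduces the problem to checking a projection $P$.

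The hard part will be the last step: pinning down the extremal projections and extracting the exact constant. I expect the extremizers to be $\Pi=\proj\psi$ and $\Pi=\1-\proj\psi$, where the two spectral blocks of $W_\psi$ contribute in isolation; a short direct calculation on these two candidates, combined with the spectral bookkeeping above, should then close the upper bound and complete the proof.
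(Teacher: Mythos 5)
Your proposal stops exactly where the work begins: the entire upper bound is deferred to ``a short direct calculation on these two candidates \ldots{} should then close the upper bound,'' and that calculation is never performed. If you perform it, it does not close the bound --- it refutes it. The candidate $\Pi=\proj{\psi}$ is a legitimate POVM element (take $\Pi_{\bm{0}}=\proj{\psi}$ and $\Pi_{\bm{x}}=(\1-\proj{\psi})/(2^n-1)$ for $\bm{x}\neq\bm{0}$), and for it $\tr[W_\psi\proj{\psi}]=1/2^n-1$, hence $\mathscr{Q}_\psi(\proj{\psi})=2^n\lvert 1/2^n-1\rvert=2^n-1$. Your own spectral bookkeeping says the same thing: over the compact convex set $\{0\le\Pi\le\1\}$ the linear functional $\Pi\mapsto\tr[W_\psi\Pi]$ has maximum $\tr[W_+]=(2^n-1)/2^n$ (at $\Pi=\1-\proj{\psi}$) and minimum $-\tr[W_-]=-(2^n-1)/2^n$ (at $\Pi=\proj{\psi}$), so the sharp bound your method delivers is $\mathscr{Q}_\psi(\Pi_{\bm{x}})\le 2^n-1$, which coincides with the claimed bound of $1$ only when $n=1$. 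The route is therefore sound as a computation, but it cannot be patched to yield $\mathscr{Q}_\psi\le 1$, because that inequality is false for $n\ge 2$; what your argument actually establishes is the correct, tight normalization $\mathscr{Q}_\psi(\Pi_{\bm{x}})\in[0,2^n-1]$ together with an explicit extremizer.

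For comparison, the paper argues entrywise rather than spectrally: it uses positivity of the $2\times 2$ principal minors to get $\lvert\Pi_{\bm{x}}(\bm{y},\bm{z})\rvert\le\sqrt{\Pi_{\bm{x}}(\bm{y},\bm{y})\Pi_{\bm{x}}(\bm{z},\bm{z})}$ and then tries to bound $\sum_{\bm{y}\neq\bm{z}}\sqrt{\Pi_{\bm{x}}(\bm{y},\bm{y})\Pi_{\bm{x}}(\bm{z},\bm{z})}$ by $\sum_{\bm{y}}\Pi_{\bm{x}}(\bm{y},\bm{y})=1$. That chain relies on the step $\sum\sqrt{ab}\le\sum ab$, which goes the wrong way for $a,b\in[0,1]$; the best one can say is $\sum_{\bm{y}\neq\bm{z}}\sqrt{a_{\bm{y}}a_{\bm{z}}}=\bigl(\sum_{\bm{y}}\sqrt{a_{\bm{y}}}\bigr)^2-1\le 2^n-1$ by Cauchy--Schwarz, in agreement with your $2^n-1$. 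So neither your argument nor the paper's establishes the stated range $[0,1]$. If you want to salvage your write-up, drop the expectation that $\proj{\psi}$ and $\1-\proj{\psi}$ ``close the upper bound,'' carry the extremal computation through honestly, and state the result as $\mathscr{Q}_\psi(\Pi_{\bm{x}})\le 2^n-1$ with equality at $\Pi_{\bm{x}}=\proj{\psi}$.
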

\begin{proof}
First we show that for arbitrary positive semidefinite matrix $A$ it holds that
\begin{align}\label{eq:pWkMlMghnAjpZs}
    \vert A_{i,j} \vert \leq \sqrt{A_{i,i}A_{j,j}},
\end{align}
where $A_{i,j}$ is the element in the $i$-th row and $j$-th column.
For a positive semidefinite matrix $A$, all of its principal submatrices are positive semidefinite.
Furthermore, $\tr{A}$, $\det{A}$ as well as the principal minors of $A$ are all
nonnegative~\cite{horn2012matrix}. Consider the following $2\times 2$ principal minor
\begin{align}
    \wh{A} := \begin{bmatrix} A_{i,i} & A_{i,j}\\ A_{i,j}^\dagger & A_{jj} \end{bmatrix}.
\end{align}
It holds that
\begin{align}
    \det(\wh{A})
= A_{i,i}A_{j,j}-\vert A_{i,j}\vert^2\geq0,
\end{align}
which leads to~\eqref{eq:pWkMlMghnAjpZs}.

For arbitrary POVM element $ E_{\bm{x}}$, since it is positive semidefinite,
we obtain from Eq.~\eqref{eq:pWkMlMghnAjpZs} that
\begin{align}
    \vert  E_{\bm{x}}(\bm{y},\bm{z})\vert
\leq \sqrt{ E_{\bm{x}}(\bm{y},\bm{y}) E_{\bm{x}}(\bm{z},\bm{z})}. \label{povm-1}
\end{align}
Therefore,
\begin{align}
    \left\vert\sum_{\bm{y}\neq\bm{z}}  E_{\bm{x}}(\bm{y},\bm{z})\right\vert
\leq \sum_{\bm{y}\neq\bm{z}} \left\vert E_{\bm{x}}(\bm{y},\bm{z})\right\vert
\leq \sum_{\bm{y}\neq\bm{z}} \sqrt{ E_{\bm{x}}(\bm{y},\bm{y}) E_{\bm{x}}(\bm{z},\bm{z})}. \label{povm-2}
\end{align}
On the other hand, since $\sum_{\bm{x}} E_{\bm{x}}=I$ and $ E_{\bm{x}}\geq0$,
it follows that $0\leq E_{\bm{x}}(\bm{y},\bm{y})\leq1$ for all $\bm{y}$. Thus
\begin{align}
    \sqrt{ E_{\bm{x}}(\bm{y},\bm{y}) E_{\bm{x}}(\bm{z},\bm{z})}\leq \sum_{\bm{y}\neq\bm{z}}\sqrt{ E_{\bm{x}}(\bm{y},\bm{y}) E_{\bm{x}}(\bm{z},\bm{z})}\leq \sum_{\bm{y}\neq\bm{z}} E_{\bm{x}}(\bm{y},\bm{y}) E_{\bm{x}}(\bm{z},\bm{z})\leq\sum_{\bm{y}} E_{\bm{x}}(\bm{y},\bm{y})=1.
\end{align}

Finally, given
\begin{align}
    0\leq\vert e^{i(\theta_{\bm{z}} - \theta_{\bm{y}})}\vert\leq1,
\end{align}
it's obvious that
\begin{align}
    \mathscr{Q}_\psi( E_{\bm{x}})=\vert\sum_{\bm{y}\neq\bm{z}} e^{i(\theta_{\bm{z}} - \theta_{\bm{y}})} E_{\bm{x}}(\bm{y},\bm{z})\vert\leq1.
\end{align}
\end{proof}

\begin{lemma}[Additivity] \label{lemma:additivity}
Let $\bm{ E}^{(1)}=\{ E^{(1)}_{\bm{x}}\}_{\bm{x}\in\{0,1\}^n}$
and $\bm{ E}^{(2)}=\{ E^{(2)}_{\bm{x}}\}_{\bm{x}\in\{0,1\}^n}$
be two POVMs. For arbitrary $\bm{x}\in\{0,1\}^n$, it holds that
\begin{align}
  \mathscr{Q}_\psi( E_{\bm{x}}^{(1)}\otimes E_{\bm{x}}^{(2)})
\leq 2(\mathscr{Q}_\psi( E_{\bm{x}}^{(1)}) + \mathscr{Q}_\psi( E_{\bm{x}}^{(2)})). \label{eq:additivity}
\end{align}
Correspondingly,
\begin{align}
  \mathscr{Q}_\psi(\bm{ E}^{(1)}\otimes\bm{ E}^{(2)})
\leq 2(\mathscr{Q}_\psi(\bm{ E}^{(1)}) + \mathscr{Q}_\psi(\bm{ E}^{(2)})).
\end{align}
\end{lemma}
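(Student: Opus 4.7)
The plan is to expand $\mathscr{Q}_\psi(\Pi^{(1)}_{\bm{x}}\otimes\Pi^{(2)}_{\bm{x}})$ via the explicit off-diagonal sum in Eq.~(\ref{eq:difference}), with the probe state on $\cH_{n_1+n_2}$ taken to factorize as $\ket{\psi}=\ket{\psi^{(1)}}\otimes\ket{\psi^{(2)}}$, so that the joint phase on the basis state $\ket{\bm{y}\bm{y}'}$ reads $e^{i(\theta^{(1)}_{\bm{y}}+\theta^{(2)}_{\bm{y}'})}$. Because the matrix element $(\Pi^{(1)}_{\bm{x}}\otimes\Pi^{(2)}_{\bm{x}})(\bm{y}\bm{y}',\bm{z}\bm{z}')$ factorizes as $\Pi^{(1)}_{\bm{x}}(\bm{y},\bm{z})\Pi^{(2)}_{\bm{x}}(\bm{y}',\bm{z}')$, the off-diagonal condition $(\bm{y},\bm{y}')\neq(\bm{z},\bm{z}')$ partitions naturally into three disjoint cases: (A) $\bm{y}=\bm{z}$ and $\bm{y}'\neq\bm{z}'$; (B) $\bm{y}\neq\bm{z}$ and $\bm{y}'=\bm{z}'$; (C) $\bm{y}\neq\bm{z}$ and $\bm{y}'\neq\bm{z}'$. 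Bilinearity of phases and entries collapses each piece to a clean product. Writing $T_i$ for the off-diagonal sum whose modulus equals $\mathscr{Q}_\psi(\Pi^{(i)}_{\bm{x}})$ and $t_i:=\tr[\Pi^{(i)}_{\bm{x}}]$, the three partial sums become $S_A=t_1 T_2$, $S_B=T_1 t_2$, and $S_C=T_1 T_2$.

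After this decomposition the lemma reduces to a scalar inequality. Applying the triangle inequality gives $\mathscr{Q}_\psi(\Pi^{(1)}_{\bm{x}}\otimes\Pi^{(2)}_{\bm{x}}) \le t_1 |T_2| + t_2 |T_1| + |T_1||T_2|$, and by definition $|T_i|=\mathscr{Q}_\psi(\Pi^{(i)}_{\bm{x}})$. The cross piece $|T_1||T_2|$ is immediately tamed by Lemma~\ref{lemma:measure-range}: since both factors lie in $[0,1]$, their product is at most either factor, hence at most $\mathscr{Q}_\psi(\Pi^{(1)}_{\bm{x}})+\mathscr{Q}_\psi(\Pi^{(2)}_{\bm{x}})$. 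The averaged statement then follows from summing the pointwise bound over $\bm{x}$, dividing by $2^{n_1+n_2}$, and applying the triangle inequality to the resulting sum.

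The main obstacle is controlling the two mixed pieces $t_i|T_j|$, since the classical weight $t_i=\tr[\Pi^{(i)}_{\bm{x}}]$ can in principle be as large as $2^{n_i}$ for a generic POVM element, and a naive estimate $t_i|T_j|\le t_i$ is hopelessly weak. The resolution I would pursue is a refinement of the Cauchy--Schwarz-type step from the proof of Lemma~\ref{lemma:measure-range}: the inequality $|\Pi_{\bm{x}}(\bm{y},\bm{z})|\le\sqrt{\Pi_{\bm{x}}(\bm{y},\bm{y})\Pi_{\bm{x}}(\bm{z},\bm{z})}$ combined with $0\le\Pi_{\bm{x}}(\bm{y},\bm{y})\le 1$ couples the off-diagonal weight $|T_i|$ to the diagonal mass $t_i$. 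The aim is to extract a bound of the form $t_i|T_j|\le \mathscr{Q}_\psi(\Pi^{(i)}_{\bm{x}}) + \mathscr{Q}_\psi(\Pi^{(j)}_{\bm{x}})$, balancing large classical weight against correspondingly small allowable quantum off-diagonal content. Once this pointwise estimate is in place, the three contributions add up to $2(\mathscr{Q}_\psi(\Pi^{(1)}_{\bm{x}})+\mathscr{Q}_\psi(\Pi^{(2)}_{\bm{x}}))$, producing the factor of two on the right-hand side of~(\ref{eq:additivity}). If a tight uniform bound on $t_i$ turns out to be unavailable, a fallback is to appeal to the trivial bound $\mathscr{Q}_\psi\le 1$ from Lemma~\ref{lemma:measure-range} in the regime where the right-hand side already exceeds one, reducing the problem to the complementary regime where both $\mathscr{Q}_\psi(\Pi^{(i)}_{\bm{x}})$ are small and the required inequality can be verified by direct case analysis.
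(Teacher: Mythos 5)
Your decomposition is correct and is in fact the right generalization of the paper's computation: writing $T_i=\sum_{\bm{y}\neq\bm{z}}e^{i(\theta_{\bm{z}}-\theta_{\bm{y}})}\Pi^{(i)}_{\bm{x}}(\bm{y},\bm{z})$ and $t_i=\tr[\Pi^{(i)}_{\bm{x}}]$, one gets exactly $\mathscr{Q}_\psi(\Pi^{(1)}_{\bm{x}}\otimes\Pi^{(2)}_{\bm{x}})=\vert t_1T_2+t_2T_1+T_1T_2\vert$, which for single-qubit factors reproduces the paper's expression $2\vert(a_0+b_0)\Re(\gamma_1)+(a_1+b_1)\Re(\gamma_0)+2\Re(\gamma_0)\Re(\gamma_1)\vert$. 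The problem is that the step you yourself flag as ``the main obstacle'' is exactly where your proof stops: you never establish $t_i\vert T_j\vert\le\mathscr{Q}_\psi(\Pi^{(i)}_{\bm{x}})+\mathscr{Q}_\psi(\Pi^{(j)}_{\bm{x}})$, and no refinement of the Cauchy--Schwarz step can deliver it, because the inequality being proved is false for $n\ge 2$. Take $\Pi^{(1)}_{\bm{x}}=\1$ (a legitimate element of the POVM $\{\1,0,\dots,0\}$) and any $\Pi^{(2)}_{\bm{x}}$ with $T_2\neq0$. Then $T_1=0$, $t_1=2^{n}$, and your own formula gives $\mathscr{Q}_\psi(\Pi^{(1)}_{\bm{x}}\otimes\Pi^{(2)}_{\bm{x}})=2^{n}\vert T_2\vert=2^{n}\mathscr{Q}_\psi(\Pi^{(2)}_{\bm{x}})$, while the right-hand side of the lemma is $2\,\mathscr{Q}_\psi(\Pi^{(2)}_{\bm{x}})$; for $n\ge2$ the claimed bound is violated by a factor $2^{n-1}$. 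Your fallback via Lemma~\ref{lemma:measure-range} does not rescue this: the normalization $\mathscr{Q}_\psi\le1$ is itself derived there using $\sqrt{xy}\le xy$ and $\sum_{\bm{y}}\Pi_{\bm{x}}(\bm{y},\bm{y})=1$, neither of which holds for a general POVM element (whose trace can be as large as $2^n$), and in the counterexample both $\mathscr{Q}_\psi(\Pi^{(i)}_{\bm{x}})$ can be made arbitrarily small anyway.

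What your decomposition does buy --- and what the paper's proof only gestures at --- is a clean proof of the single-qubit case: positivity of $\Pi^{(i)}_0$ and of its complement $\1-\Pi^{(i)}_0$ gives $\vert\gamma_i\vert^2\le(1-a_i)(1-b_i)$, hence $\vert T_i\vert\le 2\vert\gamma_i\vert\le 2-t_i$ by AM--GM, and then
\begin{align}
t_1\vert T_2\vert+t_2\vert T_1\vert+\vert T_1\vert\vert T_2\vert
\le(2-\vert T_1\vert)\vert T_2\vert+(2-\vert T_2\vert)\vert T_1\vert+\vert T_1\vert\vert T_2\vert
= 2(\vert T_1\vert+\vert T_2\vert)-\vert T_1\vert\vert T_2\vert
\le 2(\vert T_1\vert+\vert T_2\vert).
\end{align}
This is precisely the trade-off between diagonal mass and off-diagonal content you were hoping to extract, but it is a strictly one-qubit phenomenon. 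The paper's own proof covers only this single-qubit tensor single-qubit case as well, and even there argues incorrectly (it sets $a_0=b_0=a_1=b_1=1$, which forces $\gamma_i=0$, and invokes a chain of the form $\vert A\vert+\vert B\vert\le\vert A\vert$) before asserting the general case ``without loss of generality.'' So the honest conclusion is that your approach, carried to completion, proves the lemma for single-qubit factors and simultaneously disproves it for $n\ge2$: the gap in your write-up is real, and it cannot be closed because the statement does not hold in the generality claimed.
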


\begin{proof}
Consider $2$ qubits case. Assume that for qubit $0$ and qubit $1$,

\begin{align}
     E_0^{(0)}= \left[\begin{matrix}
    a_0 & \gamma_0 \\
    \gamma_0^\ast & b_0
    \end{matrix}\right],\quad\quad
     E_1^{(0)}= \left[\begin{matrix}
    1-a_0 & -\gamma_0 \\
    -\gamma_0^\ast & 1-b_0
    \end{matrix}\right],
\end{align}

\begin{align}
     E_0^{(1)}= \left[\begin{matrix}
    a_1 & \gamma_1 \\
    \gamma_1^\ast & b_1
    \end{matrix}\right],\quad\quad
     E_1^{(1)}= \left[\begin{matrix}
    1-a_1 & -\gamma_1 \\
    -\gamma_1^\ast & 1-b_1
    \end{matrix}\right].
\end{align}

We can see that
\begin{align}
    \mathscr{Q}_\Phi( E^{(0)}_{0}) + \mathscr{Q}_\Phi( E^{(1)}_{0}) = 2\left\vert\Re(\gamma_0)\right\vert + 2\left\vert\Re(\gamma_1)\right\vert.
\end{align}

Then,

\begin{align}
     E_{00} =  E_0^{(0)} \otimes  E_1^{(1)} = \left[\begin{matrix}
    a_0 E_0^{(1)} & \gamma_0 E_0^{(1)} \\
    \gamma_0^\ast  E_0^{(1)} & b_0 E_0^{(1)}
    \end{matrix}\right]. \label{eq:quantum noise-2qubits}
\end{align}

Expand Eq.~\eqref{eq:quantum noise-2qubits} we get

\begin{align}
     E_{00} = \left[\begin{matrix}
    a_0a_1 & a_0\gamma_1 & \gamma_0a_1 & \gamma_0\gamma_1 \\
    a_0\gamma^\ast_1 & a_0b_1 & \gamma_0\gamma_1^\ast & \gamma_0b_1 \\
    \gamma_0^\ast a_1 & \gamma_0^\ast\gamma_1 & b_0a_1 & b_0\gamma_1 \\
    \gamma_0^\ast \gamma_1^\ast & \gamma_0^\ast b_1 & b_0\gamma_1^\ast & b_0b_1
    \end{matrix}\right].
\end{align}

Then
\begin{align}
    \mathscr{Q}_\Phi( E_{00}) =2 \left\vert (a_0+b_0)\Re(\gamma_1) + (a_1+b_1)\Re(\gamma_0) + 2\Re(\gamma_0)\Re(\gamma_1)\right\vert. \label{multi-case}
\end{align}

Let's consider the maximum case. Assume $a_0 = a_1 = 1$, $b_0 = b_1 = 1$. Eq.~\eqref{multi-case} turns into
\begin{align}
    \max(\mathscr{Q}_\Phi( E_{00}))&=4 \left\vert \Re(\gamma_1) + \Re(\gamma_0) + \Re(\gamma_0\gamma_1) + \Re(\gamma_0\gamma_1^\ast) \right\vert \\
    &=4 \left\vert \Re(\gamma_1) + \Re(\gamma_0) + 2\Re(\gamma_0)\Re(\gamma_1) \right\vert.
\end{align}

Through triangle inequality,
\begin{align}
    \left\vert \Re(\gamma_1) + \Re(\gamma_0) + 2\Re(\gamma_0)\Re(\gamma_1) \right\vert &\leq \left\vert \Re(\gamma_1) + \Re(\gamma_0)\vert + \vert 2\Re(\gamma_0)\Re(\gamma_1) \right\vert \\
    &\leq \left\vert \Re(\gamma_1) + \Re(\gamma_0) \right\vert \\
    &\leq \left\vert \Re(\gamma_1) \right\vert + \left\vert \Re(\gamma_0) \right\vert.
\end{align}

Therefore,
\begin{align}
    \max(\mathscr{Q}_\Phi( E_{00})) \leq 2(\mathscr{Q}_\Phi( E^{(0)}_{0}) + \mathscr{Q}_\Phi( E^{(1)}_{0})).
\end{align}

 Without loss of generality, we can conclude that

 \begin{align}
  \mathscr{Q}_\psi( E_{\bm{x}}^{(1)}\otimes E_{\bm{x}}^{(2)})
\leq 2(\mathscr{Q}_\psi( E_{\bm{x}}^{(1)}) + \mathscr{Q}_\psi( E_{\bm{x}}^{(2)})).
\end{align}
Correspondingly,
\begin{align}
  \mathscr{Q}_\psi(\bm{ E}^{(1)}\otimes\bm{ E}^{(2)})
\leq 2(\mathscr{Q}_\psi(\bm{ E}^{(1)}) + \mathscr{Q}_\psi(\bm{ E}^{(2)})).
\end{align}
\end{proof}

\begin{lemma}[Convexity] \label{lemma:convexity}
Let $\bm{ E}^{(1)}=\{ E^{(1)}_{\bm{x}}\}_{\bm{x}\in\{0,1\}^n}$
and $\bm{ E}^{(2)}=\{ E^{(2)}_{\bm{x}}\}_{\bm{x}\in\{0,1\}^n}$
be two POVMs. For arbitrary $p\in[0,1]$ and $\bm{x}\in\{0,1\}^n$,
it holds that
\begin{align}
     \mathscr{Q}_\psi(p E_{\bm{x}}^{(1)} + (1-p) E_{\bm{x}}^{(2)})
\leq p\mathscr{Q}_\psi( E_{\bm{x}}^{(1)}) + (1-p)\mathscr{Q}_\psi( E_{\bm{x}}^{(2)}).
\end{align}
Correspondingly,
\begin{align}
     \mathscr{Q}_\psi(p\bm{ E}^{(1)}+(1-p)\bm{ E}^{(2)})
\leq p\mathscr{Q}_\psi(\bm{ E}^{(1)}) + (1-p)\mathscr{Q}_\psi(\bm{ E}^{(2)}).
\end{align}
\end{lemma}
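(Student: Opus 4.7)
The proof is a direct consequence of two facts: the $\psi$-induced quantum noise measure is the absolute value of a linear functional of $\Pi_{\bm{x}}$, and the absolute value is a convex function on $\mathbb{R}$ (or $\mathbb{C}$). The plan is to unwind Definition~\ref{def:psi induced quantum noise measure}, push the convex combination through the trace by linearity, and close with the triangle inequality; then the statement about $\bm{\Pi}$ follows by averaging over $\bm{x}$.

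More concretely, first I would write $\mathscr{Q}_\psi(p\Pi^{(1)}_{\bm{x}}+(1-p)\Pi^{(2)}_{\bm{x}}) = 2^n|\tr[W_\psi(p\Pi^{(1)}_{\bm{x}}+(1-p)\Pi^{(2)}_{\bm{x}})]|$ using Eq.~\eqref{eq:psi-quantum noise measure}. Next, using linearity of the trace, the argument of the absolute value splits as
\begin{align*}
p\tr[W_\psi \Pi^{(1)}_{\bm{x}}] + (1-p)\tr[W_\psi \Pi^{(2)}_{\bm{x}}].
\end{align*}
Because $p, 1-p \geq 0$, the triangle inequality for $|\cdot|$ then gives
\begin{align*}
|p\tr[W_\psi \Pi^{(1)}_{\bm{x}}] + (1-p)\tr[W_\psi \Pi^{(2)}_{\bm{x}}]| \leq p|\tr[W_\psi \Pi^{(1)}_{\bm{x}}]| + (1-p)|\tr[W_\psi \Pi^{(2)}_{\bm{x}}]|,
\end{align*}
and multiplying through by $2^n$ reproduces the claimed inequality for POVM elements.

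For the average quantum noise measure, I would apply the elementwise bound just established to $p\Pi^{(1)}_{\bm{x}} + (1-p)\Pi^{(2)}_{\bm{x}}$ for each $\bm{x}$, sum over $\bm{x}\in\{0,1\}^n$, and divide by $2^n$ as in Eq.~\eqref{eq:psi-quantum noise measure 2}; linearity of the sum lets the factors of $p$ and $1-p$ come out, yielding $p\mathscr{Q}_\psi(\bm{\Pi}^{(1)}) + (1-p)\mathscr{Q}_\psi(\bm{\Pi}^{(2)})$.

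There is no real obstacle here: the lemma is simply the standard fact that the absolute value of a linear functional is convex, applied pointwise in $\bm{x}$ and then averaged. The only thing to be careful about is that $p$ and $1-p$ must be nonnegative so that the triangle inequality can be invoked with the weights factored out without any absolute-value signs on them, which is guaranteed by the hypothesis $p\in[0,1]$.
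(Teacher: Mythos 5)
Your proof is correct and follows essentially the same route as the paper's: expand via linearity of the trace, apply the triangle inequality with the nonnegative weights $p$ and $1-p$ factored out, and average over $\bm{x}$ for the second claim. No gaps.
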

\begin{proof}
For arbitrary $p\in[0,1]$, we have
\begin{align}
    \mathscr{Q}(p E_{\bm{x}}^{(1)} +(1-p) E_{\bm{x}}^{(2)} )&=2^n\left\vert \tr[W_{\psi}(p E_{\bm{x}}^{(1)} + (1-p)p E_{\bm{x}}^{(2)})] \right\vert \\
    &=2^n\left\vert \tr[p W_{\psi} E_{\bm{x}}^{(1)}] + \tr[(1-p) W_{\psi} E_{\bm{x}}^{(2)}] \right\vert \\
    &=2^n\left\vert p\tr[W_{\psi} E_{\bm{x}}^{(1)}] + (1-p)\tr[W_{\psi} E_{\bm{x}}^{(2)} ] \right\vert
\end{align}
and
\begin{align}
    p\mathscr{Q}( E_{\bm{x}}^{(1)}) + (1-p)\mathscr{Q}( E_{\bm{x}}^{(2)}) =
    2^n\left\vert p\tr[W_{\psi} E_{\bm{x}}^{(1)}] \right\vert + 2^n\left\vert (1-p)\tr[W_{\psi} E_{\bm{x}}^{(2)}] \right\vert.
\end{align}
From triangle inequality,
\begin{align}
    \left\vert p\tr[W_{\psi} E_{\bm{x}}^{(1)}] + (1-p)\tr[W_{\psi} E_{\bm{x}}^{(2)}]
    \right\vert \leq
    \left\vert p\tr[W_{\psi} E_{\bm{x}}^{(1)}] \right\vert + \left\vert (1-p)\tr[W_{\psi} E_{\bm{x}}^{(2)}] \right\vert.
\end{align}
Therefore,
\begin{align}
  \mathscr{Q}(p E_{\bm{x}}^{(1)} + (1-p) E_{\bm{x}}^{(2)})
\leq p\mathscr{Q}( E_{\bm{x}}^{(1)})
    + (1-p)\mathscr{Q}( E_{\bm{x}}^{(2)}).
\end{align}
\end{proof}

\subsection{Single and two-qubit case}\label{qubit case}

Let's begin with the single-qubit case. Consider a qubit POVM $\{ E_0, E_1\}$ with
\begin{align}
     E_0
= \begin{bmatrix}
    \alpha & a_1+i b_1 \\a_1-i b_1 & \beta\\
    \end{bmatrix}, \qquad
     E_1
= \begin{bmatrix}
    1-\alpha & - a_1 - i b_1 \\
    - a_1 + i b_1 & 1-\beta
  \end{bmatrix},
\end{align}
where $\alpha,\beta,a_1,b_1\in\mathbb{R}$. We have
\begin{align}
    \mathscr{Q}^\theta_{\Phi}( E_0) = 2\left\vert -a_1\cos(\theta) + b_1\sin(\theta)\right\vert, \qquad
    \mathscr{Q}^\theta_{\Phi}( E_1) = 2\left\vert a_1\cos(\theta) - b_1\sin(\theta)\right\vert.
\end{align}
It is obvious that we just need to consider $2$ special cases in order to extract
the absolute values of $a_1$ and $b_1$. Specifically,
\begin{align}
    \mathscr{Q}^{\theta=0}_{\Phi}( E_0)
&= \mathscr{Q}^{\theta=0}_{\Phi}( E_1) = 2\left\vert a_1\right\vert, \\
    \mathscr{Q}^{\theta=\pi/2}_{\Phi}( E_0)
&= \mathscr{Q}^{\theta=\pi/2}_{\Phi}( E_1) = 2\left\vert b_1\right\vert.
\end{align}

Now we consider the two-qubit case.
The density matrix of two-qubit $\Phi_\theta$~\eqref{eq:Phi-theta} has the form
\begin{align}
    \Phi_{\theta} =\frac{1}{4}\left[
    \begin{array}{cccc}
    1 & e^{-i\theta} & e^{-i\theta} & e^{-i2\theta} \\
    e^{i\theta} & 1 & 1 & e^{-i\theta} \\
    e^{i\theta} & 1 & 1 & e^{-i\theta} \\
    e^{i2\theta} & e^{i\theta} & e^{i\theta} & 1
    \end{array}
    \right].
\end{align}
Suppose we have an unknown POVM element $ E$ parameterized as
\begin{align}
     E=\left[
    \begin{array}{cccc}
    a_{11} & a_{12} & a_{13} & a_{14} \\
    a_{12}^\ast & a_{22} & a_{23} & a_{24} \\
    a_{13}^\ast & a_{23}^\ast & a_{33} & a_{34} \\
    a_{14}^\ast & a_{24}^\ast & a_{34}^\ast & a_{44}
    \end{array}
    \right].
\end{align}
After some tedious calculation we obtain
\begin{align}
    \mathscr{Q}_{\Phi}^{\theta}( E)
&= 2\left\vert-\cos{(\theta)}\Re({a_{12}+a_{13}+a_{24}+a_{34}}) +\sin{(\theta)}\Im{(a_{12}+a_{13}+a_{24}+a_{34}})\right. \notag\\
&\left.\quad\quad- \cos{(2\theta)}\Re{(a_{14})}+\sin{(2\theta)}\Im{(a_{14})}-\Re{(a_{23})}\right\vert \\
&= 2\left\vert -a_0 -a_1 \cos{(\theta)} -a_2 \cos{(2\theta)} + b_1 \sin{(\theta)} + b_2 \sin{(2\theta)}\right\vert ,
\end{align}
where
\begin{align}
    a_0 &= \Re{(a_{23})}, \\
    a_1 &= \Re({a_{12}+a_{13}+a_{24}+a_{34}}), \\
    a_2 &= \Re{(a_{14})}, \\
    b_1 &= \Im{(a_{12}+a_{13}+a_{24}+a_{34}}), \\
    b_2 &= \Im{(a_{14})}.
\end{align}

\subsection{Relation to quantum resource theory}

In~\cite{Baek_2020}, the authors formalized a resource theoretical framework
to quantify the coherence of quantum measurements
and introduced a coherence monotone of measurement in terms of
off-diagonal values of the POVM elements.
Specifically, Let $\bm{ E}=\{ E_{\bm{x}}\}_{\bm{x}}$ be a POVM,
the $\ell_\infty$-coherence monotone of $\bm{ E}$ is
defined as~\cite[Eq. (10)]{Baek_2020}
\begin{align}
    \mathcal{C}_{\ell_\infty}(\bm{ E})
:=  \sum_{\bm{x}} \sum_{\bm{y}<\bm{z}}
    \left\vert  E_{\bm{x}}(\bm{y},\bm{z})\right\vert.
    \label{eq:coherence-measure}
\end{align}

We can show that the quantum noise measure $\mathscr{Q}_\Phi$ bounds
the $\ell_\infty$-coherence monotone from below.
This result, together with Theorem 1 in~\cite{Baek_2020},
establishes an experimental-friendly lower bound on the
critical quantum measurement robustness measure
originally introduced in~\cite{Baek_2020}.

\begin{lemma}\label{lemma:lower-bound}
Let $\bm{ E}=\{ E_{\bm{x}}\}_{\bm{x}}$ be a POVM in $\cH_n$.
It holds that
\begin{align}
    \mathcal{C}_{\ell_\infty}(\bm{ E}) \geq 2^{n-1}\mathscr{Q}_\Phi^{\theta=0}(\bm{ E}).\label{eq:lower-bound}
\end{align}
\end{lemma}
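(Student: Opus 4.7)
The plan is to unwind both sides of the claimed inequality into explicit sums over off-diagonal matrix elements, at which point the bound reduces to two elementary inequalities applied termwise.

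First I would substitute the definitions. On the right-hand side, expand
\[
\mathscr{Q}_\Phi^{\theta=0}(\bm{\Pi}) = \frac{1}{2^n}\sum_{\bm{x}}\mathscr{Q}_\Phi^{\theta=0}(\Pi_{\bm{x}})
\]
using the closed-form expression in Eq.~\eqref{eq:Phi-quantum noise}, namely
$\mathscr{Q}_\Phi^{\theta=0}(\Pi_{\bm{x}}) = 2\bigl|\sum_{\bm{y}<\bm{z}}\Re[\Pi_{\bm{x}}(\bm{y},\bm{z})]\bigr|$. The factor $2^{n-1}$ then cancels against the $2$ and the $1/2^n$, so the inequality~\eqref{eq:lower-bound} is equivalent to
\[
\sum_{\bm{x}}\sum_{\bm{y}<\bm{z}}\bigl|\Pi_{\bm{x}}(\bm{y},\bm{z})\bigr|
\;\geq\;
\sum_{\bm{x}}\left|\sum_{\bm{y}<\bm{z}}\Re[\Pi_{\bm{x}}(\bm{y},\bm{z})]\right|.
\]

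Next I would prove the inequality pointwise in $\bm{x}$. For each fixed $\bm{x}$, apply the triangle inequality to pull the absolute value inside the sum over $\bm{y}<\bm{z}$, and then use the elementary bound $|\Re(c)|\leq|c|$ valid for every complex number $c$. This gives
\[
\left|\sum_{\bm{y}<\bm{z}}\Re[\Pi_{\bm{x}}(\bm{y},\bm{z})]\right|
\;\leq\;
\sum_{\bm{y}<\bm{z}}\bigl|\Re[\Pi_{\bm{x}}(\bm{y},\bm{z})]\bigr|
\;\leq\;
\sum_{\bm{y}<\bm{z}}\bigl|\Pi_{\bm{x}}(\bm{y},\bm{z})\bigr|.
\]
Summing this over $\bm{x}\in\{0,1\}^n$ yields exactly the reformulated inequality, which in turn is equivalent to~\eqref{eq:lower-bound}.

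There is essentially no obstacle here: the content of the statement is just that the $\theta=0$ witness only captures the real parts of the off-diagonal entries (and suffers from possible cancellations across the sum), whereas $\mathcal{C}_{\ell_\infty}$ measures the full magnitudes without cancellation. The only bookkeeping to be careful with is the normalization factor $1/2^n$ hidden in the definition of $\mathscr{Q}_\Phi^{\theta=0}(\bm{\Pi})$, which must be tracked so that the $2^{n-1}$ in~\eqref{eq:lower-bound} matches the constant produced by unwinding $\mathscr{Q}_\Phi^{\theta=0}(\Pi_{\bm{x}})=2|\cdots|$.
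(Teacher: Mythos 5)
Your proof is correct and follows essentially the same route as the paper's: expand $\mathscr{Q}_\Phi^{\theta=0}(\bm{\Pi})$ via Eq.~\eqref{eq:Phi-quantum noise}, cancel the constants, and apply the triangle inequality termwise. If anything, your explicit use of $|\Re(c)|\leq|c|$ is slightly more careful than the paper's version, which silently drops the real part inside the absolute value.
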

\begin{proof}
By definition~\eqref{eq:psi-quantum noise measure 2}, we have
\begin{align}
    \mathscr{Q}_\Phi^{\theta=0}(\bm{ E})
&= \frac{1}{2^n}\sum_{\bm{x}}\mathscr{Q}_\Phi^{\theta=0}( E_{\bm{x}}) \\
&= \frac{1}{2^{n-1}} \sum_{\bm{x}} \left\vert \sum_{\bm{y}<\bm{z}} E_{\bm{x}}(\bm{y}, \bm{z}) \right\vert \\
&\leq \frac{1}{2^{n-1}} \sum_{\bm{x}} \sum_{\bm{y}<\bm{z}}\left\vert  E_{\bm{x}}(\bm{y}, \bm{z}) \right\vert \\
&= \frac{1}{2^{n-1}}\mathcal{C}_{\ell_\infty}(\bm{ E}),
\end{align}
leading to~\eqref{eq:lower-bound}.
\end{proof}

\section{Proof of Theorem~\ref{thm:Phi-Psi-quantum noise}}\label{appx:Phi-Psi-quantum noise}

Notice that the prob state $\Phi_\theta$ defined in~\eqref{eq:Phi-theta} of the main text
can be equivalently expressed as
\begin{align}
    \ket{\Phi_\theta} = \frac{1}{\sqrt{2^n}} \sum_{\bm{y}}e^{i\theta \left\vert \bm{y}\right\vert}\ket{\bm{y}},
\end{align}
where $\vert\bm{y}\vert$ represents the Hamming weight of the binary string $\bm{y}$.
Thus
\begin{align}
    \Phi_\theta
\equiv \proj{\Phi_\theta}
= \frac{1}{2^n}\sum_{\bm{y, z}}e^{i\theta (\left\vert \bm{y}\right\vert-\left\vert \bm{z}\right\vert)}
        \ketbra{\bm{y}}{\bm{z}}.
\end{align}
According to Eq.~\eqref{eq:difference}, it holds that
\begin{align}
    \mathscr{Q}_{\Phi}^{\theta}( E_{\bm{x}})
&= 2^n\left\vert\tr[W_{\Phi} E_{\bm{x}}] \right\vert \\
    &= 2^n\left\vert\tr\left[\left(\frac{1}{2^n}\1 - \Phi_\theta\right) E_{\bm{x}}\right]\right\vert \\
    &= \left\vert\tr\left[\left(\1 - \sum_{\bm{y, z}}e^{i\theta(\left\vert \bm{y}\right\vert-\left\vert \bm{z}\right\vert)}
                \ketbra{\bm{y}}{\bm{z}}\right) E_{\bm{x}}\right]\right\vert \\
    &= \left\vert\sum_{\bm{y}\neq\bm{z}} \tr[( - e^{i\theta (\left\vert \bm{y}\right\vert-\left\vert \bm{z}\right\vert)}
            \ketbra{\bm{y}}{\bm{z}}) E_{\bm{x}}] \right\vert \\
    &= \left\vert -\sum_{\bm{y}\neq\bm{z}} e^{i\theta (\left\vert \bm{y}\right\vert-\left\vert \bm{z}\right\vert)}
             E_{\bm{x}}(\bm{y},\bm{z}) \right\vert \label{step-2}\\
    &= 2\left\vert-\sum_{\bm{y< z}}
        \Re\left[ E_{\bm{x}}(\bm{y},\bm{z})\right]
        \cos\left[\left(\left\vert\bm{y}\right\vert-\left\vert\bm{z}\right\vert \right)\theta\right]
      + \Im\left[  E_{\bm{x}}(\bm{y},\bm{z})\right]
        \sin\left[\left(\left\vert\bm{y}\right\vert-\left\vert\bm{z}\right\vert \right)\theta\right]
        \right\vert,\label{step-1}
\end{align}
where Eq.~\eqref{step-1} follows from the conjugated property of the off-diagonal values of a POVM element.

It is worth pointing out that $\bm{y}<\bm{z}$ does not
implies $\left\vert \bm{y} \right\vert < \left\vert \bm{z}\right\vert$ in general.
For example, when $\bm{y} = 011$ and $\bm{z} = 100$,
we can see that $\bm{y}<\bm{z}$ and $\left\vert \bm{y}\right\vert = 2 > \left\vert \bm{z}\right\vert = 1$.


\section{Proof of Proposition~\ref{prop:from-PTM-to-POVM}}\label{appx:prop:from-PTM-to-POVM}

First of all, we introduce some notations concerning the Pauli operators. 
The Pauli group $\mathsf{P}^n = \left\{I,X,Y,Z\right\}^{\otimes n}$ is Abelian and isomorphic to $\mathsf{Z}^{2n}_2$, 
where $\bm{a}\in\mathsf{Z}^{2n}_2$ is a $2n$ length binary string. 
Since $P_{\bm{a}}\in\mathsf{P}^n$ can be represented as 
\begin{align}
    P_{\bm{a}} = P_{(\bm{a}_x,\bm{a}_z)} = i^{\bm{a}_x\cdot\bm{a}_z}X[\bm{a}_x]Z[\bm{a}_z], 
\end{align}
where $\bm{a}_x,\bm{a}_z\in\mathsf{Z}^n_2$ and $A[\bm{a}] := \bigotimes_{\bm{a}_i\in\bm{a}}A^{\bm{a}_i}$. 
Given two Pauli operators $P_{\bm{a}}$ and $P_{\bm{b}}$, we have $P_{\bm{a}}P_{\bm{b}}=(-1)^{\langle\bm{a},\bm{b}\rangle}P_{\bm{b}}P_{\bm{a}}$ where 
\begin{align}
    \langle\bm{a},\bm{b}\rangle = \bm{a}_x\cdot\bm{b}_z+\bm{a}_z\cdot\bm{b}_x. 
\end{align}
In the following proof, we use rescaled Pauli operators $P_{\bm{a}}\to P_{\bm{a}}/\sqrt{2^n}$
so that the basis is properly normalized. 

Now let's prove Proposition~\ref{prop:from-PTM-to-POVM}. 
Notice that the PTM $[\cM]$ of a measurement channel $\cM$ satisfies
\begin{align}\label{eq:WEmMHi}
[\cM]_{\bm{i}\bm{j}} &= \tr[P_{\bm{i}}\cM(P_{\bm{j}})]
= \sum_{\bm{x}} \tr[ E_{\bm{x}}P_{\bm{j}}] \langle\bm{x}|P_{\bm{i}}|\bm{x}\rangle
= \begin{cases}
    \sum_{\bm{x}} \tr[ E_{\bm{x}}P_{\bm{j}}] \langle\bm{x}|P_{\bm{i}}|\bm{x}\rangle, &P_{\bm{i}}\in\{I,Z\}^{\otimes n}\\
    0, &P_{\bm{i}}\not\in\{I,Z\}^{\otimes n}
\end{cases}.
\end{align}
From this equation, we can see that an element of $[\cM]$ is non-zero if and only if
its row index corresponds to Pauli operators in the subset $\{I, Z\}^{\otimes n}$. 

We express Eq.~\eqref{eq:WEmMHi} in matrix form as
\begin{align}\label{eq:WEmMHi1}
    [\cM] = [\mathsf{P}][ E],
\end{align}
where $[\mathsf{P}]_{\bm{i}\bm{x}} := \langle\bm{x}|P_{\bm{i}}|\bm{x}\rangle$
and $[ E]_{\bm{x}\bm{j}} := \tr[ E_{\bm{x}}P_{\bm{j}}]$. Notice that
\begin{align}
    [\mathsf{P}]^\dagger[\mathsf{P}]
&=  \left(\sum_{\bm{j},\bm{y}}\langle\bm{y}|P_{\bm{j}}|\bm{y}\rangle\ketbra{\bm{y}}{\bm{j}}\right)\left(\sum_{\bm{i},\bm{x}}\langle\bm{x}|P_{\bm{i}}|\bm{x}\rangle\ketbra{\bm{i}}{\bm{x}}\right)\\ 
&=  \sum_{\bm{i},\bm{x},\bm{y}}\langle\bm{x}|P_{\bm{i}}|\bm{x}\rangle\langle\bm{y}|P_{\bm{i}}|\bm{y}\rangle\ketbra{\bm{y}}{\bm{x}}\\
&=  \sum_{\bm{x},\bm{y}}\ketbra{\bm{y}}{\bm{x}}\sum_{\bm{i}}\langle\bm{x}|P_{\bm{i}}|\bm{x}\rangle\langle\bm{y}|P_{\bm{i}}|\bm{y}\rangle\\
&=  \sum_{\bm{x},\bm{y}}\ketbra{\bm{y}}{\bm{x}}\left(\sum_{\bm{i}}\langle\bm{x}|Z[\bm{i}_z]|\bm{x}\rangle\langle\bm{y}|Z[\bm{i}_z]|\bm{y}\rangle\right)\\
&=  \sum_{\bm{x},\bm{y}}\ketbra{\bm{y}}{\bm{x}}\left(\frac{1}{2^n}\sum_{\bm{i}}(-1)\left[\bm{i}_z\cdot\bm{x}\right](-1)\left[\bm{i}_z\cdot\bm{y}\right]\right) \\
&=  \sum_{\bm{x},\bm{y}}\ketbra{\bm{y}}{\bm{x}}\left(\frac{1}{2^n}\sum_{\bm{i}}(-1)\left[\bm{i}_z\cdot(\bm{x}+\bm{y})\right]\right)\\
&=  \sum_{\bm{x},\bm{y}}\delta_{\bm{x},\bm{y}}\ketbra{\bm{y}}{\bm{x}} \\
&=  \1,
\end{align}
we obtain from Eq.~\eqref{eq:WEmMHi1} that
\begin{align}
    [ E] = [\mathsf{P}]^\dagger [\mathsf{P}][ E] = [\mathsf{P}]^\dagger[\cM].
\end{align}
This gives that for arbitrary $\bm{x}\in\{0,1\}^n$, 
\begin{align}
 E_{\bm{x}} = \sum_{P_{\bm{j}}\in\mathsf{P}^n}[ E]_{\bm{x}\bm{j}}P_{\bm{j}}
= \sum_{P_{\bm{i}},P_{\bm{j}}\in\mathsf{P}^n}[\mathsf{P}]_{\bm{i}\bm{x}}[\cM]_{\bm{i}\bm{j}}P_{\bm{j}}.
\label{eq:expand Pi_x}
\end{align}

\section{Proof of Eq.~(\ref{eq:IZ dephasing from twirling})}\label{appx:iz-dephasing}

Denote the RHS. of Eq.~\eqref{eq:IZ dephasing from twirling} as
\begin{align}
    \wt{\cM}(\cdot) := \frac{1}{2^n}\sum_{P\in\{I,Z\}^{\otimes n}} P\cM(P(\cdot)P)P.
\end{align}
We need to show that for arbitrary $\rho$ it holds that
\begin{align}
    \cM^{\operatorname{IZ}}(\rho) = \wt{\cM}(\rho).
\end{align}
Consider the following chain of equalities: 
\begin{align}
    \wt{\cM}(\rho)
&=  \frac{1}{2^n}\sum_{P\in\{I,Z\}^{\otimes n}} P\cM(P\rho P)P \\
&=  \frac{1}{2^n}\sum_{P\in\{I,Z\}^{\otimes n}} P
    \left(\sum_{\bm{x}\in\Sigma}\tr\left[ E_{\bm{x}}P\rho P\right]\proj{\bm{x}}\right)P \label{eq:tWIrjrRbz1} \\
&=  \frac{1}{2^n}\sum_{P\in\{I,Z\}^{\otimes n}}\sum_{\bm{x}\in\Sigma}\tr\left[ E_{\bm{x}}P\rho P\right]
        P\proj{\bm{x}}P \\
&=  \frac{1}{2^n}\sum_{P\in\{I,Z\}^{\otimes n}}\sum_{\bm{x}\in\Sigma}\tr\left[ E_{\bm{x}}P\rho P\right]\proj{\bm{x}}
    \label{eq:tWIrjrRbz2} \\
&=  \cM^{\operatorname{IZ}}(\rho),\label{eq:DTrCxDJBhWJBuPCDUx0}
\end{align}
where Eq.~\eqref{eq:tWIrjrRbz1} follows from the definition of $\cM$ in~\eqref{eq:measurement-channel}
and Eq.~\eqref{eq:tWIrjrRbz2} follows from the fact that for arbitrary $P\in\{I,X\}$, 
$P\proj{0}P=\proj{0}$ and $P\proj{1}P=\proj{1}$, 
i.e., $P\in\{I,Z\}^{\otimes n}$ preserves the classical state $\proj{\bm{x}}$. 
We are done. 

\section{Proof of Theorem~\ref{thm: XY twirling}}\label{appx:xy-twirling}

The PTM matrix of the XY-twirled measurement channel $\cM^{\operatorname{XY}}$ has the form
\begin{align}
    [\cM^{\operatorname{XY}}]_{\bm{i}\bm{j}}
&=  \tr\left[P_{\bm{i}}\cM^{\operatorname{XY}}(P_{\bm{j}})\right] \\
&=  \frac{1}{2^n}\sum_{P_{\bm{k}}\in\{X,Y\}^{\otimes n}}\tr\left[P_{\bm{i}} P_{\bm{k}} \cM(P_{\bm{k}}P_{\bm{j}}P_{\bm{k}}) P_{\bm{k}}\right] \\
&=  \left(\frac{1}{2^n}\sum_{P_{\bm{k}}\in\{X,Y\}^{\otimes n}}
        (-1)^{\langle\bm{i}+\bm{j},\bm{k}\rangle}\right)[\cM]_{\bm{i}\bm{j}},\label{eq:DteX3}
\end{align}
Note that
\begin{align}
    \frac{1}{2^n}\sum_{P_{\bm{k}}\in\{X,Y\}^{\otimes n}}(-1)^{\langle\bm{i}+\bm{j},\bm{k}\rangle}
=  \frac{1}{2^n}\sum_{P_{\bm{k}}\in\{X,Y\}^{\otimes n}}(-1)^{(\bm{i}+\bm{j})_x\cdot\bm{k}_z+(\bm{i}+\bm{j})_z\cdot\bm{k}_x}
=  \frac{(-1)^{(\bm{i}+\bm{j})_z}}{2^n}\sum_{P_{\bm{k}}\in\{X,Y\}^{\otimes n}}(-1)^{(\bm{i}+\bm{j})_x\cdot\bm{k}_z}.
\end{align}
Since we have shown in~\eqref{eq:WEmMHi} that an element of $[\cM]$ is non-zero if and only if
its row index corresponds to Pauli operators in the subset $\{I,Z\}^{\otimes n}$, 
we can assume $P_{\bm{i}}\in \{I, Z\}^{\otimes n}$. 
Now let's consider two cases of $P_{\bm{j}}$. 
If $P_{\bm{j}}\in\{I, Z\}^{\otimes n}$, it holds that
\begin{align}\label{eq:DteX1}
    \frac{1}{2^n}\sum_{P_{\bm{k}}\in\{X,Y\}^{\otimes n}}(-1)^{(\bm{i}+\bm{j})_x\cdot\bm{k}_z} = 1.
\end{align}
If $P_{\bm{j}}\not\in\{I, Z\}^{\otimes n}$, it holds that
\begin{align}\label{eq:DteX2}
    \frac{1}{2^n}\sum_{P_{\bm{k}}\in\{X,Y\}^{\otimes n}}(-1)^{(\bm{i}+\bm{j})_x\cdot\bm{k}_z}
= \frac{1}{2^n} \sum_{P_k \in \{X,Y\}^{\otimes n}} (-1)^{\bm{j}_x \cdot \bm{k}_z}
= 0.
\end{align}
Substituting Eqs.~\eqref{eq:DteX1} and~\eqref{eq:DteX2} into~\eqref{eq:DteX3}, we obtain
\begin{align}
    [\cM^{\operatorname{XY}}]_{\bm{i}\bm{j}}
= \begin{cases}
    (-1)^{(\bm{i}+\bm{j})_z}[\cM]_{\bm{i}\bm{j}}, & P_{\bm{j}} \in \{I, Z\}^n, \\
    0 & P_{\bm{j}} \not\in \{I, Z\}^n.
\end{cases}
\end{align}

The POVM representation $\{ E_{\bm{x}}^{\operatorname{XY}}\}_{\bm{x}}$ of $\cM^{\operatorname{XY}}$
can be computed from its PTM matrix using Proposition~\ref{prop:from-PTM-to-POVM} as
\begin{align}\label{eq:DTrCxDJBhWJBuPCDUx1}
     E_{\bm{x}}^{\operatorname{XY}}
=  \sum_{P_{\bm{i}},P_{\bm{j}}\in\mathsf{P}^n}[\mathsf{P}]_{\bm{i}\bm{x}}[\cM^{\operatorname{XY}}]_{\bm{i}\bm{j}}P_{\bm{j}}
=  \sum_{P_{\bm{i}},P_{\bm{j}}\in\{I,Z\}^{\otimes n}}(-1)^{(\bm{i}+\bm{j})_z}[\mathsf{P}]_{\bm{i}\bm{x}}[\cM]_{\bm{i}\bm{j}}P_
{\bm{j}}.
\end{align}
For arbitrary $\bm{y}\neq\bm{z}$, it holds that
\begin{align}
    \bra{\bm{y}} E_{\bm{x}}^{\operatorname{XY}}\ket{\bm{z}}
= \sum_{P_{\bm{i}},P_{\bm{j}}\in\{I,Z\}^{\otimes n}}(-1)^{(\bm{i}+\bm{j})_z}[\mathsf{P}]_{\bm{i}\bm{x}}[\cM]_{\bm{i}\bm{j}}\bra{\bm{y}}P_{\bm{j}}\ket{\bm{z}}
= 0,
\end{align}
indicating that the POVM elements after XY twirling have zero non-diagonal elements.

\section{Proof of Proposition~\ref{prop: Pauli twirling}}\label{appx:pauli-twirling}

From Eq.~\eqref{eq:DteX3}, we know that
the PTM matrix of the Pauli-twirled measurement channel $\cM^{\operatorname{Pauli}}$ has the form
\begin{align}
    [\cM^{\operatorname{Pauli}}]_{\bm{i}\bm{j}}
=  \tr\left[P_{\bm{i}}\cM^{\operatorname{Pauli}}(P_{\bm{j}})\right]
=  \left(\frac{1}{2^n}\sum_{P_{\bm{k}}\in\{I,X,Y,Z\}^{\otimes n}}
    (-1)^{\langle\bm{i}+\bm{j},\bm{k}\rangle}\right)[\cM]_{\bm{i}\bm{j}}.
\end{align}
Following a similar argument as that in Appendix~\ref{appx:xy-twirling}, we conclude that
\begin{align}\label{eq:DTrCxDJBhWJBuPCDUx2}
    [\cM^{\operatorname{Pauli}}]_{\bm{i}\bm{j}}
= \begin{cases}
    [\cM]_{\bm{i}\bm{j}}, & P_{\bm{i}}=P_{\bm{j}} \in \{I, Z\}^n,\\
    0, &  \text{otherwise}.
\end{cases}
\end{align}
The POVM representation $\{ E_{\bm{x}}^{\operatorname{Pauli}}\}_{\bm{x}}$ of $\cM^{\operatorname{Pauli}}$
can be computed from its PTM matrix using Proposition~\ref{prop:from-PTM-to-POVM} as
\begin{align}
     E_{\bm{x}}^{\operatorname{Pauli}}
=  \sum_{P_{\bm{i}},P_{\bm{j}}\in\mathsf{P}^n}[\mathsf{P}]_{\bm{i}\bm{x}}[\cM^{\operatorname{Pauli}}]_{\bm{i}\bm{j}}P_{\bm{j}}
=  \sum_{P_{\bm{i}}=P_{\bm{j}}\in\{I,Z\}^{\otimes n}}[\mathsf{P}]_{\bm{i}\bm{x}}[\cM]_{\bm{i}\bm{j}}P_{\bm{j}}
=  \sum_{P_{\bm{i}}\in \{I,Z\}^{\otimes n}}[\mathsf{P}]_{\bm{i}\bm{x}}[\cM]_{\bm{i}\bm{i}}P_{\bm{i}}. \label{eq:povm after pauli twirling}
\end{align}
For arbitrary $\bm{y}\neq\bm{z}$, it holds that
\begin{align}
    \bra{\bm{y}} E_{\bm{x}}^{\operatorname{Pauli}}\ket{\bm{z}}
= \sum_{P_{\bm{i}}\in \{I,Z\}^{\otimes n}}[\mathsf{P}]_{\bm{i}\bm{x}}[\cM]_{\bm{i}\bm{i}}\bra{\bm{y}}P_{\bm{i}}\ket{\bm{z}}
= 0,
\end{align}
indicating that the POVM elements after Pauli twirling have zero non-diagonal elements.

\section{Proof of Proposition~\ref{prop:fidelity-preserving}}\label{appx:fidelity-preserving}

Recall that the measurement fidelity of $\cM$ (with respect to the computational basis measurement) is defined as
\begin{align}
    f(\cM) = \frac{1}{2^n}\sum_{\bm{x}\in\{0,1\}^n}\langle \bm{x}\vert E_{\bm{x}}\vert\bm{x}\rangle.
\end{align}
Using Proposition~\ref{prop:from-PTM-to-POVM}, we express $f(\cM)$ in terms of the PTM matrix $[\cM]$ as
\begin{align}\label{eq:ZqHnxQmjFalAdYAaQ1}
    f(\cM)
&=  \frac{1}{2^n}\sum_{P_{\bm{i}},P_{\bm{j}}\in\{I,Z\}^{\otimes n}} \left(\sum_{\bm{x}}
\langle\bm{x}|P_{\bm{i}}|\bm{x}\rangle\langle\bm{x}|P_{\bm{j}}|\bm{x}\rangle\right) [\cM]_{\bm{i}\bm{j}}.
\end{align}
Let's analyze the term within the bracket in depth:
\begin{align}
    \langle\bm{x}|P_{\bm{i}}|\bm{x}\rangle\langle\bm{x}|P_{\bm{j}}|\bm{x}\rangle
&=  \sum_{\bm{x}}\langle\bm{x}|Z[\bm{i}_z]|\bm{x}\rangle\langle\bm{x}|Z[\bm{j}_z]|\bm{x}\rangle \\
&=  \frac{1}{2^n}\sum_{\bm{x}}(-1)[(\bm{i}_z+\bm{j}_z)\cdot \bm{x}] \\
&= \frac{1}{2^n}\sum_{\bm{x}'} (-1)[\bm{x}'] \\
&= \begin{cases}
    0, & \bm{i}_z\neq\bm{j}_z,\\
    1, & \bm{i}_z=\bm{j}_z,
\end{cases}\label{eq:ZqHnxQmjFalAdYAaQ2}
\end{align}
where $\bm{x}' = (\bm{i}_z+\bm{j}_z)\cdot\bm{x}$.
Substituting Eq.~\eqref{eq:ZqHnxQmjFalAdYAaQ2} back to Eq.~\eqref{eq:ZqHnxQmjFalAdYAaQ1} yields
\begin{align}
    f(\cM) = \frac{1}{2^n}\sum_{P_{\bm{i}}\in\{I,Z\}^{\otimes n}}[\cM]_{\bm{i}\bm{i}},
\end{align}
indicating that $f(\cM)$ is determined only by the diagonal elements of its PTM matrix. 
As evident from Eqs.~\eqref{eq:DTrCxDJBhWJBuPCDUx0}, ~\eqref{eq:DTrCxDJBhWJBuPCDUx1} and~\eqref{eq:DTrCxDJBhWJBuPCDUx2}, 
the PTM matrices of the IZ dephased, XY twirled, and Pauli twirled measurement channels
have the same diagonal elements as that of the original measurement, 
we thus conclude that the effective measurements generated by these elimination methods
have the same measurement fidelity as that of the original measurement.

\section{Proof of Proposition~\ref{prop:regularize povm}}\label{appx:regularize povm}

Recall Eq.~\eqref{eq:povm after pauli twirling} that the PTM matrix of the Pauli-twirled measurement channel satisfies
\begin{align}
     E_{\bm{x}}^{\operatorname{Pauli}}
= \sum_{P_{\bm{i}}\in \{I,Z\}^{\otimes n}}[\mathsf{P}]_{\bm{i}\bm{x}}[\cM]_{\bm{i}\bm{i}}P_{\bm{i}}
= \sum_{s_z} (-1)[\bm{i}_z\cdot \bm{x}] [\cM_z]_{\bm{i}_z} Z[\bm{i}_z],
\end{align}
where $[\cM_z]_{\bm{i}_z} = [\cM]_{\bm{i}\bm{i}}$ and $\bm{i}_z$ is a
binary string. Note that we omit $\bm{i}_x$ since $\bm{i}_x = 0\cdots 0$ for any $P_{\bm{i}}\in\{I,Z\}^{\otimes n}$.

Let's analyze the diagonal elements of $ E_{\bm{x}}^{\operatorname{Pauli}}$.
For arbitrary $\bm{y}$, we have
\begin{align}
  \langle \bm{y} |  E_{\bm{x}}^{\operatorname{Pauli}} | {\bm{y}} \rangle
= \sum_{\bm{i}_z} (-1)[(\bm{x}+\bm{y})\cdot \bm{i}_z][\cM_z]_{\bm{i}_z}.
\end{align}
We can express the above equations (for all $\bm{y}$) in the matrix form
\begin{align}\label{eq:etNN}
    \operatorname{diag}( E_{\bm{x}}^{\operatorname{Pauli}}) = \mathsf{T}_{\bm{x}} \operatorname{diag}(\cM),
\end{align}
where $\operatorname{diag}(A)$ represents the diagonal column vector of a operator $A$ and
\begin{align}
    \mathsf{T}_{\bm{x}} :=
    \begin{bmatrix}
        1 & \cdots & (-1)[{\bm{x}}] \\
        \vdots & \ddots & \vdots \\
        1 & \cdots & (-1)[(d-1)+{\bm{x}}]
    \end{bmatrix},\qquad
    \left(\mathsf{T}_{{\bm{x}}}\right)_{\bm{i}\bm{j}} := (-1)[({\bm{x}}+\bm{i})\cdot \bm{j}].
\end{align}
One can show that rank of $\mathsf{T}_{\bm{x}}$ equals to $2^n$ and hence it is invertible.
From Eq.~\eqref{eq:etNN} we obtain the following relation
between arbitrary two POVM elements $ E_{\bm{x}}^{\operatorname{Pauli}}$
and $ E_{\bm{y}}^{\operatorname{Pauli}}$:
\begin{align}
    \operatorname{diag}( E_{\bm{y}}^{\operatorname{Pauli}})
= \mathsf{T}_{\bm{y}} \mathsf{T}_{\bm{x}}^{-1} \operatorname{diag}( E_{\bm{x}}^{\operatorname{Pauli}}),
\end{align}
showing that they share the same diagonal values and
the order of the values is completely specified by the indices $\bm{x}$ and $\bm{y}$.

\section{VQE configuration}\label{appx:vqe-configuration}

\subsection{Hamiltonian of a hydrogen molecule}

We obtain from the OpenFermion library the following $4$-qubit Hamiltonian describing a hydrogen molecule,
expressed in terms of Pauli strings:
\begin{align}\label{eq:hydrogen molecule}
    H_2 &= -\; 0.097066 I - 0.045303 X_0X_1Y_2Y_3 + 0.045303 X_0Y_1Y_2X_3 + 0.045303 Y_0X_1X_2Y_3\nonumber \\
&\quad - 0.045303 Y_0Y_1X_2X_3 + 0.171413 Z_0 + 0.168689 Z_0Z_1 + 0.120625 Z_0Z_2\nonumber \\
&\quad + 0.165928 Z_0Z_3 + 0.171413 Z_1 + 0.165928 Z_1Z_2 + 0.120625 Z_1Z_3\nonumber \\
&\quad - 0.223432 Z_2 + 0.174413 Z_2Z_3 - 0.223432 Z_3.
\end{align}
In the above notation, $X_i$ means Pauli operator in the $i$-th qubit
and the identity matrix is omitted for simplicity.
For example, the term $Z_0$ should be understood as $Z_0\otimes I_1\otimes I_2\otimes I_3$.

\subsection{Transformed Hamiltonian}

Here we show that in a VQE algorithm, if the measurement is a Ry measurement as 
conceived in Section~\ref{sec:detection-simulation}, VQE would estimate 
the ground state energy of a new Hamiltonian $\wt{H}$ transformed from the target Hamiltonian $H$.
The argument goes as follows.

In VQE, the target Hamiltonian $H$ is decomposed into a linear combination of Pauli operators
$H=\sum_i\alpha_iP_i$. We first estimate the expectation values of these Pauli operators and then combine these values.
In order to estimate the expectation value $\tr[P_i\rho]$ of a given Pauli operator $P_i$,
when $\rho$ is the quantum state produced by an ansatz,
we need to add a basis transform circuit to rotate the Pauli measurement
to the standard $Z$ measurement. More precisely, assume the 
spectral decomposition $P_i=\beta_{j\vert i}\phi_{j\vert i}$, where
$\{\vert\phi_{j\vert i}\rangle\}_j$ forms an orthonormal basis.
Let $Q_i$ be a unitary that transforms the basis $\{\vert\phi_{j\vert i}\rangle\}_j$ to the computational basis,
i.e., $\forall j, Q_i\vert\phi_{j\vert i}\rangle = \ket{j}$. We have
\begin{align}\label{eq:VQE-expectation}
  \tr[P_i \rho]
= \tr\left[\left(\sum_j\beta_{j\vert i}\phi_{j\vert i}\right)\rho\right]
= \tr\left[\left(\sum_j\beta_{j\vert i}Q_i^\dagger\proj{j}Q_i\right)\rho\right]
= \sum_j\beta_{j\vert i}\bra{j}Q_i\rho Q_i^\dagger\ket{j}.
\end{align}
Experimentally, we can add quantum gates that implement $Q_i^\dagger$ to the quantum state $\rho$ and
then execute the $Z$ measurement to estimate the expectation value.

However, in our noisy simulation, we add $R_y$ gates before the standard $Z$ measurement to mimic a noisy measurement.
In this case, Eq.~\eqref{eq:VQE-expectation} becomes
\begin{align}
  \tr[P_i \rho]
= \tr\left[\left(\sum_j\beta_{j\vert i}Q_i^\dagger{\color{red}R_y^\dagger\proj{j} R_y} Q_i\right)\rho\right]
= \tr\left[Q_i^\dagger R_y^\dagger\left(\sum_j\beta_{j\vert i}\proj{j}\right)R_y Q_i\rho\right].
\end{align}
That is, each Pauli operator $P_i$ is effectively transformed as follows
\begin{align}
    P_i \quad\mapsto\quad Q_i^\dagger R_y^\dagger\left(\sum_j\beta_{j\vert i}\proj{j}\right)R_y Q_i.
\end{align}
Correspondingly, the transformed Hamiltonian has the form
\begin{align}\label{eq:VQE-expectation-2}
\wh{H} := \sum_i\alpha_iQ_i^\dagger R_y^\dagger\left(\sum_j\beta_{j\vert i}\proj{j}\right)R_y Q_i.
\end{align}

For the target hydrogen molecule Hamiltonian $H_2$~\eqref{eq:hydrogen molecule},
the transformed Hamiltonian $\wh{H}_2$ can be analytically computed using~\eqref{eq:VQE-expectation-2}
and has a ground state energy $-1.135$.

\end{appendices}

\end{document}